\newif\iftechreport
\techreportfalse
\newif\ifappendix
\appendixfalse

\documentclass[sigplan,nonacm]{acmart}

\usepackage{marvosym}
\usepackage{mathpartir}
\usepackage{xcolor}
\usepackage{listings}
\usepackage{xspace}
\usepackage{hyperref}
\usepackage{flushend}
\usepackage[inline]{enumitem}
\usepackage{placeins}
\usepackage{stmaryrd}
\usepackage{subcaption}
\usepackage{adjustbox}
\usepackage{pgfplots}
\usepackage{layouts}
\pgfplotsset{width=6.2cm, compat=newest}
\usepackage{algorithm}
\usepackage[noend]{algpseudocode}
\usepackage[normalem]{ulem}
\usepackage{cancel}
\usepackage{wrapfig}

\usepackage[most]{tcolorbox}
\usepackage{cleveref}

\usepackage{thmtools}

\makeatletter
\@ifundefined{newcounteralias}{}{%
  \renewcommand\thmt@autorefsetup{%
    \@xa\def
      \csname\thmt@envname autorefname\@xa\endcsname
      \@xa{\thmt@thmname}%
  }%
}
\makeatother

\usepackage{tikz}
\usetikzlibrary{automata, arrows, positioning, calc}
\usetikzlibrary{arrows.meta}
\tikzset{node distance=4cm, 
  every state/.style={ 
    semithick,fill=gray!10},
  initial text={},     
  double distance=2pt, 
  every edge/.style={  
    draw,
    ->,>=stealth,     
    auto,semithick}}

\pgfdeclarelayer{background}
\pgfsetlayers{background,main}

\newtheorem{theorem}{Theorem}

\theoremstyle{definition}
\newtheorem{definition}{Definition}

\theoremstyle{remark}

\usepackage{expkv-cs}
\usepackage{xparse}

\newcommand\lprp[1]{\left(#1\right)}

\mathchardef\mhyphen="2D

\makeatletter

\newcommand{\checknextarg}{\@ifnextchar\bgroup{\gobblenextarg}{}}
\newcommand{\gobblenextarg}[1]{ \join #1\@ifnextchar\bgroup{\gobblenextarg}{}}

\def\bracketed@cmd@primes{}
\def\bracketed@cmd@subs{}
\def\bracketed@cmd@base{}
\def\bracketed@cmd@left{}
\def\bracketed@cmd@right{}

\def\bracketed@cmd@addsub#1{%
  \ifx\bracketed@cmd@subs\empty\else\g@addto@macro\bracketed@cmd@subs{,}\fi
  \g@addto@macro\bracketed@cmd@subs{#1}%
}
\def\bracketed@cmd@scan{%
  \@ifnextchar'{\bracketed@cmd@grabprime}{%
    \@ifnextchar_{\bracketed@cmd@grabsub}{%
      \bracketed@cmd@print
    }%
  }%
}
\def\bracketed@cmd@grabprime'{%
  \g@addto@macro\bracketed@cmd@primes{\prime}%
  \bracketed@cmd@scan
}
\def\bracketed@cmd@grabsub_#1{%
  \bracketed@cmd@addsub{#1}%
  \bracketed@cmd@scan
}
\def\bracketed@cmd@print{%
  \bracketed@cmd@left
  \bracketed@cmd@base
  \ifx\bracketed@cmd@primes\empty\else^{\bracketed@cmd@primes}\fi
  \ifx\bracketed@cmd@subs\empty\else_{\bracketed@cmd@subs}\fi
  \bracketed@cmd@right
}

\NewDocumentCommand{\NewCustomBracketedCommand}{mmmmm}{%
  \NewDocumentCommand{#1}{o}{%
    \def\bracketed@cmd@primes{}%
    \def\bracketed@cmd@left{#2}%
    \def\bracketed@cmd@right{#3}%
    \def\bracketed@cmd@base{#4}%
    \def\bracketed@cmd@subs{#5}%
    \IfNoValueF{##1}{\bracketed@cmd@addsub{##1}}%
    \bracketed@cmd@scan
  }%
}

\NewDocumentCommand{\NewBracketedCommand}{mmm}{%
  \NewCustomBracketedCommand{#1}{\lfloor}{\rceil}{#2}{#3}%
}

\NewDocumentCommand{\NewFullyBracketedCommand}{mmm}{%
  \NewCustomBracketedCommand{#1}{[}{]}{#2}{#3}%
}

\NewDocumentCommand{\NewUnbracketedCommand}{mmm}{
  \NewCustomBracketedCommand{#1}{}{}{#2}{#3}
}

\makeatother

\newcommand*{\ddefeq}{\mathrel{\vcenter{\baselineskip0.5ex \lineskiplimit0pt
                     \hbox{\scriptsize.}\hbox{\scriptsize.}}}%
                      \mathrel{\vcenter{\baselineskip0.5ex \lineskiplimit0pt
                     \hbox{\scriptsize.}\hbox{\scriptsize.}}}
                     =\ }

\newcommand{\rulename}[1]{{\ensuremath{\TirNameStyle{#1}}}}

\newcommand\public{\ensuremath{\bot}\xspace}
\newcommand\private{\ensuremath{\top}\xspace}
\newcommand\type[2]{\langle#1,#2\rangle}
\ekvcSplit\ty{t=\tau, l=l}{\type{#1}{#2}}
\newcommand\tc{\vdash}
\newcommand\itt{\code{Int_{32}}}
\newcommand\ione{\code{Int_1}}
\newcommand\ptr[1]{#1 *}
\newcommand\var[1]{#1}
\newcommand\fname[1]{#1}
\newcommand\bLabel[1]{\code{label}\ #1}
\newcommand\main{\code{main}}
\newcommand\argi{\alpha_i}
\newcommand\arglist{\overline{\alpha}}
\newcommand\philist{\overline{\phi}}
\newcommand\parami{\rho_i}
\newcommand\paramlist{\overline{\rho}}
\newcommand\nil{\code{null}}

\newcommand\Nop{\code{nop}}
\newcommand\Asgn[2]{#1 = #2}
\newcommand\Add[3]{#1 = \code{add} #2, #3}

\newcommand\Alloca[2]{#1 = \code{alloca} #2}
\newcommand\Malloc[2]{#1 = \code{malloc} #2}
\newcommand\Free[1]{\code{free} #1}
\newcommand\Load[3]{#1 = \code{load} #2, #3}
\newcommand\Store[2]{\code{store} #1, #2}
\newcommand\Seq[2]{#1 ; #2}
\newcommand\Br[1]{\code{br}\ #1}
\newcommand\BrCond[3]{\code{br}\ #1, #2, #3}

\newcommand\PhiNode[2]{#1 = \phi #2}
\newcommand\FCall[3]{#1 = \code{call}\ #2 (#3)}
\newcommand\FDef[3]{#1 (#2) \{#3\}}

\newcommand\FHead[3]{#1 (#2)}
\newcommand\Ret[1]{\code{ret}\ #1}
\newcommand\GEP[4]{#1 = \code{gep}\ #2, #3, #4}
\newcommand\Declassify[3]{#1 = \code{declassify}\ #2, #3}
\newcommand\tmin{\ensuremath{{\textsf{min}_\textsf{T}}}\xspace}
\newcommand\emin{\ensuremath{{\textsf{min}_\textsf{TCB}}}\xspace}
\newcommand\objboth{\ensuremath{{\textsf{min}_\textsf{T+TCB}}}\xspace}
\newcommand\objdata{\ensuremath{{\textsf{min}_\textsf{data}}}\xspace}

\newcommand\EncInit[2]{\code{}}
\newcommand\EncKill[1]{\code{}}
\newcommand\EncEnter[2]{\code{}}
\newcommand\EncExit[2]{\code{}}

\newcommand\Var{\ensuremath{\textit{Vars}}\xspace}
\newcommand\Fcn{\ensuremath{\textit{FNames}}\xspace}
\newcommand\Block{\ensuremath{\textit{BId}}\xspace}
\newcommand\func{\ensuremath{\textit{fn}}\xspace}

\newcommand\Vspace{V}
\NewBracketedCommand{\bVspace}{\Vspace}{}
\NewFullyBracketedCommand{\fbVspace}{\Vspace}{}
\newcommand\TVspace{\ensuremath{\mathcal{V}}}
\NewBracketedCommand{\bTVspace}{\TVspace}{}
\NewFullyBracketedCommand{\fbTVspace}{\TVspace}{}

\newcommand\Sspace{\Sigma}
\newcommand\Stspace{\sigma}
\NewBracketedCommand{\bStspace}{\Stspace}{}
\NewFullyBracketedCommand{\fbStspace}{\Stspace}{}
\newcommand\TSspace{\mathcal{E}}
\newcommand\TStspace{\sigma}
\NewBracketedCommand{\bTStspace}{\TStspace}{}
\NewFullyBracketedCommand{\fbTStspace}{\TStspace}{}
\newcommand\Hspace{H}
\newcommand\THspace{\ensuremath{\mathcal{H}}}
\newcommand\Fspace{\Fenv}
\newcommand\TFspace{\TFenv}
\newcommand\Xspace{\Xi}
\newcommand\TXspace{\Theta}

\NewDocumentCommand\Mspace{t'}{\IfBooleanTF{#1}{\Sspace' \cup \Hspace'}{\Sspace \cup \Hspace}}
\NewDocumentCommand\TMspace{t'}{\IfBooleanTF{#1}{\TSspace' \cup \THspace'}{\TSspace \cup \THspace}}

\NewUnbracketedCommand{\prev}{b}{p}
\NewUnbracketedCommand{\cur}{b}{c}
\NewUnbracketedCommand{\trg}{b}{t}
\newcommand\body{\textit{body}}

\NewBracketedCommand{\bloc}{m}{}
\NewBracketedCommand{\bconst}{c}{}
\NewBracketedCommand{\bdonst}{d}{} 
\NewBracketedCommand{\bkappa}{\kappa}{}
\NewBracketedCommand{\bbody}{\!\textit{\body}}{}
\NewBracketedCommand{\bbl}{\text{BlockLabel}}{}
\NewBracketedCommand{\blab}{b}{}
\NewBracketedCommand{\bprev}{b}{p}
\NewBracketedCommand{\bcur}{b}{c}

\NewFullyBracketedCommand{\fbloc}{m}{}
\NewFullyBracketedCommand{\fbconst}{c}{}
\NewFullyBracketedCommand{\fbkappa}{\kappa}{}
\NewFullyBracketedCommand{\fbbody}{\!\body}{}
\NewFullyBracketedCommand{\fbbl}{\text{BlockLabel}}{}
\NewFullyBracketedCommand{\fblab}{b}{}
\NewFullyBracketedCommand{\fbprev}{b}{p}
\NewFullyBracketedCommand{\fbcur}{b}{c}

\newcommand\configtwo[2]{\ensuremath{\langle #1, #2 \rangle}\xspace}
\newcommand\configthree[3]{\ensuremath{\langle #1, #2, #3 \rangle}\xspace}

\newcommand\configfive[5]{\ensuremath{\langle #1, #2, #3, #4, #5 \rangle}\xspace}

\newcommand\configseven[7]{\ensuremath{\langle #1, #2, #3, #4, #5, #6, #7 \rangle}\xspace}
\newcommand\configeight[8]{\ensuremath{\langle #1, #2, #3, #4, #5, #6, #7, #8 \rangle}\xspace}
\newcommand\confignine[9]{\ensuremath{\langle #1, #2, #3, #4, #5, #6, #7, #8, #9 \rangle}\xspace}

\ekvcSplit\scfgmem{
  s=\Sspace,
  h=\Hspace
} {\ensuremath{\langle #1, #2 \rangle}\xspace}

\ekvcSplit\tcfgmem{
  s=\TSspace,
  h=\THspace
} {\ensuremath{\langle #1, #2 \rangle}\xspace}

\ekvcSplit\scfgvx{
  v=\Vspace,
  x=\Xspace
} {\ensuremath{\langle #1, #2 \rangle}\xspace}

\ekvcSplit\bscfgvx{
  v=\bVspace,
  x=\Xspace,
} {\ensuremath{\langle #1, #2 \rangle}\xspace}

\ekvcSplit\tcfgvx{
  v=\TVspace,
  x=\TXspace
} {\ensuremath{\langle #1, #2 \rangle}\xspace}

\ekvcSplit\scfge{
	v=\Vspace,
	e=e
} {\ensuremath{\langle #1, #2 \rangle }\xspace}%

\ekvcSplit\bscfge{
	v=\bVspace,
	e=e
} {\ensuremath{\langle #1, #2 \rangle }\xspace}%

\ekvcSplit\tcfge{
	v=\TVspace,
	e=e
} {\ensuremath{\langle #1, #2 \rangle }\xspace}%

\ekvcSplit\btcfge{
	v=\bTVspace,
	e=e
} {\ensuremath{\langle #1, #2 \rangle }\xspace}%

\ekvcSplit\scfgi{
	h=\Hspace,
	s=\Sspace,
	v=\Vspace,
  prev=\prev,
	i=\Nop
} {\ensuremath{\langle #1, #2, #3, #4, #5 \rangle}\xspace}%

\ekvcSplit\bscfgi{
	h=\Hspace,
	s=\Sspace,
	v=\bVspace,
  prev=\bprev,
	i=\Nop
} {\ensuremath{\langle #1, #2, #3, #4, #5 \rangle}\xspace}%

\ekvcSplit\tcfgi{
	h=\THspace,
	s=\TSspace,
	v=\TVspace,
	e=\Elive,
	prev=\prev,
	mode=\mu,
	i=\Nop
} {\ensuremath{\langle #1, #2, #3, #4, #5, #6, #7 \rangle}\xspace}%

\ekvcSplit\btcfgi{
	h=\THspace,
	s=\TSspace,
	v=\bTVspace,
	e=\Elive,
	prev=\bprev,
	mode=\mu,
	i=\Nop
} {\ensuremath{\langle #1, #2, #3, #4, #5, #6, #7 \rangle}\xspace}%

\ekvcSplit\scfgb{
	h=\Hspace,
	s=\Sspace,
	v=\Vspace,
  x=\Xspace,
	prev=\prev,
	cur=\cur,
	body=\Nop
} {\ensuremath{\langle #1, #2, #3, #4, #5, #6, #7 \rangle}\xspace}%

\ekvcSplit\scfgbn{
	h=\Hspace,
	s=\Sspace,
	v=\Vspace,
  x=\Xspace,
	prev=\prev,
	cur=\cur,
	body=\Nop,
  pow={~}
} {\ensuremath{\langle #1^#8, #2^#8, #3^#8, #4^#8, #5^#8, #6^#8, #7^#8 \rangle}\xspace}%

\ekvcSplit\bscfgb{
	h=\Hspace,
	s=\Sspace,
	v=\bVspace,
	x=\Xspace,
	prev=\bprev,
	cur=\bcur,
  trg=\trg,
	body=\Nop
} {\ensuremath{\langle #1, #2, #3, #4, #5, #6, #7, #8 \rangle}\xspace}

\ekvcSplit\bscfgbn{
	h=\Hspace,
	s=\Sspace,
	v=\bVspace,
  x=\Xspace,
	prev=\prev,
	cur=\cur,
  trg=\trg,
	body=\Nop,
  pow={~}
} {\ensuremath{\langle #1^#9, #2^#9, #3^#9, #4^#9, #5^#9, #6^#9, #7^#9, #8^#9 \rangle}\xspace}%

\ekvcHash\tcfgb
{
	h=\THspace,
	s=\TSspace,
	v=\TVspace,
	e=\Elive,
	x=\TXspace,
	prev=\prev,
	cur=\cur,
	mode=\mu,
	body=\Nop
} {\langle \ekvcValue{h}{#1}, \ekvcValue{s}{#1}, \ekvcValue{v}{#1}, \ekvcValue{e}{#1}, \ekvcValue{x}{#1},
              \ekvcValue{prev}{#1}, \ekvcValue{cur}{#1}, \ekvcValue{mode}{#1}, \ekvcValue{body}{#1}
              \rangle}%

\ekvcHash\tcfgbn
{
	h=\THspace,
	s=\TSspace,
	v=\TVspace,
	e=\Elive,
	x=\TXspace,
	prev=\prev,
	cur=\cur,
	mode=\mu,
	body=\Nop,
  pow={~},
} {\langle \ekvcValue{h}{#1}^\ekvcValue{pow}{#1}, \ekvcValue{s}{#1}^\ekvcValue{pow}{#1}, \ekvcValue{v}{#1}^\ekvcValue{pow}{#1}, \ekvcValue{e}{#1}^\ekvcValue{pow}{#1}, \ekvcValue{x}{#1}^\ekvcValue{pow}{#1},
              \ekvcValue{prev}{#1}^\ekvcValue{pow}{#1}, \ekvcValue{cur}{#1}^\ekvcValue{pow}{#1}, \ekvcValue{mode}{#1}^\ekvcValue{pow}{#1}, \ekvcValue{body}{#1}^\ekvcValue{pow}{#1}
              \rangle}%

\ekvcHash\btcfgb
{
	h=\THspace,
	s=\TSspace,
	v=\bTVspace,
	e=\Elive,
	x=\TXspace,
	prev=\bprev,
	cur=\bcur,
  trg=\trg,
	mode=\mu,
	body=\Nop
}
	{\langle \ekvcValue{h}{#1}, \ekvcValue{s}{#1}, \ekvcValue{v}{#1}, \ekvcValue{e}{#1}, \ekvcValue{x}{#1},
	              \ekvcValue{prev}{#1}, \ekvcValue{cur}{#1}, \ekvcValue{trg}{#1}, \ekvcValue{mode}{#1}, \ekvcValue{body}{#1}
	              \rangle}%

\ekvcSplit\scfgp{
	h=\Hspace,
	v=\Vspace,
	p=p
} {\ensuremath{\langle #1, #2, #3 \rangle}\xspace}%

\ekvcSplit\tcfgp{
	h=\THspace,
	v=\TVspace,
	p=p
} {\ensuremath{\langle #1, #2, #3 \rangle}\xspace}%

\ekvcSplit\scfgres{
  h=\Hspace',
  c=c
} {\ensuremath{\langle #1, #2 \rangle}\xspace}

\ekvcSplit\tcfgres{
  h=\THspace',
  c=c
} {\ensuremath{\langle #1, #2 \rangle}\xspace}

\ekvcSplit\scfgevent{
  h=\Hspace,
  s=\Sspace,
} {\ensuremath{\langle #1, #2 \rangle}\xspace}

\ekvcSplit\eventsub{
  h=\Hspace,
  s=\Sspace,
  sub={},
} {\ensuremath{\langle #1_{#3} , #2_{#3} \rangle}\xspace}
\ekvcSplit\eventpow{
  h=\Hspace,
  s=\Sspace,
  pow={},
} {\ensuremath{\langle #1^{#3} , #2^{#3} \rangle}\xspace}
\ekvcSplit\eventsubpow{
  h=\Hspace,
  s=\Sspace,
  sub={},
  pow={},
} {\ensuremath{\langle #1_{#3}^{#4} , #2_{#3}^{#4} \rangle}\xspace}

\ekvcSplit\scfgx{
  v=\Vspace,
  prev=\prev,
  cur=\cur,
  r=x,
  body=body
} {\ensuremath{\langle #1, #2, #3, #4, #5 \rangle}\xspace}

\ekvcSplit\bscfgx{
    v=\bVspace,
    prev=\bprev,
    cur=\bcur,
    trg=\trg,
    r=x,
    body=\bbody
} {\ensuremath{\langle #1, #2, #3, #4, #5, #6 \rangle}\xspace}

\ekvcSplit\tcfgxcall
	{
		kind=call,
		v=\TVspace,
		prev=b_p,
		cur=b_c,
		mode=\mu,
		x=x,
		body=body
	}
	{\configseven{#1}{#2}{#3}{#4}{#5}{#6}{#7}}

\ekvcSplit\tcfgxenclave
	{
		prev=b_p,
		cur=b_c,
		mode=\mu,
		body=body
	}
	{\configfive{\eenter}{#1}{#2}{#3}{#4}}

\ekvcSplit\btcfgxcall
	{
		kind=call,
		v=\TVspace,
		prev=b_p,
		cur=b_c,
		mode=\mu,
		trg = b_t,
		x=x,
		body=body
	}
	{\configeight{#1}{#2}{#3}{#4}{#5}{#6}{#7}{#8}}

\ekvcSplit\btcfgxenclave
	{
		prev=b_p,
		cur=b_c,
		mode=\mu,
		trg = b_t,
		body=body
	}
	{\configfive{\eenter}{#1}{#2}{#3}{#4}{#5}}

\ekvcSplit\llblock
{
  w=4.0cm,
  pos={below right=of b2},
  x=0.0cm,
  y=0.0cm,
  id=b,
  lab=b,
  pc={\pc=l_f},
  code=\Nop,
} {
  \node[rectangle split, rectangle split parts=2, draw, text width=#1, #2, xshift=#3, yshift=#4]
    (#5) {
      $#6 : #7$
    \nodepart{two}
      #8
    }
}
\ekvcSplit\llenclaveblock
{
  w=4.0cm,
  pos={below right=of b2},
  x=0.0cm,
  y=0.0cm,
  id=b,
  lab=b,
  pc={\pc=l_f},
  code=\Nop,
} {
  \node[rectangle split, rectangle split parts=2, rectangle split part fill={green!10, white},draw, text width=#1, #2, xshift=#3, yshift=#4]
    (#5) {
      $#6 : #7$
    \nodepart{two}
      #8
    }
}
\ekvcSplit\llfnheader
{
  f=f,
  id=f,
  r=\ty{},
  l=l_f,
  args=\ty{}\var{x},
} {
  \node[rectangle, draw]
  (#2) {
    \begin{tabular}{| c || c |}
      \hline
      \multicolumn{2}{|l|}{\textsc{Function} $#1$} \\
      \hline
      Type: $#3$ & Args: $#5$ \\
      \hline
    \end{tabular}
  }
}

\ekvcSplit\regblock
{
  id=name,
  text=text,
  color=black,
  x=0.0cm,
  y=0.0cm,
  args={},
} {
  \node[rectangle,draw={#3}, xshift=#4,yshift=#5,#6]
  (#1) {
    \footnotesize
    \code{#2}
  }
}

\newcommand\concat{\mathbin{+\!+}}

\newcommand\high[1]{[#1]}
\newcommand\higher[2]{#1\!\uparrow^{#2}}
\newcommand\declat[2]{\ensuremath{{#1}_{\downarrow_{#2}}}}
\newcommand\restrict[2]{\left. #1 \right|_{#2}}

\newcommand\mergepoint[1]{\ensuremath{\mathrm{ipdom}(#1)}}

\newcommand\fresh[1]{\ensuremath{\mathrm{fresh}(#1)}}
\newcommand\flatten[1]{\ensuremath{\mathrm{flatten}(#1)}}
\newcommand\region[1]{\ensuremath{\mathrm{region}(#1)}}

\newcommand\raises{\triangleright}

\newcommand\sstepe{\longrightarrow_{e}}
\newcommand\sstepi{\longrightarrow_{i}}
\newcommand\sstepb{\longrightarrow_{b}}
\newcommand\sstepbtr{\longrightarrow^{\mathrm{tr}}_{b}}

\newcommand\strstepb{\longrightarrow_{b}^\ast}
\newcommand\sstepp{\longrightarrow_{p}}
\newcommand\ssteptr{\longrightarrow_{\text{tr}}}
\newcommand\bsstepe{\longrightarrow_{[e]}}
\newcommand\bsstepi{\longrightarrow_{[i]}}
\newcommand\bsstepb{\longrightarrow_{[b]}}
\newcommand\bstrstepb{\longrightarrow_{[b]}^*}
\newcommand\bsstepp{\longrightarrow_{[p]}}
\newcommand\bsstepbtr{\longrightarrow^{\mathrm{tr}}_{[b]}}
\newcommand\bsstepptr{\longrightarrow^{\mathrm{tr}}_{[p]}}

\newcommand\tstepe{\longrightarrow_{e}}
\newcommand\tstepi{\longrightarrow_{i}}
\newcommand\tstepb{\longrightarrow_{b}}
\newcommand\ttrstepb{\longrightarrow_{b}^\ast}
\newcommand\tstepp{\longrightarrow_{p}}
\newcommand\tsteptr{\longrightarrow_{\text{tr}}}
\newcommand\btstepe{\longrightarrow_{[e]}}
\newcommand\btstepi{\longrightarrow_{[i]}}
\newcommand\btstepb{\longrightarrow_{[b]}}
\newcommand\bttrstepb{\longrightarrow_{[b]}^*}

\newcommand\flowsto{\mathrel{\sqsubseteq}}
\newcommand\consalt{\mathbin{::}}
\newcommand\nflowsto{\mathrel{\nsqsubseteq}}
\newcommand\join{\mathrel{\sqcup}}

\newcommand\lequiv{\approx_\bot}
\newcommand\sequiv{\simeq^\Menv_l}
\newcommand\biglequiv[2]{\mathrm{eqv}_\bot\lprp{\begin{array}{c}#1 \\ #2\end{array}}}
\newcommand\mequiv{\approx_M}
\newcommand\isbr[1]{\mathfrak{Br}(#1)}
\newcommand\isub[1]{\mathfrak{Ub}(#1)}

\newcommand\srctypecheckcom[2]{#1 \vdash #2}

\newcommand\TypeEnv{\Gamma}
\newcommand\TTypeEnv{\ensuremath{\mathcal{G}}}

\newcommand\Benv{\beta}
\newcommand\TBenv{\ensuremath{\mathcal{B}}}
\newcommand\Fenv{F}
\newcommand\TFenv{\ensuremath{\mathcal{F}}}
\newcommand\Menv{\ensuremath{\mathsf{S}}\xspace}
\newcommand\TMenv{\ensuremath{\mathcal{S}}}
\newcommand\pc{\mathrm{pc}}

\newcommand\lf{l_f}

\newcommand\transrccontext{\TypeEnv}

\newcommand\trantrgcontext{\TTypeEnv}

\newcommand\transtypingcontext{\transrccontext, \trantrgcontext}
\newcommand\transtypingcontextbody{\transrccontext, \pc, \trantrgcontext}
\renewcommand\lg{l_g}
\newcommand\lh{l_h}

\newcommand\typingcontextconfig{\Menv, \Benv, \TypeEnv, \pc, \lf}
\newcommand\trgtypingcontextcom{\TTypeEnv, \pc, \mu, oc, b_c, l_{f}}
\newcommand\trgtypingcontextconfig{\TMenv, \TTypeEnv, \pc, \mu, oc, b_c, \lf}

\ekvcSplit\stctxe{g=\TypeEnv}{#1}
\ekvcSplit\stctxi{g=\TypeEnv, pc=\pc, cur=\cur, lf=\lf}{#1,#2,#3,#4}
\ekvcSplit\stctxb{g=\TypeEnv, pc=\pc, cur=\cur, lf=\lf}{#1,#2,#3,#4}
\ekvcSplit\stctxf{g=\TypeEnv}{#1}
\ekvcSplit\stctxp{g=\TypeEnv}{#1}

\ekvcSplit\ttctxe{g=\TTypeEnv}{#1}
\ekvcSplit\ttctxi{g=\TTypeEnv,mu=\mu,oc=oc,pc=\pc,cur=\cur,lf=\lf}{#1,#2,#3,#4,#5,#6}
\ekvcSplit\ttctxb{g=\TTypeEnv,mu=\mu,oc=oc,pc=\pc,cur=\cur,lf=\lf}{#1,#2,#3,#4,#5,#6}
\ekvcSplit\ttctxf{g=\TTypeEnv}{#1}
\ekvcSplit\ttctxp{g=\TTypeEnv}{#1}

\newcommand\Bmode[1]{ \TBenv(#1)}
\newcommand\modepdom[1]{\code{mode\_exit}(#1)}
\newcommand\translateSrcInt[2]{#1 \rightsquigarrow_f #2}
\newcommand\translateSrc[2]{#1 \rightsquigarrow #2}
\newcommand\translateSrcHeap[3]{#1 \vdash #2 \rightsquigarrow #3}

\newcommand\translatePostProcess[2]{#1 \rightsquigarrow #2}
\newcommand\withconstraints[2]{\langle #1, #2 \rangle}
\newcommand\mode{\mu}
\newcommand\constr{\ensuremath{\mathbb{C}}}

\newcommand\modereg[1]{\code{mode}(#1)}

\newcommand\sirenmemory[1]{(\THspace_0\cup\THspace_{#1}\cup\TSspace_0\cup\TSspace_{#1})}

\newcommand\TT{\mathcal{T}}

\ekvcSplit\tcfgevent{
  h=\THspace,
  s=\TSspace,
} {\ensuremath{\langle #1, #2 \rangle}\xspace}
\ekvcSplit\tcfgeventsub{
  h=\THspace,
  s=\TSspace,
  sub={},
} {\ensuremath{\langle #1_{#3} , #2_{#3} \rangle}\xspace}
\ekvcSplit\tcfgeventpow{
  h=\THspace,
  s=\TSspace,
  pow={},
} {\ensuremath{\langle #1^{#3} , #2^{#3} \rangle}\xspace}
\ekvcSplit\tcfgeventsubpow{
  h=\THspace,
  s=\TSspace,
  sub={},
  pow={},
} {\ensuremath{\langle #1_{#3}^{#4} , #2_{#3}^{#4} \rangle}\xspace}

\newcommand\Ttrace{\mathcal{T}}

\newcommand\hequiv{\simeq^\TMenv_h}

\newcommand\ocstat{\ensuremath{o}}
\newcommand\ocyes{{\small \ensuremath{\Upsilon}}}
\newcommand\ocno{{\small \ensuremath{\eta}}}
\newcommand\Elive{\mathcal{L}} 

\newcommand\oret{\code{oret}}
\newcommand\ocall{\code{ocall}}
\newcommand\eenter{\code{eenter}}
\newcommand\eexit{\code{eexit}}
\newcommand\ekill{\code{ekill}}
\newcommand\ecreate{\code{ecreate}}
\newcommand\preserve{\code{preserve}}

\newcommand\ORet[1]{\oret #1}
\newcommand\OCall[3]{#1 = \ocall #2 (#3)}
\newcommand\EnclaveEnter[1]{\eenter\ #1}
\newcommand\EnclaveExit{\eexit}
\newcommand\EnclaveKill[1]{\ekill\ #1}
\newcommand\EnclaveCreate[2]{\ecreate\ #1, #2}
\newcommand\Preserve[2]{#1 = \preserve~#2}

\newcommand\TgtMalloc[3]{\code{#1 = \code{malloc} #2, #3}}

\newcommand\tgtptr[2]{\langle\ptr{#1}, #2\rangle}
\newcommand\todo[1]{\textcolor{red}{TODO: #1}\xspace}

\newcommand\sam[1]{\colorbox{black}{\textcolor{yellow}{\textbf{#1}}}\xspace}

\newcommand\new[1]{\textcolor{blue}{#1}}

\newcommand\code[1]{{\sf #1}}

\newcommand\dom[1]{\mathrm{dom}(#1)}

\newcommand\infertool{{Inferno}\xspace}
\newcommand\srclang{{SIR}\xspace}
\DeclareRobustCommand\targetlang{{SIREN}\xspace}
\newcommand\targetlangminus{\ensuremath{\text{SIREN}^{-}}\xspace}

\newcommand\memory{\ensuremath{M}\xspace}

\newcommand\Loc{\ensuremath{\textit{Loc}}\xspace}

\newcommand\lattacker{\ensuremath{l}-attacker\xspace}
\newcommand\pubattacker{\ensuremath{\bot}-attacker\xspace}
\newcommand\eattacker{\ensuremath{e}-attacker\xspace}

\newcommand\event{\ensuremath{\varepsilon}\xspace}

\newcommand\emptytrace\varepsilon

\lstnewenvironment{numsnippet}{
\lstset{frame=none,xleftmargin=1em,xrightmargin=1em,backgroundcolor=\color{white},numbers=left,numbersep=3pt}
}{
}
\lstnewenvironment{snippet}{
\lstset{frame=none,xleftmargin=1em,xrightmargin=1em,backgroundcolor=\color{white}}
}{
}

\newcommand{\splitacronym}{\splitr}
\newcommand{\splitr}{SPLITR\xspace}

\newcommand\srcsmallsteptrace[3]{\ensuremath{ #1 \longrightarrow_{tr} #2 \triangleright #3}\xspace}

\newcommand\srcsyntax{\begin{figure*}[t]
  \scalebox{0.83}{
  \begin{minipage}{0.35\textwidth}
  \centering
  \begin{align*}
    x,y,z \in&\ \Var \\
    f,g,h \in&\ \Fcn \\
    b \in&\ \Block \\
    c,d \in&\ \mathbb{Z} \\
    \mathit{Const} \quad \kappa \ddefeq& m \mid c \\
    \iota \ddefeq& 0 \mid \nil \\
    e \ddefeq& \var{x} \mid c \\
    \mathit{Sec\ Label} \quad l \ddefeq& \public \mid \private \\  
    \mathit{Prog\ Counter} \quad \pc \ddefeq& l \mid \declat{l}{b} \\
     & \mid pc_1 \join pc_2 \\
    \mathit{Ground}\ \mathit{type} \quad \tau \ddefeq& \mathrm{Int_1} \mid \mathrm{Int_{32}} \mid \ptr{\tau} \\
    \mathit{Type} \ddefeq& \type{\tau}{l} \\
    \arglist \ddefeq& \cdot \mid \ty{} e, \arglist \\
    \philist \ddefeq& \cdot \mid [b, e], \philist \\
    \paramlist \ddefeq& \cdot \mid \type{\tau}{l}\var{x}, \paramlist \\
  \end{align*}
  \end{minipage}
}\scalebox{0.83}{
  \begin{minipage}{0.39\textwidth}
  \centering
  \begin{align*}
    i \ddefeq& \Add{\var{x}}{\type{\tau}{l} e_1}{e_2} \mid \Alloca{\var{x}}{\type{\tau}{l}} \\
          & \mid \Malloc{\var{x}}{\type{\tau}{l}} \mid \Free{\type{\tau*}{l}\var{x}} \\
          & \mid \Load{\var{x}}{\type{\tau}{l_1}}{\type{\ptr{\tau}}{l_2} \var{y}} \\
          & \mid \Store{\type{\tau}{l_1} e}{\type{\ptr{\tau}}{l_2} \var{y}} \mid \PhiNode{\var{x}}{\type{\tau}{l}}{\philist} \\
          & \mid \GEP{\var{x}}{\type{\tau_1}{l_1}}{\type{\ptr{\tau_1}}{l_1}y}{\type{\tau_2}{l_2}e} \\
          & \mid \FCall{\var{x}}{\type{\tau}{\lf}\fname{f}}{\arglist} \\
          & \mid \Declassify{x}{\type{\tau}{l}}{e} \\
          & \mid \colorbox{gray!50}{\ensuremath{\Asgn{\var{x}}{\type{\tau}{l}e}}} \mid \colorbox{gray!50}{\Nop} \\
    t \ddefeq& \Br{\var{b}} \mid \BrCond{\type{\ione}{l} e}{\var{b_1}}{\var{b_2}} \\
          & \mid \Ret{\type{\tau}{l} e} \\
    \mathit{body} \ddefeq& t \mid \Seq{i}{\mathit{body}}\\
    \mathit{block}\ddefeq& b:\mathit{body}\\
    \func \ddefeq& \FDef{\type{\tau}{\lf} \fname{f}}{\paramlist}{\overline{\mathit{block}}}\\
    p \ddefeq& \overline{\func} \\
  \end{align*}
  \end{minipage}
}\scalebox{0.83}{
  \begin{minipage}{.4\textwidth}
  \centering
  \begin{align*}
     m \in&\ \Loc \\
    \Menv :&\ m \mapsto \mathit{Type} \\
    \TypeEnv :&\ x \mapsto \mathit{Type} \\
    \Benv :&\ b \mapsto \pc \\
    \Vspace :&\ \Var \rightarrow \mathit{Const} \\
    \Fspace :&\ \Fcn \cup \Block \rightarrow \func \\
    \Xspace :&\ \varnothing \mid \scfgx{}::\Xspace \\
\Loc \ddefeq&\ \Loc_{\Hspace} \cup \Loc_{\Sspace} \\
    \Loc_{\Hspace} \ddefeq&\ \bigcup_l \Loc_{\Hspace,l} \\
    \Loc_{\Sspace} \ddefeq&\ \bigcup_{i,l} \Loc_{\Stspace_i, l} \\
    \mathit{Memory} \quad \memory :&\ \Loc \rightarrow \mathit{Const} \\
    \mathit{Heap}\ \Hspace :&\ \Loc_{H} \rightarrow \mathit{Const} \\
    \mathit{Stack}\ \Stspace :&\ \Loc_{\Stspace, \bot} \cup \Loc_{\Stspace, \top} \rightarrow \mathit{Const} \\
    \Sspace :&\ \varnothing \mid \Stspace :: \Sspace
  \end{align*}
  \end{minipage}
  }

  \caption{\srclang Syntax}
  \label{fig:srcsyntax}
  \vspace{-10pt}
\end{figure*}}

\newcommand\tealbot{\textcolor{teal}{\bot}}
\newcommand\redbot{\textcolor{red}{\bot}}
\newcommand\bluetop{\textcolor{blue}{\top}}
\newcommand\redc{\ty{t=\ione,l=\redbot}\textcolor{red}{c_1}}
\newcommand\bluec{\ty{t=\ione,l=\bluetop}\textcolor{blue}{c_2}}

\newcommand\blocklabeldiagram{
  \begin{tikzpicture}
    \llfnheader{r=\type{\itt}{\textcolor{teal}{\bot}},args={\redc, \bluec}};
    \llblock{w=5.4cm, pos={below=of f}, y=3.6cm, id=entry, lab=\code{entry},
      pc={\tealbot},
      code=$\BrCond{\redc}{\code{while.end}}{\code{while.cond}}$
    };
    \llblock{w=5.4cm, pos={below=of entry}, y=3.6cm, id=wc, lab=\code{while.cond},
      pc={\tealbot \join \declat{\redbot}{\code{while.end}} \join \declat{\bluetop}{\code{while.end}}},
      code=$\BrCond{\bluec}{\code{while.end}}{\code{while.body}}$
    };
    \llblock{w=5.4cm, pos={below=of wc}, y=3.6cm, id=wb, lab=\code{while.body},
      pc={\tealbot \join \declat{\redbot}{\code{while.end}} \join \declat{\bluetop}{\code{while.end}}},
      code=$\Br{\code{while.cond}}$
    };
    \llblock{w=3.6cm, pos={below=of wb},x=-1.7cm,y=3.6cm, id=we, lab=\code{while.end},
      pc={\tealbot},
      code=$\Br{\code{return}}$
    };
    \llblock{w=2.4cm, pos={below=of wb},x=2.0cm,y=3.6cm, id=ret, lab=\code{return},
      pc={\tealbot},
      code=$\Ret{\ty{t=\itt,l=\tealbot}0}$
    };
    \draw[->, =>stealth, auto, semithick]
      (f.south) -- (entry.north);
    \begin{scope}[ transform canvas={xshift=1.5cm}]
      \draw[->, =>stealth, auto, semithick]
        (entry) -- (wc);
      \draw[->, =>stealth, auto, semithick]
        (wc)  -- (wb);
    \end{scope}
    \draw[->, =>stealth, auto, semithick]
      (entry.270) |- +(-2.8,-0.2) -| (we.163);
    \draw[->, =>stealth, auto, semithick]
      (wc.270) |- +(-2,-0.2) -| (we.160);
    \draw[->, =>stealth, auto, semithick]
      (we.270) |- +(2,-0.2) -| +(2.15,0.2) |- (ret.170);
    \draw[->, =>stealth, auto, semithick]
      (wb.200) |- +(3.2,-0.2) -| +(4.5,2.9) -| (wc.13);
  \end{tikzpicture}
}

\ekvcSplit\srceconst{c={\scfge{e=c} \sstepe c}}{
  \inferrule*[Lab=E-Const]
  {~}
  {#1}
}

\ekvcSplit\srcevar{p={\Vspace(x) = \kappa},c={ \scfge{e=\var{x}} \sstepe \kappa }}{
  \inferrule*[Lab=E-Var]
  {#1}
  {#2}
}

\ekvcSplit\srceasgn{
  p1={\configtwo{\Vspace}{e} \sstepe \kappa},
  pb={\Vspace' = \Vspace[x\mapsto \kappa]},
  c={\scfgi{i=\Asgn{\var{x}}{\type{\tau}{l} e}} \sstepi  \scfgi{v=\Vspace'}}
}{
  \inferrule*[Lab=E-Asgn]
  {#1 \\ #2}
  {#3}
}

\ekvcSplit\srceadd{
  p1={\configtwo{\Vspace}{e_i} \sstepe c_i},
  p2={\Vspace' = \Vspace[x\mapsto c_1+c_2]},
  c={\scfgi{i=\Add{\var{x}}{\type{\tau}{l}  e_1}{e_2}} \sstepi  \scfgi{v=\Vspace'}}}{
  \inferrule*[Lab=E-Add]
  {#1 \\ #2}
  {#3}
}

\ekvcSplit\srcealloca{
  p1={m = \mathrm{next}(\Menv,\Sspace,\ty{}) \\ \Menv(m) = \ty{}},
  p2={\Stspace(m) = \varnothing},
  p3={\Vspace'=\Vspace[x\mapsto m]},
  p4={\Stspace' = \Stspace[m\mapsto\iota]},
  c={\scfgi{s=\Stspace::\Sspace, i=\Alloca{\var{x}}{\type{\tau}{l}}} \sstepi  \scfgi{s=\Stspace'::\Sspace, v=\Vspace'}}}{
  \inferrule*[Lab=E-Alloca]
  {#1 \\ #2 \\\\ #3 \\ #4}
  {#5}
}

\ekvcSplit\srcemalloc{
  p1={m = \mathrm{next}(\Menv,\Hspace,\ty{}) \\ \Menv(m) = \ty{}},
  p2={\Hspace(m) = \varnothing},
  p3={\Vspace'=\Vspace[x\mapsto m]},
  p4={\Hspace' = \Hspace[m\mapsto\iota]},
  c={\scfgi{i=\Malloc{\var{x}}{\type{\tau}{l}}} \sstepi
      \scfgi{h=\Hspace', v=\Vspace'}}
} {
  \inferrule*[Lab=E-Malloc]
  {#1 \\ #2 \\\\ #3 \\ #4}
  {#5}
}

\ekvcSplit\srcefree{
  p1={\Vspace(y) = m},
  p2={m \in \Loc_{\Hspace}},
  p3={\Hspace(m) \ne \varnothing},
  p4={\Hspace' = \Hspace[m\mapsto\varnothing]},
  c={\scfgi{i=\Free{\type{\tau*}{l}\var{y}}} \sstepi
    \scfgi{h=\Hspace'}}}{
  \inferrule*[Lab=E-Free]
  {#1 \\ #2 \\\\ #3 \\ #4}
  {#5}
}

\ekvcSplit\srceload{
  p1={\Vspace(y) = m},
  p2={m\ne\nil},
  p3={\Mspace(m) = \kappa},
  p4={\mathrm{isptr}(\tau) \implies \kappa \in \region{m}},
  p5={\Vspace'=\Vspace[x\mapsto \kappa]},
  c={\scfgi{i=\Load{\var{x}}{\type{\tau}{l_1}}{\type{\ptr{\tau}}{l_2}\var{y}}} \sstepi
    \scfgi{v=\Vspace'}}}{
  \inferrule*[Lab=E-Load]
  {#1 \\ #2 \\ #3 \\\\ #4 \\ #5}
  {#6}
}

\ekvcSplit\srcestore{
  p1={\configtwo{\Vspace}{e} \sstepe \kappa},
  p2={\Vspace(y) = m},
  p3={\Menv(m) = \ty{t=\tau'*,l=l_2} \implies \kappa \in \region{m}},
  p4={m\ne0},
  p5={\Mspace(m)\neq \varnothing},
  p6={\Mspace' = \Mspace(m\mapsto \kappa)},
  c={\scfgi{i=\Store{\type{\tau}{l_1} e}{\type{\ptr{\tau}}{l_2} y}} \sstepi
   \scfgi{h=\Hspace',s=\Sspace'}}}{
  \inferrule*[Lab=E-Store]
  {#1 \\ #2 \\ #3 \\\\ #4 \\ #5 \\ #6}
  {#7}
}

\ekvcSplit\srcephi{
  p1={\exists i\in [1, n]. b_i = \prev},
  p2={\scfge{e=e_i} \sstepe \kappa},
  p3={\Vspace' = \Vspace[x\mapsto\kappa]},
  c={\scfgi{i=\PhiNode{\var{x}}{\type{\tau}{l}}{[b_1, e_1] \dots [b_n,e_n]}} \sstepi  
   \scfgi{v=\Vspace',i=\Nop}}
} {
  \inferrule*[Lab=E-Phi]
  {#1 \\ #2 \\ #3}
  {#4}
}

\ekvcSplit\srcegep{
  p1={\Vspace(y) = m},
  p2={\scfge{} \sstepe c},
  p3={m' = m + c},
  p4={m' \in \region{m}},
  p5={\Vspace' = \Vspace[x \mapsto m']},
  c={\scfgi{i=\GEP{\var{x}}{\ty{t=\tau_1,l=l_1}}{\ty{t=\tau_1*,l=l_1}\var{y}}{\ty{t=\tau_2,l=l_2}e}} \sstepi
    \scfgi{v=\Vspace'}}}{
  \inferrule*[Lab=E-GEP]
  {#1 \\ #2 \\ #3 \\\\ #4 \\ #5}
  {#6}
}

\ekvcSplit\srcenop{
  p1={~},
  c={\scfgb{body=\Seq{\Nop}{body}} \sstepb  \scfgb{body=body}}}{
  \inferrule*[Lab=E-Nop]
  {#1}
  {#2}
}

\ekvcSplit\srceinstruction{
  p1={\scfgi{i=i} \sstepi \scfgi{h=\Hspace', s=\Sspace', v=\Vspace',i=i'}},
  c={\scfgb{body=\Seq{i}{body}} \sstepb  \scfgb{h=\Hspace', s=\Sspace', v=\Vspace', body=\Seq{i'}{body}}}}{
  \inferrule*[Lab=E-Instruction]
  {#1}
  {#2}
}

\ekvcSplit\srcebrconditional{
  p1={\configtwo{\Vspace}{e} \sstepe d},
  p2={d\in\{0,1\}},
  c={\scfgb{body=\BrCond{\type{\ione}{l} e}{\var{b_0}}{\var{b_1}}} \sstepb \\\\ \scfgb{body=\Br{\var{b_d}}}}}{
  \inferrule*[Lab=E-Br-Conditional]
  {#1 \\ #2}
  {#3}
}

\ekvcSplit\srcebrunconditional{
  p1={\Fspace(\cur) = \FDef{\ty{} f}{\paramlist}{\ldots; b: body; \ldots}},
  p2={\prev'=\cur},
  p3={\cur' = b},
  c={\scfgb{body=\Br{\var{b}}} \sstepb  \scfgb{prev=\prev', cur=\cur', body=body}}}{
  \inferrule*[Lab=E-Br-Unconditional]
  {#1 \\ #2 \\ #3}
  {#4}
}

\ekvcSplit\srceret{
  p1={\configtwo{\Vspace}{e} \sstepe \kappa},
  p2={\Xspace = \scfgx{v=\Vspace',prev=\prev',cur=\cur'}::\Xspace'},
  p3={\Vspace'' = \Vspace'[x\mapsto\kappa]},
  c={\scfgb{s=\Stspace::\Sspace,body=\Ret{\type{\tau}{l} e}} \sstepb  \scfgb{v=\Vspace'', x=\Xspace', prev=\prev', cur=\cur', body=\body}}
} {
  \inferrule*[Lab=E-Ret]
  {#1 \\ #2 \\ #3}
  {#4}
}

\ekvcSplit\srcecall{
  p1={\Fspace(g) = \type{\tau}{l_g} g(\paramlist) \{\overline{b_i: body_1}\}},
  p2={\configtwo{\Vspace}{\argi} \sstepe \kappa_i},
  p3={\Vspace' = \{\parami : \kappa_i\}},
  p4={\fresh{\Stspace}},
  p5={\Xspace' = \scfgx{}::\Xspace},
  c={\scfgb{body=\Seq{\FCall{\var{x}}{\type{\tau}{l_g} g}{\arglist}}{body}} \sstepb \\\\\scfgb{s=\Stspace::\Sspace, v=\Vspace', prev=\varnothing, cur=b_1, x=\Xspace',body=body_1}}}{
  \inferrule*[Lab=E-Call]
  {#1 \\ #2 \\ #3 \\\\ #4 \\ #5}
  {#6}
}

\ekvcSplit\srceprogram{
  p1={\forall i.\ \func_i = \FDef{\type{\tau}{l} f_i}{\paramlist}{\overline{block}}},
  p2={\Fspace_f=\{f_i :\func_i \mid \forall i\}},
  p3={\Fspace_b= \{b : \func_i \mid \forall i,\ \forall b \in \func_i\}},
  p4={\Fspace = \Fspace_f \cup \Fspace_b},
  p5={\func_0 = \FDef{\type{\itt}{\bot} \main}{}{\overline{block}_{\main}}},
  p6={\forall j.\  x_{j} \in V},
  p7={\scfgb{s=\varnothing, x=\varnothing, prev=\varnothing, cur=\varnothing, body=\Seq{\FCall{\var{x}}{\type{\itt}{\public} \main}{}}{\Ret{\var{x}}}}
    \strstepb  
    \scfgb{h=\Hspace', s=\varnothing, v=\Vspace', x=\varnothing, prev=\varnothing, cur=\varnothing, body=\Ret{x}}},
  c={\langle H, V, \func_0, \dots, \func_n \rangle \sstepp \configtwo{\Hspace'}{\Vspace'(x)}}}{
  \inferrule*[Lab=E-Program]
  {#1 \\ #5 \\ #7 }
  {#8}
}

\ekvcSplit\srcstrefl{
  p1={~},
  c={\type{\tau}{l} \flowsto \type{\tau}{l}}}{
  \inferrule*[Lab=ST-Refl]
  {#1}
  {#2}
}

\ekvcSplit\srcstint{
  p1={\tau\in\{\ione, \itt\}},
  p2={l_1 \flowsto l_2},
  c={\type{\tau}{l_1} \flowsto \type{\tau}{l_2}}}{
  \inferrule*[Lab=ST-Int]
  {#1 \\ #2}
  {#3}
}

\ekvcSplit\flowrefl{
  c={l \flowsto l}
} {
  \inferrule*[Lab=Flow-Refl]
  {~}
  {#1}
}

\ekvcSplit\floword{
  c={\bot \flowsto \top}}{
  \inferrule*[Lab=Flow-Ord]
  {~}
  {#1}
}

\ekvcSplit\flowtrans{
  p1={l_1\flowsto l_2},
  p2={l_2\flowsto l_3},
  c={l_1 \flowsto l_3}}{
  \inferrule*[Lab=Flow-Trans]
  {#1 \\ #2}
  {#3}
}

\ekvcSplit\flowdecl{
  p1={l_1 \flowsto l_2},
  c={\declat{l_1}{b} \flowsto l_2}}{
  \inferrule*[Lab=Flow-Decl]
  {#1}
  {#2}
}

\ekvcSplit\flowrdecl{
  p1={~},
  c={\bot \flowsto \declat{l}{b}}
}{
  \inferrule*[Lab=Flow-RDecl]
  {#1}
  {#2}
}

\ekvcSplit\flowasym{
  p1={l_1\flowsto l_2},
  p2={l_2\flowsto l_1},
  c={l_1=l_2}
} {
  \inferrule*[Lab=Flow-Asym]
  {#1 \\ #2}
  {#3}
}

\ekvcSplit\flowstar{
  p1={~},
  c={\declat{l}{\bigstar}=l}
} {
  \inferrule*[Lab=Flow-Star]
  {#1}
  {#2}
}

\ekvcSplit\flowljoin{
  p1={\pc_1 \flowsto \pc_3},
  p2={\pc_2 \flowsto \pc_3},
  c={\pc_1 \join \pc_2 \flowsto_b \pc_3}}{
  \inferrule*[Lab=BFlow-LJoin]
  {#1 \\\\ #2}
  {#3}
}

\ekvcSplit\flowrjoin{
  p1={\pc_1 \flowsto \pc_2\lor \pc_1\flowsto_b\pc_3},
  c={\pc_1 \flowsto \pc_2 \join \pc_3}}{
  \inferrule*[Lab=Flow-RJoin]
  {#1}
  {#2}
}

\ekvcSplit\bflowsuper{
  p1={l_1 \flowsto l_2},
  c={l_1 \flowsto_{b} l_2}}{
  \inferrule*[Lab=BFlow-Super]
  {#1}
  {#2}
}

\ekvcSplit\bflowhigh{
  p1={l_1 \flowsto l_2},
  p2={b_1 \neq b_2},
  c={\declat{l_1}{b_1} \flowsto_{b_2} l_2}}{
  \inferrule*[Lab=BFlow-High]
  {#1 \\ #2}
  {#3}
}

\ekvcSplit\bflowdecl{
  p1={~},
  c={\declat{l}{b} \flowsto_{b} \bot}}{
  \inferrule*[Lab=BFlow-Decl]
  {#1}
  {#2}
}

\ekvcSplit\bflowljoin{
  p1={\pc_1 \flowsto_b \pc_3},
  p2={\pc_2 \flowsto_b \pc_3},
  c={\pc_1 \join \pc_2 \flowsto_b \pc_3}}{
  \inferrule*[Lab=BFlow-LJoin]
  {#1 \\\\ #2}
  {#3}
}

\ekvcSplit\bflowrjoin{
  p1={\pc_1 \flowsto_b \pc_2\lor \pc_1\flowsto_b\pc_3},
  c={\pc_1 \flowsto_b \pc_2 \join \pc_3}}{
  \inferrule*[Lab=BFlow-RJoin]
  {#1}
  {#2}
}

\ekvcSplit\srctnull{
  c={\tc \nil : \ty{t=\tau*}}
}
{
  \inferrule*[Lab=T-Null]
  {~}
  {#1}
}

\ekvcSplit\srctvar{
  p1={\TypeEnv(x) = \type{\tau}{l}},
  c={\stctxe{} \tc  \var{x} : \type{\tau}{l}}}{
  \inferrule*[Lab=T-Var]
  {#1}
  {#2}
}

\ekvcSplit\srctsub{
  p1={\stctxe{} \tc e : \type{\tau}{l_1}},
  p2={\type{\tau}{l_1} \flowsto \type{\tau}{l_2}},
  c={\stctxe{} \tc  e : \type{\tau}{l_2}}}{
  \inferrule*[Lab=T-Sub]
  {#1 \\ #2}
  {#3}
}

\ekvcSplit\srctnop{
  p1={~},
  c={\stctxb{} \tc{\Nop}}}{
  \inferrule*[Lab=T-Nop]
  {#1}
  {#2}
}

\ekvcSplit\srctasgn{
  p1={\TypeEnv(x) = \type{\tau}{l}},
  p2={\TypeEnv \tc e : \type{\tau}{l}},
  p3={\pc \flowsto l},
  c={\stctxb{} \tc{\Asgn{\var{x}}{\type{\tau}{l}e}}}}{
  \inferrule*[Lab=T-Asgn]
  {#1 \\ #2 \\ #3}
  {#4}
}

\ekvcSplit\srctadd{
  p1={\TypeEnv(x) = \type{\tau}{l}},
  p2={\forall i.\ \TypeEnv \vdash e_i : \type{\tau}{l}},
  p3={\pc\flowsto l},
  c={\stctxb{} \tc{\Add{\var{x}}{\type{\tau}{l} e_1}{e_2}}}}{
  \inferrule*[Lab=T-Add]
  {#1 \\ #2 \\ #3}
  {#4}
}

\ekvcSplit\srctalloca{
  p1={\TypeEnv(x) = \type{\tau*}{l}},
  p2={\pc \flowsto l},
  c={\stctxb{} \tc{\Alloca{\var{x}}{\type{\tau}{l}}}}}{
  \inferrule*[Lab=T-Alloca]
  {#1 \\ #2}
  {#3}
}

\ekvcSplit\srctmalloc{
  p1={\TypeEnv(x) = \type{\tau*}{l}},
  p2={\pc \flowsto l},
  c={\stctxb{} \tc{\Malloc{\var{x}}{\type{\tau}{l}}}}}{
  \inferrule*[Lab=T-Malloc]
  {#1 \\ #2}
  {#3}
}

\ekvcSplit\srctfree{
  p1={\TypeEnv(y) = \type{\tau*}{l}},
  p2={\pc \flowsto l},
  c={\stctxb{} \tc{\Free{\type{\tau*}{l}}{\var{y}}}}}{
  \inferrule*[Lab=T-Free]
  {#1 \\ #2}
  {#3}
}

\ekvcSplit\srctstore{
  p1={\TypeEnv \tc e : \type{\tau}{l_1}},
  p2={\TypeEnv \tc \var{y} : \type{\tau*}{l_2}},
  p3={\type{\tau}{l_1} \flowsto \type{\tau}{l_2}},
  p4={\pc \flowsto l_2},
  c={\stctxb{} \tc{\Store{\type{\tau}{l_1}e}{\type{\tau*}{l_2}\var{y}}}}}{
  \inferrule*[Lab=T-Store]
  {#1 \\ #2 \\\\ #3 \\ #4}
  {#5}
}

\ekvcSplit\srctload{
  p1={\TypeEnv(x)=\type{\tau}{l_1}},
  p2={\TypeEnv \tc \var{y} : \type{\tau*}{l_2}},
  p3={\type{\tau}{l_2} \flowsto \type{\tau}{l_1}},
  p4={\pc \flowsto l_1},
  c={\stctxb{} \tc{\Load{\var{x}}{\type{\tau}{l_1}}{\type{\tau*}{l_2}\var{y}}}}}{
  \inferrule*[Lab=T-Load]
  {#1 \\ #2 \\\\ #3 \\ #4}
  {#5}
}

\ekvcSplit\srctphi{
  p1={\TypeEnv(x) = \type{\tau}{l}},
  p2={\forall i.\ \TypeEnv \tc e_i : \type{\tau}{l} },
  p3={\forall i.\ \Benv(b_i) \flowsto l},
  p4={\pc\flowsto l},
  c={\stctxb{} \tc{\PhiNode{\var{x}}{\type{\tau}{l}}{[b_1, e_1] \dots [b_n,e_n]}}}}{
  \inferrule*[Lab=T-Phi]
  {#1 \\ #2 \\\\ #3 \\ #4}
  {#5}
}

\ekvcSplit\srctgep{
  p1={\TypeEnv(x) = \TypeEnv(y) = \type{\tau_1*}{l_1}},
  p2={\TypeEnv \tc e : \type{\tau_2}{l_2}},
  p5={\tau_2 \in \{\itt, \ione\}},
  p3={l_2 \flowsto l_1},
  p4={\pc\flowsto l_1},
  c={\stctxb{} \tc
	                  {\GEP{\var{x}}
                         {\type{\tau_1}{l_1}}
                         {\type{\tau_1*}{l_1}\var{y}}
                         {\type{\tau_2}{l_2} e}}}}{
  \inferrule*[Lab=T-GEP]
  {#1 \\ #2 \\\\ #3 \\ #4 \\ #5}
  {#6}
}

\ekvcSplit\srctbrunconditional{
  p1={\pc \flowsto_b \Benv(b)},
  c={\stctxb{} \tc{\Br{\var{b}}}}}{
  \inferrule*[Lab=T-Br-Unconditional]
  {#1}
  {#2}
}

\ekvcSplit\srctbrconditional{
  p1={\TypeEnv \tc e : \type{\ione}{l}},
  p2={\trg = \mergepoint{\cur}},
  p3={\forall i.\ \pc \join \declat{l}{\trg} \flowsto_{b_i} \Benv(b_i)},
  c={\stctxb{}\tc  \BrCond{\type{\ione}{l}e}{\var{b_0}}{\var{b_1}}}}{
  \inferrule*[Lab=T-Br-Conditional]
  {#1 \\ #2 \\\\ #3}
  {#4}
}

\ekvcSplit\srctbody{
  p1={\stctxb{} \tc{i}},
  p2={\stctxb{} \tc{body}},
  c={\stctxb{} \tc{\Seq{i}{body}}}}{
  \inferrule*[Lab=T-Body]
  {#1 \\\\ #2}
  {#3}
}

\ekvcSplit\srctret{
  p1={\TypeEnv \tc e : \type{\tau}{l_f}},
  c={\stctxb{} \tc{\Ret{\type{\tau}{l_f} e}}}}{
  \inferrule*[Lab=T-Ret]
  {#1}
  {#2}
}

\ekvcSplit\srctcall{
  p1={\Fenv(g) = \FHead{\type{\tau}{l_g} \fname{g}}{\type{\tau_1}{l_1}x_1, \type{\tau_2}{l_2}x_2, \dots, \type{\tau_n}{l_n}x_n}{}\{\cdots\}},
  p2={\stctxf{} \tc g},
  p3={\TypeEnv(x) = \type{\tau}{\lg}},
  p4={\pc \flowsto \lg},
  p5={\forall i.\ \TypeEnv \tc e_i : \type{\tau_i}{l_i}},
  c={\stctxb{} \tc
                    {\FCall{\var{x}}
                           {\type{\tau}{\lg} \fname{g}}
                           {\type{\tau_1}{l'_1}e_1, \type{\tau_2}{l'_2}e_2\dots, \type{\tau_n}{l'_n}e_n}
                    }}}{
  \inferrule*[Lab=T-Call]
  {#1 \\ #2 \\\\ #3 \\ #4 \\ #5}
  {#6}
}

\ekvcSplit\srctfunction{
  p1={\Fenv(f) = \FHead{\type{\tau}{\lf} \fname{f}}{\paramlist}{}\{\overline{b_i: body_i}\}},
  p2={\forall i.\ \pc_i = \Benv_f(b_i)},
  p3={\forall i.\ \lf \flowsto \pc_i},
  p4={\forall i.\ \stctxb{cur=b_i, pc=\pc_i} \tc body_i},
  c={ \stctxf{} \tc  f}}{
  \inferrule*[Lab=T-Function]
  {#1 \\\\ #2 \\ #3 \\\\ #4}
  {#5}
}

\ekvcSplit\srctprogram{
  p1={\Fspace(f_0)=\FDef{\type{\itt}{\public}\main}{\arglist}{\overline{block}}},
  p2={\Gamma_i = \Gamma\mid_i},
  p3={\forall i.\ \stctxf{g=\Gamma_{i}} \tc f_i},
  c={ \stctxp{} \tc  f_0, f_1, \dots,f_n}}{
  \inferrule*[Lab=T-Program]
  {#1 \\\\ #2 \\ #3}
  {#4}
}

\ekvcSplit\srctmemoryconfig{
  p1={\forall m\in\Loc.\ \Mspace(m) = \varnothing \lor \Menv \tc \Mspace(m) : \Menv(m)},
  c={\tc \configtwo{\Sspace}{\Hspace}}}{
  \inferrule*[Lab=T-Memory-Config]
  {#1}
  {#2}
}

\ekvcSplit\srctvariables{
  p1={\forall x\in\Var.\ \Menv \tc \Vspace(x) : \TypeEnv(x)},
  c={\stctxe{} \tc  \Vspace}}{
  \inferrule*[Lab=T-Variables]
  {#1}
  {#2}
}

\ekvcSplit\srctfunctions{
  p1={\forall f\in \Fcn.\ \tc f},
  c={ \tc  \Fspace}}{
  \inferrule*[Lab=T-Functions]
  {#1}
  {#2}
}

\ekvcSplit\srctcallstack{
  p1={\Fenv(\cur) = \FHead{\type{\tau}{\lh} \fname{h}}{\type{\tau_1}{l_1}x_1, \type{\tau_2}{l_2}x_2, \dots, \type{\tau_n}{l_n}x_n}{}},
  p2={\pc = \Benv_h(\cur)},
  p3={\TypeEnv_h \tc \Vspace},
  p4={\TypeEnv_h(x) = \type{\tau}{\lf}},
  p5={\Benv_h, \TypeEnv_h, \pc, l_h \tc body},
  p6={\lh\flowsto\pc\flowsto\lf},
  p7={l_h \tc \Xspace},
  c={\lf \tc  \scfgx{}::\Xspace}}{
  \inferrule*[Lab=T-Callstack]
  {#1 \\ #2 \\ #3 \\ #4 \\\\ #5 \\ #6 \\ #7}
  {#8}
}

\ekvcSplit\srctcallstackbase{
  p1={\Xspace = \varnothing},
  c={\Menv, \lf \tc  \Xspace}}{
  \inferrule*[Lab=T-Callstack-Base]
  {#1}
  {#2}
}

\ekvcSplit\srctinstructionconfig{
  p1={\Menv \tc \configtwo{\Sspace}{\Hspace}},
  p2={\Menv, \TypeEnv \tc \Vspace},
  p3={\stctxb{} \tc{i}},
  c={\typingcontextconfig \tc  \scfgi{i=i}}}{
  \inferrule*[Lab=T-Instruction-Config]
  {#1 \\ #2 \\ #3}
  {#4}
}

\ekvcSplit\srctbodyconfig{
  p1={\tc \configtwo{\Sspace}{\Hspace}},
  p2={\TypeEnv \tc \Vspace},
  p3={\lf \tc \Xspace},
  p4={\stctxb{} \tc{body}},
  p5={ \tc \Fspace},
  c={\typingcontextconfig \tc  \scfgb{body=body}}}{
  \inferrule*[Lab=T-Body-Config]
  {#1 \\ #2 \\ #3 \\\\ #4 \\ #5}
  {#6}
}

\ekvcSplit\srctexpressionconfig{
  p1={\stctxe{} \tc \Vspace},
  p2={\Menv, \TypeEnv \tc e : \ty{}},
  c={\Menv,\TypeEnv \tc  \scfge{}}
} {
  \inferrule*[Lab=T-Expression-Config]
  {#1 \\ #2}
  {#3}
}

\ekvcSplit\srctprogramconfig{
  p1={ \tc p},
  p2={\Menv \tc \Hspace},
  p3={\Menv, \TypeEnv \tc \Vspace},
  c={\Menv, \TypeEnv \tc  \scfgp{}}}{
  \inferrule*[Lab=T-Program-Config]
  {#1 \\ #2 \\ #3}
  {#4}
}

\ekvcSplit\srctracenostep{
  c={\scfgb{body=\body} \sstepbtr \scfgb{body=\body} \raises \configtwo{\Hspace}{\Sspace}}
} {
  \inferrule*[Lab=Trace-NoStep]
  {~}
  {#1}
}

\ekvcSplit\srctracenextstep{
  p1={\scfgbn{body=\body, pow=0} \sstepb \scfgbn{body=\body, pow=1}},
  p2={\scfgbn{body=\body, pow=1} \sstepbtr \scfgbn{body=\body, pow=n} \raises \event^1 \consalt \cdots \consalt \event^n},
  p3={\forall i \in [0,n].\ \event^i = \eventpow{pow=i}},
  c={\scfgbn{body=\body, pow=0} \sstepbtr \scfgbn{body=\body, pow=n} \raises \event^0 \consalt \event^1 \consalt \cdots \consalt \event^n}
} {
  \inferrule*[Lab=Trace-NextStep]
  {#1 \\ #2 \\ #3}
  {#4}
}

\ekvcSplit\srceqvevent{
  p1={\Hspace_1\sequiv\Hspace_2},
  p2={\Sspace_1\sequiv\Sspace_2},
  c={\eventsub{sub=1}\sequiv\eventsub{sub=2}}
} {
  \inferrule*[Lab=Eqv-Event]
  {#1 \\ #2}
  {#3}
}

\ekvcSplit\srceqvtracelockstep{
  p1={T_1 = \eventsub{sub=1} :: T_1'},
  p2={T_2 = \eventsub{sub=2} :: T_2'},
  p3={\eventsub{sub=1} \sequiv \eventsub{sub=2}},
  p4={T'_1 \sequiv T'_2}, 
  c={T_1\sequiv T_2}
} {
  \inferrule*[Lab=Eqv-Trace-LockStep]
  {#1 \\ #2 \\\\ #3 \\ #4}
  {#5}
}

\ekvcSplit\srceqvtracehalfstep{
  p1={T_1 = \eventsub{sub=1} :: T'_1},
  p2={T_2 = \eventsub{sub=2} :: T'_2},
  p3={\eventsub{sub=1} \sequiv \eventsub{sub=2}},
  p4={T'_1 \sequiv T_2}, 
  c={T_1\sequiv T_2}
} {
  \inferrule*[Lab=Eqv-Trace-HalfStep]
  {#1 \\ #2 \\\\ #3 \\ #4}
  {#5}
}

\ekvcSplit\srceqvtracesymm{
  p1={T_2 \sequiv T_1}, 
  c={T_1\sequiv T_2}
} {
  \inferrule*[Lab=Eqv-Trace-Symm]
  {#1}
  {#2}
}

\ekvcSplit\srceqvtraceempty{
  p1={T_1 = T_2 = \varnothing},
  c={T_1\sequiv T_2}
} {
  \inferrule*[Lab=Eqv-Trace-Empty]
  {#1}
  {#2}
}

\ekvcSplit\brsrcarithmetichigh{
  p1={~},
  c={[c_1] + [c_2] = c_1 + [c_2] = [c_1] + c_2 = [c_1 + c_2]}}{
  \inferrule*[Lab={[Arithmetic-High]}]
  {#1}
  {#2}
}

\ekvcSplit\brsrcelevationa{
  p1={l\nflowsto\bot},
  c={\higher{\kappa}{l} = [\kappa]}}{
  \inferrule*[Lab={[Elevation-1]}]
  {#1}
  {#2}
}

\ekvcSplit\brsrcelevationb{
  p1={l\flowsto\bot},
  c={\higher{\kappa}{l} = \kappa}}{
  \inferrule*[Lab={[Elevation-2]}]
  {#1}
  {#2}
}

\ekvcSplit\brsrcelevationc{
  p1={~},
  c={\higher{[\kappa]}{l} = [\kappa]}}{
  \inferrule*[Lab={[Elevation-3]}]
  {#1}
  {#2}
}

\ekvcSplit\brsrcmemlookuphigh{
  p1={~},
  c={\Mspace(\fbloc) = \Mspace(m)}}{
  \inferrule*[Lab={[Mem-Lookup-High]}]
  {#1}
  {#2}
}

\ekvcSplit\brsrceconst{
  p1={~},
  c={\scfge{e=\bconst} \bsstepe \bconst}}{
  \inferrule*[Lab={[E-Const]}]
  {#1}
  {#2}
}

\ekvcSplit\brsrcevar{
  p1={\Vspace(x) = \bkappa},
  c={\scfge{e=\var{x}} \bsstepe \bkappa}}{
  \inferrule*[Lab={[E-Var]}]
  {#1}
  {#2}
}

\ekvcSplit\brsrceasgn{
  p1={\scfge{e=e} \bsstepe \bkappa},
  p2={\Vspace' = \Vspace[x\mapsto \higher{\bkappa}{l}]},
  c={\bscfgi{i=\Asgn{\var{x}}{\type{\tau}{l} e}} \bsstepi 
    \bscfgi{v=\Vspace'}}}{
  \inferrule*[Lab={[E-Asgn]}]
  {#1 \\ #2}
  {#3}
}

\ekvcSplit\brsrceadd{
  p1={\scfge{e=e_i} \bsstepe \bconst_i},
  p2={\Vspace' = \Vspace[x\mapsto \higher{(\bconst_1+\bconst_2)}{l}]},
  c={\bscfgi{i=\Add{\var{x}}{\type{\tau}{l} e_1}{e_2}} \bsstepi
    \bscfgi{v=\Vspace'}}}{
  \inferrule*[Lab={[E-Add]}]
  {#1 \\ #2}
  {#3}
}

\ekvcSplit\brsrcealloca{
  p1={m = \mathrm{next}(\Menv,\Sspace,\ty{}) \\ \Menv(m) = \ty{}},
  p2={\bStspace(m) = \varnothing},
  p3={\Vspace'=\Vspace[x\mapsto \higher{m}{l}]},
  p4={\bStspace' = \bStspace{} [m\mapsto\higher{\iota}{l}]},
  c={\bscfgi{s=\bStspace::\Sspace, i=\Alloca{\var{x}}{\type{\tau}{l}}} \bsstepi
    \bscfgi{s=\bStspace'::\Sspace, v=\Vspace'}}}{
  \inferrule*[Lab={[E-Alloca]}]
  {#1 \\ #2 \\\\ #3 \\ #4}
  {#5}
}

\ekvcSplit\brsrcemalloc{
  p1={m = \mathrm{next}(\Menv,\Hspace,\ty{}) \\ \Menv(m) = \ty{}},
  p2={\Hspace(m) = \varnothing},
  p3={\Vspace'=\Vspace[x\mapsto \higher{m}{l}]},
  p4={\Hspace' = \Hspace[m\mapsto\higher{\iota}{l}]},
  c={\bscfgi{i=\Malloc{\var{x}}{\type{\tau}{l}}} \bsstepi
    \bscfgi{h=\Hspace', v=\Vspace'}}}{
  \inferrule*[Lab={[E-Malloc]}]
  {#1 \\ #2 \\\\ #3 \\ #4}
  {#5}
}

\ekvcSplit\brsrcefree{
  p1={\Vspace(y) = \bloc},
  p2={\bloc \in \Loc_{\Hspace}},
  p3={\Hspace(\bloc) \ne \varnothing},
  p4={\Hspace' = \Hspace[\bloc\mapsto\varnothing]},
  c={\bscfgi{i=\Free{\type{\tau*}{l}\var{y}}} \bsstepi
    \bscfgi{h=\Hspace'}}}{
  \inferrule*[Lab={[E-Free]}]
  {#1 \\ #2 \\\\ #3 \\ #4}
  {#5}
}

\ekvcSplit\brsrceload{
  p1={\Vspace(y) = \bloc},
  p2={m\ne\nil},
  p3={\Mspace(\bloc) = \bkappa},
  p4={\mathrm{isptr}(\tau) \implies \kappa\in\region{m}},
  p5={\Vspace'=\Vspace[x\mapsto \higher{\bkappa}{l_1}]},
  c={\bscfgi{i=\Load{\var{x}}{\type{\tau}{l_1}}{\type{\ptr{\tau}}{l_2}\var{y}}} \bsstepi
    \bscfgi{v=\Vspace'}}}{
  \inferrule*[Lab={[E-Load]}]
  {#1 \\ #2 \\ #3 \\\\ #4 \\ #5}
  {#6}
}

\ekvcSplit\brsrcestore{
  p1={\scfge{e=e} \bsstepe \bkappa},
  p2={\Vspace(y) = \bloc},
  p3={\Menv(m) = \ty{t=\tau'*,l=l_2} \implies \kappa \in \region{m}},
  p4={\bloc\ne0},
  p5={\Mspace(\bloc) \neq \varnothing},
  p6={\Mspace' = \Mspace(\bloc\mapsto\higher{\bkappa}{l_2})},
  c={\bscfgi{i=\Store{\type{\tau}{l_1} e}{\type{\ptr{\tau}}{l_2} y}} \bsstepi
    \bscfgi{h=\Hspace', s=\Sspace'}}}{
  \inferrule*[Lab={[E-Store]}]
  {#1 \\ #2 \\ #3 \\\\ #4 \\ #5 \\ #6}
  {#7}
}

\ekvcSplit\brsrcephi{
  p1={\exists i\in [1, n]. b_i = \bprev},
  p2={\bscfge{e=e_i} \bsstepe \bkappa},
  p3={\bVspace' = \bVspace{[x\mapsto\higher{\bkappa}{l}]}},
  c={\bscfgi{i=\PhiNode{\var{x}}{\type{\tau}{l}}{[b_1, e_1] \dots [b_n,e_n]}} \sstepi  
   \bscfgi{v=\bVspace',i=\Nop}}
} {
  \inferrule*[Lab={[E-Phi]}]
  {#1 \\ #2 \\ #3}
  {#4}
}

\ekvcSplit\brsrcegep{
  p1={\Vspace(y) = \bloc},
  p2={\scfge{e=e} \bsstepe \bconst},
  p3={\bloc' =  \bloc + \bconst},
  p4={\bloc' \in \region{\bloc}},
  p5={\Vspace' = \Vspace[x \mapsto \bloc']},
  c={\bscfgi{i=\GEP{\var{x}}{\ty{t=\tau_1,l=l_1}}{\ty{t=\tau_1*,l=l_1}\var{y}}{\ty{t=\tau_2,l=l_2}e}} \bsstepi
    \bscfgi{v=\Vspace'}}}{
  \inferrule*[Lab={[E-GEP]}]
  {#1 \\ #2 \\ #3 \\\\ #4 \\ #5}
  {#6}
}

\ekvcSplit\brsrceinstructionlow{
  p1={\bscfgi{i=i} \bsstepi
    \bscfgi{h=\Hspace', s=\Sspace', v=\Vspace', i=i'}},
  c={\bscfgb{cur=\cur, trg=\varnothing, body=\Seq{i}{body}} \bsstepb
    \bscfgb{h=\Hspace', s=\Sspace', v=\Vspace', cur=\cur, trg=\varnothing, body=\Seq{i'}{body}}}}{
  \inferrule*[Lab={[E-Instruction-Low]}]
  {#1}
  {#2}
}

\ekvcSplit\brsrceinstructionhigh{
  p1={\bscfgi{i=i} \bsstepi
    \bscfgi{h=\Hspace', s=\Sspace', v=\Vspace', i=i'}},
  c={\bscfgb{cur=\fbcur, body=[\Seq{i}{body}]} \bsstepb
    \bscfgb{h=\Hspace', s=\Sspace', v=\Vspace', cur=\fbcur, body=[\Seq{i'}{body}]}}}{
  \inferrule*[Lab={[E-Instruction-High]}]
  {#1}
  {#2}
}

\ekvcSplit\brsrcenoplow{
  p1={~},
  c={\bscfgb{cur=\cur, trg=\varnothing, body=\Seq{\Nop}{body}} \bsstepb
    \bscfgb{cur=\cur, trg=\varnothing, body=body}}}{
  \inferrule*[Lab={[E-Nop-Low]}]
  {#1}
  {#2}
}

\ekvcSplit\brsrcenophigh{
  p1={~},
  c={\bscfgb{cur=\fbcur, body=[\Seq{\Nop}{body}]} \bsstepb
    \bscfgb{cur=\fbcur, body=\fbbody}}}{
  \inferrule*[Lab={[E-Nop-High]}]
  {#1}
  {#2}
}

\ekvcSplit\brsrcebrconditionallow{
  p1={\scfge{e=e} \bsstepe \bdonst},
  p2={\higher{\bdonst}{l}\in\{0,1\}},
  c={\bscfgb{cur=\cur, trg=\varnothing, body=\BrCond{\type{\ione}{l} e}{\var{b_0}}{\var{b_1}}} \bsstepb
    \bscfgb{cur=\cur, trg=\varnothing, body=\Br{\var{b_d}}}}}{
  \inferrule*[Lab={[E-Br-Conditional-Low]}]
  {#1 \\ #2}
  {#3}
}

\ekvcSplit\brsrcebrconditionalraise{
  p1={\scfge{e=e} \bsstepe \bdonst},
  p2={\higher{\bdonst}{l}\in\{[0],[1]\}},
  p3={\trg = \mergepoint{\cur}},
  c={\bscfgb{cur=\cur, trg=\varnothing, body=\BrCond{\type{\ione}{l} e}{\var{b_0}}{\var{b_1}}} \bsstepb
    \bscfgb{cur=\cur, body=[{\Br{\var{b_d}}}]}}}{
  \inferrule*[Lab={[E-Br-Conditional-Raise]}]
  {#1 \\ #2 \\ #3}
  {#4}
}

\ekvcSplit\brsrcebrconditionalhigh{
  p1={\scfge{e=e} \bsstepe \bdonst},
  c={\bscfgb{cur=\fbcur, body=[\BrCond{\type{\ione}{l} e}{\var{b_0}}{\var{b_1}}]} \bsstepb
    \bscfgb{cur=\fbcur, body=[\Br{\var{b_d}}]}}}{
  \inferrule*[Lab={[E-Br-Conditional-High]}]
  {#1}
  {#2}
}

\ekvcSplit\brsrcebrunconditionalhigh{
  p1={\Fspace(\cur) = \FDef{\ty{l=\lf} f}{\paramlist}{\ldots; b: body; \ldots}},
  p2={\fbprev'=\fbcur},
  p3={b \ne \trg},
  c={\bscfgb{cur=\fbcur, body=[\Br{ \var{b}}]} \bsstepb
    \bscfgb{prev=\fbprev', cur=[b], body=\fbbody}}}{
  \inferrule*[Lab={[E-Br-Unconditional-High]}]
  {#1 \\ #2 \\ #3}
  {#4}
}

\ekvcSplit\brsrcebrunconditionallower{
  p1={\Fspace(\cur) = \FDef{\ty{l=\lf} f}{\paramlist}{\ldots; b: body; \ldots}},
  p2={\prev'=\fbcur},
  p3={b = \trg},
  c={\bscfgb{cur=\fbcur, body=[\Br{ \var{b}}]} \bsstepb
    \bscfgb{prev=\prev', cur=b, trg=\varnothing, body=body}}}{
  \inferrule*[Lab={[E-Br-Unconditional-Lower]}]
  {#1 \\ #2 \\ #3}
  {#4}
}

\ekvcSplit\brsrcebrunconditionallow{
  p1={\Fspace(\cur) = \FDef{\ty{l=\lf} f}{\paramlist}{\ldots; b: body; \ldots}},
  p2={\prev' = \cur},
  c={\bscfgb{cur=\cur, trg=\varnothing, body=\Br{ \var{b}}} \bsstepb
    \bscfgb{prev=\prev', cur=b, trg=\varnothing, body=body}}}{
  \inferrule*[Lab={[E-Br-Unconditional-Low]}]
  {#1 \\ #2}
  {#3}
}

\ekvcSplit\brsrcecalllow{
  p1={\Fspace(g) = \type{\tau}{l_g} g(\ty{l=l_1,t=\tau_1}x_1, \dots, \ty{l=l_n,t=\tau_n}x_n) \{b_1:body_1; \cdots;b_m:body_m\}},
  p2={\scfge{e=e_i} \bsstepe \bkappa_i},
  p3={\Vspace' = \{x_i : \higher{\bkappa_i}{l_i}\}},
  p4={\fresh{\Stspace}},
  p5={\Xspace' = \bscfgx{v=\Vspace,cur=\cur,trg=\varnothing,body=\body}::\Xspace},
  c={\bscfgb{cur=\cur, trg=\varnothing, body=\Seq{\FCall{\var{x}}{\type{\tau}{l} f}{\ty{t=\tau_1,l=l'_1}e_1, \dots, \ty{t=\tau_n,l=l'_n}e_n}}{body}} \bsstepb \\
    \bscfgb{s=\Stspace::\Sspace, v=\Vspace',
                   prev=\varnothing, cur=b_1, x=\Xspace', trg=\varnothing, body=body_1}}}{
  \inferrule*[Lab={[E-Call-Low]}]
  {#1 \\ #2 \\ #3 \\ #4 \\\\ #5}
  {#6}
}

\ekvcSplit\brsrcecallhigh{
  p1={\Fspace(g) = \type{\tau}{l_g} g(\ty{l=l_1,t=\tau_1}x_1, \dots, \ty{l=l_n,t=\tau_n}x_n) \{b_1:body_1; \cdots;b_m:body_m\}},
  p2={\scfge{e=e_i} \bsstepe \bkappa_i},
  p3={\fbVspace' = \{x_i : \higher{\bkappa_i}{l_i}\}},
  p4={\fresh{\fbStspace}},
  p5={\Xspace' = \bscfgx{v=\bVspace,cur=\fbcur,body=\fbbody}::\Xspace},
  c={\bscfgb{cur=\fbcur,
                   body=[\Seq{\FCall{\var{x}}
                                    {\type{\tau}{l} f}
                                    {\arglist}}
                             {body}]} \bsstepb  \\
    \bscfgb{s=\fbStspace::\Sspace, v=\fbVspace',
                   prev=\varnothing, cur=[b_1], x=\Xspace', trg=\bigstar, body=[body_1]}}}{
  \inferrule*[Lab={[E-Call-High]}]
  {#1 \\ #2 \\ #3 \\ #4 \\\\ #5}
  {#6}
}

\ekvcSplit\brsrceretlow{
  p1={\scfge{v=\Vspace, e=e} \bsstepe \bkappa},
  p2={\Xspace = \bscfgx{v=\Vspace', cur=\cur', trg=\varnothing, body=\body}::\Xspace'},
  p3={\Vspace''=\Vspace'[x\mapsto\higher{\bkappa}{\lf}]},
  c={\bscfgb{s=\Stspace::\Sspace,trg=\varnothing, body=\Ret{\type{\tau}{\lf} e}} \bsstepb
    \bscfgb{v=\Vspace'', x=\Xspace', prev=\bprev', cur=\cur', trg=\varnothing, body=body}}
} {
  \inferrule*[Lab={[E-Ret-Low]}]
  {#1 \\ #2 \\ #3}
  {#4}
}

\ekvcSplit\brsrcerethigh{
  p1={\scfge{v=\fbVspace, e=e} \bsstepe \bkappa},
  p2={\Xspace = \bscfgx{v=\bVspace',cur=\fbcur', prev=\bprev', trg=\trg', body=\fbbody}::\Xspace'},
  p3={\bVspace''=\bVspace'[x\mapsto\higher{\bkappa}{\lf}]},
  c={\bscfgb{s=\fbStspace::\Sspace, prev=\fbprev, cur=\fbcur, trg=\bigstar, v=\fbVspace, body=[\Ret{\type{\tau}{\lf} e}]} \bsstepb
    \bscfgb{v=\bVspace'', cur=\fbcur', x=\Xspace', prev=\bprev', trg=\trg', body=\fbbody}}}{
  \inferrule*[Lab={[E-Ret-High]}]
  {#1 \\ #2 \\ #3}
  {#4}
}

\ekvcSplit\brsrceprogram{
  p1={\forall i.\ func_i = \FDef{\type{\tau}{l} f_i}{\paramlist}{b_1 : body_1; \cdots; b_n : body_n}},
  p2={\Fspace_f=\{f_i :func_i \mid \forall i\}},
  p3={\Fspace_b= \{b : func_i \mid \forall i,\ \forall b \in func_i\}},
  p4={\Fspace = \Fspace_f \cup \Fspace_b},
  p5={\forall i.\ \func_i = \FDef{\ty{}f_i}{\paramlist}{\overline{block}}},
  p6={\func_0 = \FDef{\type{\itt}{\bot} \main}{}{\overline{block}_{\main}}},
  p7={\bscfgb{s=\varnothing, x=\varnothing, prev=\varnothing, cur=\varnothing, trg=\varnothing, body=\Seq{\FCall{\var{x}}{\type{\itt}{\public} \main}{}}{\Ret{\var{x}}}}
    \bstrstepb  
    \bscfgb{h=\Hspace', s=\varnothing, v=\Vspace', x=\varnothing, prev=\varnothing, cur=\varnothing, trg=\varnothing, body=\Ret{c}}},
  c={\langle H, V, func_0, \dots, func_N \rangle \bsstepp \configtwo{\Hspace'}{c}}}{
  \inferrule*[Lab={[E-Program]}]
  {#5 \\ #6 \\ #7}
  {#8}
}

\ekvcSplit\brsrceprogramtrace{
  p1={\forall i.\ func_i = \FDef{\type{\tau}{l} f_i}{\paramlist}{b_1 : body_1; \cdots; b_n : body_n}},
  p2={\Fspace_f=\{f_i :func_i \mid \forall i\}},
  p3={\Fspace_b= \{b : func_i \mid \forall i,\ \forall b \in func_i\}},
  p4={\Fspace = \Fspace_f \cup \Fspace_b},
  p5={\forall i.\ \func_i = \ty{}f_i(\paramlist)\{\overline{block}\}},
  p6={\func_0 = \ty{t=\itt,l=\bot}\main()\{\overline{block}_\main\}},
  p7={\scfgb{s=\varnothing, x=\varnothing, prev=\varnothing, cur=\varnothing, body=\Seq{\FCall{\var{x}}{\type{\itt}{\public} \main}{}}{\Ret{\var{x}}}}
    \bsstepbtr  
    \scfgb{h=\Hspace', s=\varnothing, v=\Vspace', x=\varnothing, prev=\varnothing, cur=\varnothing, body=\Ret{x}} \raises T},
  c={\langle H, V, \func_0, \dots, \func_N \rangle \bsstepptr  
    \configtwo{\Hspace'}{\Vspace'(x)} \raises T}}{
  \inferrule*[Lab={[E-Program-Trace]}]
  {#5 \\ #6 \\ #7}
  {#8}
}

\ekvcSplit\brsrcstrefl{
  p1={~},
  c={\type{\tau}{l} \flowsto \type{\tau}{l}}}{
  \inferrule*[Lab={[ST-Refl]}]
  {#1}
  {#2}
}

\ekvcSplit\brsrcstint{
  p1={\tau\in\{\ione, \itt\}},
  p2={l_1 \flowsto l_2},
  c={\type{\tau}{l_1} \flowsto \type{\tau}{l_2}}}{
  \inferrule*[Lab={[ST-Int]}]
  {#1 \\ #2}
  {#3}
}

\ekvcSplit\brsrctflowsa{
  p1={l_1 \flowsto l_2},
  c={\declat{l_1}{b_1} \flowsto_{b_2} l_2}}{
  \inferrule*[Lab={[T-Flows-1]}]
  {#1}
  {#2}
}

\ekvcSplit\brsrctflowsb{
  p1={~},
  c={\declat{l}{b} \flowsto_{b} \bot}}{
  \inferrule*[Lab={[T-Flows-2]}]
  {#1}
  {#2}
}

\ekvcSplit\brsrctflowsc{
  p1={l_1 \flowsto l_2},
  c={\declat{l_1}{b} \flowsto l_2}}{
  \inferrule*[Lab={[T-Flows-3]}]
  {#1}
  {#2}
}

\ekvcSplit\brsrctnullhigh{
  p1={l\nflowsto\bot},
  c={\tc [\nil] : \ty{t=\tau*}}
}
{
  \inferrule*[Lab={[T-Null-High]}]
  {#1}
  {#2}
}

\ekvcSplit\brsrctnulllow{
  c={\tc \nil : \ty{t=\tau*,l=\bot}}
}
{
  \inferrule*[Lab={[T-Null-Low]}]
  {~}
  {#1}
}

\ekvcSplit\brsrctvar{
  p1={\TypeEnv(x) = \type{\tau}{l}},
  c={\stctxe{} \tc  \var{x} : \type{\tau}{l}}}{
  \inferrule*[Lab={[T-Var]}]
  {#1}
  {#2}
}

\ekvcSplit\brsrctsub{
  p1={\stctxe{} \tc e : \type{\tau}{l_1}},
  p2={\type{\tau}{l_1} \flowsto \type{\tau}{l_2}},
  c={\stctxe{} \tc  e : \type{\tau}{l_2}}}{
  \inferrule*[Lab={[T-Sub]}]
  {#1 \\ #2}
  {#3}
}

\ekvcSplit\brsrctnop{
  p1={~},
  c={\stctxb{} \tc{\Nop}}}{
  \inferrule*[Lab={[T-Nop]}]
  {#1}
  {#2}
}

\ekvcSplit\brsrctasgn{
  p1={\TypeEnv(x) = \type{\tau}{l}},
  p2={\TypeEnv \tc e : \type{\tau}{l}},
  p3={\pc \flowsto l},
  c={\stctxb{} \tc{\Asgn{\var{x}}{\type{\tau}{l}e}}}}{
  \inferrule*[Lab={[T-Asgn]}]
  {#1 \\ #2 \\ #3}
  {#4}
}

\ekvcSplit\brsrctadd{
  p1={\TypeEnv(x) = \type{\tau}{l}},
  p2={\forall i.\ \TypeEnv \vdash e_i : \type{\tau}{l}},
  p3={\pc\flowsto l},
  c={\stctxb{} \tc{\Add{\var{x}}{\type{\tau}{l}e_1}{e_2}}}}{
  \inferrule*[Lab={[T-Add]}]
  {#1 \\ #2 \\ #3}
  {#4}
}

\ekvcSplit\brsrctalloca{
  p1={\TypeEnv(x) = \type{\tau*}{l}},
  p2={\pc \flowsto l},
  c={\stctxb{} \tc{\Alloca{\var{x}}{\type{\tau}{l}}}}}{
  \inferrule*[Lab={[T-Alloca]}]
  {#1 \\ #2}
  {#3}
}

\ekvcSplit\brsrctmalloc{
  p1={\TypeEnv(x) = \type{\tau*}{l}},
  p2={\pc \flowsto l},
  c={\stctxb{} \tc{\Malloc{\var{x}}{\type{\tau}{l}}}}}{
  \inferrule*[Lab={[T-Malloc]}]
  {#1 \\ #2}
  {#3}
}

\ekvcSplit\brsrctfree{
  p1={\TypeEnv(y) = \type{\tau*}{l}},
  p2={\pc \flowsto l},
  c={\stctxb{} \tc{\Free{\type{\tau*}{l}}{\var{y}}}}}{
  \inferrule*[Lab={[T-Free]}]
  {#1 \\ #2}
  {#3}
}

\ekvcSplit\brsrctstore{
  p1={\TypeEnv \tc e : \type{\tau}{l_1}},
  p2={\TypeEnv \tc \var{y} : \type{\tau*}{l_2}},
  p3={l_1 \flowsto l_2},
  p4={\pc \flowsto l_2},
  c={\stctxb{} \tc
                    {\Store{\type{\tau}{l_1}e}{\type{\tau*}{l_2}\var{y}}}}}{
  \inferrule*[Lab={[T-Store]}]
  {#1 \\ #2 \\\\ #3 \\ #4}
  {#5}
}

\ekvcSplit\brsrctload{
  p1={\TypeEnv(x)=\type{\tau}{l_1}},
  p2={\TypeEnv \tc \var{y} : \type{\tau*}{l_2}},
  p3={\type{\tau}{l_2} \flowsto \type{\tau}{l_1}},
  p4={\pc \flowsto l_1},
  c={\stctxb{} \tc
                    {\Load{\var{x}}{\type{\tau}{l_1}}{\type{\tau*}{l_2}\var{y}}}}}{
  \inferrule*[Lab={[T-Load]}]
  {#1 \\ #2 \\\\ #3 \\ #4}
  {#5}
}

\ekvcSplit\brsrctphi{
  p1={\TypeEnv(x) = \type{\tau}{l}},
  p2={\forall i.\ \TypeEnv \tc e_i : \type{\tau}{l} },
  p3={\forall i.\ \Benv(b_i) \flowsto l},
  p4={\pc\flowsto l},
  c={\stctxb{} \tc
                    {\PhiNode{\var{x}}{\type{\tau}{l}}{[b_1, e_1] \dots [b_n,e_n]}}}}{
  \inferrule*[Lab={[T-Phi]}]
  {#1 \\ #2 \\\\ #3 \\ #4}
  {#5}
}

\ekvcSplit\brsrctgep{
  p1={\TypeEnv(x) = \TypeEnv(y) = \type{\tau_1*}{l_1}},
  p2={\TypeEnv \tc e : \type{\tau_2}{l_2}},
  p3={l_2 \flowsto l_1},
  p4={\tau_2 \in \{\itt, \ione\}},
  p5={\pc\flowsto l_1},
  c={\stctxb{} \tc
	                  {\GEP{\var{x}}
                         {\type{\tau_1}{l_1}}
                         {\type{\tau_1*}{l_1}\var{y}}
                         {\type{\tau_2}{l_2} e}}}}{
  \inferrule*[Lab={[T-GEP]}]
  {#1 \\ #2 \\ #3 \\ #4 \\ #5}
  {#6}
}

\ekvcSplit\brsrctbrunconditional{
  p1={\pc \flowsto_{b} \Benv(b)},
  c={\stctxb{} \tc  \Br{ \var{b}}}
} {
  \inferrule*[Lab={[T-Br-Unconditional]}]
  {#1}
  {#2}
}

\ekvcSplit\brsrctbrconditional{
  p1={\stctxe{} \tc e : \type{\ione}{l}},
  p2={b_t = \mergepoint{\cur}},
  p3={\forall i.\ \pc \join \declat{l}{b_t} \flowsto_{b_i} \Benv(b_i)},
  c={\stctxb{} \tc  \BrCond{\type{\ione}{l}e}{ \var{b_0}}{ \var{b_1}}}
} {
  \inferrule*[Lab={[T-Br-Conditional]}]
  {#1 \\ #2 \\\\ #3}
  {#4}
}

\ekvcSplit\brsrctbody{
  p1={\stctxb{} \tc i},
  p2={\stctxb{} \tc body},
  c={\stctxb{} \tc  \Seq{i}{body}}
} {
  \inferrule*[Lab={[T-Body]}]
  {#1 \\\\ #2}
  {#3}
}

\ekvcSplit\brsrctbodyhigh{
  p1={\pc \flowsto \pc'},
  p2={\pc' \nflowsto \bot},
  p3={\stctxb{pc=\pc'} \tc body},
  c={\stctxb{} \tc  \fbbody}
} {
  \inferrule*[Lab={[T-Body-High]}]
  {#1 \\ #2 \\\\ #3}
  {#4 \\\\ \sam{maybe nuke this}}
}

\ekvcSplit\brsrctcall{
  p1={\Fenv(g) = \FHead{\type{\tau}{\lg} \fname{g}}{\type{\tau_1}{l_1}x_1, \type{\tau_2}{l_2}x_2, \dots, \type{\tau_n}{l_n}x_n}{}\{\cdots\}},
  p2={\stctxf{} \tc g},
  p3={\TypeEnv(x) = \type{\tau}{l_g}},
  p4={\pc \flowsto l_g},
  p5={\forall i.\ \TypeEnv \tc e_i : \type{\tau_i}{l_i}},
  c={\stctxb{} \tc  \FCall{\var{x}}
                         {\type{\tau}{l_g} \fname{g}}
                         {
                           \type{\tau_1}{l_1'}e_1,
                           \type{\tau_2}{l_2'}e_2,
                           \dots,
                           \type{\tau_n}{l_n'}e_n
                         }}}{
  \inferrule*[Lab={[T-Call]}]
  {#1 \\ #2 \\\\ #3 \\ #4 \\ #5}
  {#6}
}

\ekvcSplit\brsrctret{
  p1={\TypeEnv \tc e : \type{\tau}{\lf}},
  c={\stctxb{} \tc{\Ret{\type{\tau}{\lf} e}}}}{
  \inferrule*[Lab={[T-Ret]}]
  {#1}
  {#2}
}

\ekvcSplit\brsrctfunction{
  p1={\Fenv(f) = \FHead{\type{\tau}{\lf} \fname{f}}{\type{\tau_1}{l_1}x_1, \type{\tau_2}{l_2}x_2, \dots, \type{\tau_n}{l_n}x_n}{}\{b_1:\body_1; b_2:\body_2; \cdots\}},
  p2={\forall i.\ \pc_i = \Benv_f(b_i)},
  p3={\forall i.\ \lf \flowsto pc_i},
  p4={\forall i.\ \stctxb{pc=\pc_i} \tc \body_i},
  c={\stctxf{} \tc f}
} {
  \inferrule*[Lab={[T-Function]}]
  {#1 \\ #2 \\ #3 \\ #4}
  {#5}
}

\ekvcSplit\brsrctprogram{
  p1={\Fspace(f_0)=\FDef{\type{\itt}{\public}\main}{\arglist}{block^+}},
  p2={\Gamma_i = \Gamma\mid_i},
  p3={\forall i.\ \stctxf{g=\Gamma_{i}} \tc f_i},
  c={ \stctxp{} \tc  f_0, f_1, \dots,f_n}}{
  \inferrule*[Lab={[T-Program]}]
  {#1 \\ #2 \\ #3}
  {#4}
}

\ekvcSplit\brsrctmemloclow{
  p1={\Menv(m) = \type{\tau'}{\bot}},
  c={\tc m : \type{\tau*}{\bot}}
} {
  \inferrule*[Lab={[T-Mem-Loc-Low]}]
  {#1}
  {#2}
}

\ekvcSplit\brsrctmemlochigh{
  p1={\Menv(\fbloc) = \type{\tau'}{l}},
  p2={l \nflowsto \bot},
  c={\tc \fbloc : \type{\tau*}{l}}
} {
  \inferrule*[Lab={[T-Mem-Loc-High]}]
  {#1 \\ #2}
  {#3}
}

\ekvcSplit\brsrctmemoryconfig{
  p1={\forall m\in\Loc.\
        \Mspace(m) = \varnothing
        \lor \Menv \tc \Mspace(m) : \Menv(m)
      },
  c={\tc  \configtwo{\Sspace}{\Hspace}}
} {
  \inferrule*[Lab={[T-Memory-Config]}]
  {#1}
  {#2}
}

\ekvcSplit\brsrctvariables{
  p1={\forall x\in\Var.\ \Menv \tc \Vspace(x) : \TypeEnv(x)},
  c={\stctxe{} \tc \bVspace}
} {
  \inferrule*[Lab={[T-Variables]}]
  {#1}
  {#2}
}

\ekvcSplit\brsrctfunctions{
  p1={\forall f\in \Fcn.\ \tc f},
  c={ \tc  \Fspace}
} {
  \inferrule*[Lab={[T-Functions]}]
  {#1}
  {#2}
}

\ekvcSplit\brsrctcallstack{
  p1={\Fenv(\cur) = \FHead{\type{\tau}{\lh} \fname{h}}{\type{\tau_1}{l_1}x_1, \type{\tau_2}{l_2}x_2, \dots, \type{\tau_n}{l_n}x_n}{}},
  p2={\pc = \Benv_h(\cur)},
  p3={\TypeEnv_h \tc \bVspace},
  p4={\TypeEnv_h(x) = \type{\tau}{\lf}},
  p5={\isub{\bbody} \implies \TypeEnv_h, \pc, \lh \tc \body},
  p6={l \nflowsto \bot \\ \declat{l}{\trg} \flowsto \pc' \\ \pc \flowsto \pc' \\\\ \isbr{\bbody} \implies (\stctxb{g=\TypeEnv_h, pc=\pc', lf=\lh} \tc \body) \land (\pc' \flowsto \lf)},
  p7={\lh \flowsto \pc \flowsto \lf},
  p8={\trg, \lh \tc \Xspace},
  c={\trg, \lf \tc  \scfgx{body=\bbody}::\Xspace}
} {
  \inferrule*[Lab={[T-Callstack]}]
  {#1 \\\\ #2 \\ #3 \\ #4 \\\\ #5 \\\\ #6 \\\\ #7 \\ #8}
  {#9}
}

\ekvcSplit\brsrctcallstacklow{
  p1={\Fenv(\cur) = \FHead{\type{\tau}{\lh} \fname{h}}{\type{\tau_1}{l_1}x_1, \type{\tau_2}{l_2}x_2, \dots, \type{\tau_n}{l_n}x_n}{}},
  p2={\pc = \Benv_h(\cur)},
  p3={\TypeEnv_h \tc \Vspace},
  p4={\TypeEnv_h(x) = \type{\tau}{\lf}},
  p5={\TypeEnv_h, \pc, \lh \tc \body},
  p7={\lh \flowsto \pc \flowsto \lf},
  p8={\lh \tc \Xspace},
  c={\lf \tc  \bscfgx{cur=\cur,trg=\varnothing,body=\body}::\Xspace}
} {
  \inferrule*[Lab={[T-Callstack-Low]}]
  {#1 \\\\ #2 \\ #3 \\ #4 \\\\ #5 \\ #6 \\ #7}
  {#8}
}

\ekvcSplit\brsrctcallstackhigh{
  p1={\Fenv(\cur) = \FHead{\type{\tau}{\lh} \fname{h}}{\type{\tau_1}{l_1}x_1, \type{\tau_2}{l_2}x_2, \dots, \type{\tau_n}{l_n}x_n}{}},
  p2={\pc = \Benv_h(\cur)},
  p3={\TypeEnv_h \tc \bVspace},
  p4={\TypeEnv_h(x) = \type{\tau}{\lf}},
  p5={l \nflowsto \bot \\ \declat{l}{\trg} \flowsto \pc' \\ \stctxb{g=\TypeEnv_h, pc=\pc', lf=\lh} \tc \body},
  p6={\lh \flowsto \pc \flowsto \pc' \flowsto \lf},
  p7={\lh \tc \Xspace},
  c={\lf \tc  \bscfgx{body=\fbbody}::\Xspace}
} {
  \inferrule*[Lab={[T-Callstack-High]}]
  {#1 \\\\ #2 \\ #3 \\ #4 \\\\ #5 \\\\ #6 \\ #7}
  {#8}
}

\ekvcSplit\brsrctcallstackbase{
  p1={\Xspace = \varnothing},
  c={\Menv, \lf \tc  \Xspace}
} {
  \inferrule*[Lab={[T-Callstack-Base]}]
  {#1}
  {#2}
}

\ekvcSplit\brsrctexpressionconfig{
  p1={\stctxe{} \tc \bVspace},
  p2={\stctxe{} \tc e : \ty{}},
  c={\stctxe{} \tc  \configtwo{\bVspace}{e}}
} {
  \inferrule*[Lab={[T-Expression-Config]}]
  {#1 \\ #2}
  {#3}
}

\ekvcSplit\brsrctinstructionconfig{
  p1={\tc \configtwo{\Sspace}{\Hspace}},
  p2={\stctxe{} \tc \bVspace},
  p3={\stctxb{} \tc{i}},
  c={\stctxi{} \tc  \bscfgi{i=i}}
} {
  \inferrule*[Lab={[T-Instruction-Config]}]
  {#1 \\ #2 \\ #3}
  {#4}
}

\ekvcSplit\brsrctbodyconfiglow{
  p1={\tc \configtwo{\Sspace}{\Hspace}},
  p2={\stctxe{} \tc \Vspace},
  p3={\lf \tc \Xspace},
  p4={\stctxb{} \tc \body},
  p5={ \tc \Fspace},
  c={\typingcontextconfig \tc  \bscfgb{v=\Vspace,cur=\cur, trg=\varnothing, body=body}}
} {
  \inferrule*[Lab={[T-Body-Config-Low]}]
  {#1 \\ #2 \\ #3 \\\\ #4 \\ #5}
  {#6}
}

\ekvcSplit\brsrctbodyconfighigh{
  p1={\tc \configtwo{\Sspace}{\Hspace}},
  p2={\stctxe{} \tc \Vspace},
  p3={\lf \tc \Xspace},
  p4={l \nflowsto \bot},
  p5={\declat{l}{\trg} \flowsto \pc'},
  p6={\pc\flowsto\pc'},
  p7={\stctxb{pc=\pc',cur=\bcur} \tc{\body}},
  p8={\tc \Fspace},
  c={\stctxb{pc=\pc,cur=\bcur} \tc  \bscfgb{cur=\bcur,body=\fbbody}}
} {
  \inferrule*[Lab={[T-Body-Config-High]}]
  {#1 \\ #2 \\ #3 \\\\ #4 \\ #5 \\ #6 \\\\ #7 \\ #8}
  {#9}
}

\ekvcSplit\brsrctresultconfig{
  p1={\Menv \tc \configtwo{\varnothing}{\Hspace}},
  p2={ \tc c : \type{\itt}{\public}},
  c={\Menv \tc  \configtwo{\Hspace}{c}}
} {
  \inferrule*[Lab={[T-Result-Config]}]
  {#1 \\\\ #2}
  {#3}
}

\ekvcSplit\brsrctprogramconfig{
  p1={ \tc p},
  p2={\Menv \tc \Hspace},
  p3={\stctxe{} \tc \Vspace},
  c={\Menv, \TypeEnv \tc  \scfgp{}}}{
  \inferrule*[Lab={[T-Program-Config]}]
  {#1 \\ #2 \\ #3}
  {#4}
}

\ekvcSplit\brsrceqvconst{
  p1={\kappa_1 = \kappa_2 \lor (\isbr{\bkappa_1} \land \isbr{\bkappa_2})},
  c={\bkappa_1 \lequiv \bkappa_2}}{
  \inferrule*[Lab={[Eqv-Const]}]
  {#1}
  {#2}
}

\ekvcSplit\brsrceqvvar{
  c={x\lequiv x}
} {
  \inferrule*[Lab={[Eqv-Var]}]
  {~}
  {#1}
}

\ekvcSplit\brsrceqvblocklabels{
  p1={\blab_1 = \blab_2 \lor (\isbr{\blab_1} \land \isbr{\blab_2})},
  c={\blab_1 \lequiv \blab_2}}{
  \inferrule*[Lab={[Eqv-Block-Labels]}]
  {#1}
  {#2}
}

\ekvcSplit\brsrceqvmemories{
  p1={\bkappa_1 \lequiv \bkappa_2 \\\\
      \lor\ (\bkappa_1 = \varnothing \land \isbr{\bkappa_2})\\\\
      \lor\ (\isbr{\bkappa_1} \land\ \bkappa_2 = \varnothing)},
  c={\bkappa_1 \mequiv \bkappa_2}}{
  \inferrule*[Lab={[Eqv-Memories]}]
  {#1}
  {#2}
}

\ekvcSplit\brsrceqvbodylow{
  p1={\bbody_1 = \Seq{i_1}{\body_{1,t}} \\},
  p2={\bbody_2 = \Seq{i_2}{\body_{2,t}} \\\\},
  p3={i_1 \lequiv i_2 \\},
  p4={\body_{1,t} = \body_{2,t}},
  c={\bbody_1 \lequiv \bbody_2}
} {
  \inferrule*[Lab={[Eqv-Body-Low]}]
  {#1 #2 #3 #4}
  {#5}
}

\ekvcSplit\brsrceqvbodyhigh{
  p1={~},
  c={\fbbody_1 \lequiv \fbbody_2}
} {
  \inferrule*[Lab={[Eqv-Body-High]}]
  {#1}
  {#2}
}

\ekvcSplit\brsrceqvasgn{
  p1={e_1 \lequiv e_2},
  c={\biglequiv{\Asgn{\var{x}}{\type{\tau}{l} e_1}}
               {\Asgn{\var{x}}{\type{\tau}{l} e_2}}}
} {
  \inferrule*[Lab={[Eqv-Asgn]}]
  {#1}
  {#2}
}

\ekvcSplit\brsrceqvadd{
  p1={e_1 \lequiv e_2},
  p2={e_3 \lequiv e_4},
  c={\biglequiv{\Add{\var{x}}{\type{\tau}{l} e_1}{e_3}}
               {\Add{\var{x}}{\type{\tau}{l} e_2}{e_4}}}
} {
  \inferrule*[Lab={[Eqv-Add]}]
  {#1 \\ #2}
  {#3}
}

\ekvcSplit\brsrceqvload{
  p1={y_1 \lequiv y_2},
  c={\biglequiv{\Load{\var{x}}{\type{\tau}{l_1}}{\type{\ptr{\tau}}{l_2}y_1}}
               {\Load{\var{x}}{\type{\tau}{l_1}}{\type{\ptr{\tau}}{l_2}y_2}}}
} {
  \inferrule*[Lab={[Eqv-Load]}]
  {#1}
  {#2}
}

\ekvcSplit\brsrceqvstore{
  p1={e_1 \lequiv e_2 \\ x_1\lequiv x_2},
  c={\biglequiv{\Store{\type{\tau}{l_1} e_1}{\type{\ptr{\tau}}{l_2} x_1}}
               {\Store{\type{\tau}{l_1} e_2}{\type{\ptr{\tau}}{l_2} x_2}}}}{
  \inferrule*[Lab={[Eqv-Store]}]
  {#1}
  {#2}
}

\ekvcSplit\brsrceqvmalloc{
  p1={~},
  c={\biglequiv{\Malloc{\var{x}}{\type{\tau}{l}}}{\Malloc{\var{x}}{\type{\tau}{l}}}}
} {
  \inferrule*[Lab={[Eqv-Malloc]}]
  {#1}
  {#2}
}

\ekvcSplit\brsrceqvalloca{
  p1={~},
  c={\biglequiv{\Alloca{\var{x}}{\type{\tau}{l}}}{\Alloca{\var{x}}{\type{\tau}{l}}}}
} {
  \inferrule*[Lab={[Eqv-Alloca]}]
  {#1}
  {#2}
}

\ekvcSplit\brsrceqvfree{
  p1={y_1\lequiv y_2},
  c={\biglequiv{\Free{\type{\tau*}{l}y_1}}{\Free{\type{\tau*}{l}y_2}}}
} {
  \inferrule*[Lab={[Eqv-Free]}]
  {#1}
  {#2}
}

\ekvcSplit\brsrceqvphi{
  p1={\forall i.\ \blab_{1,i}\lequiv\blab_{2,i} \land e_{1,i} \lequiv e_{2,1}},
  c={\biglequiv{\PhiNode{\var{x}}{\type{\tau}{l}}{[b_{1,1}, e_{1,1}] \dots [b_{1,n},e_{1,n}]}}
               {\PhiNode{\var{x}}{\type{\tau}{l}}{[b_{2,1}, e_{2,1}] \dots [b_{2,n},e_{2,n}]}}}
} {
  \inferrule*[Lab={[Eqv-Phi]}]
  {#1}
  {#2}
}

\ekvcSplit\brsrceqvgep{
  p1={y_1 \lequiv y_2},
  p2={e_1 \lequiv e_2},
  c={\biglequiv{\GEP{\var{x}}{\ty{t=\tau_1,l=l_1}}{\ty{t=\tau*_1,l=l_1}y_1}{\ty{t=\tau_2,l=l_2}e_1}}
               {\GEP{\var{x}}{\ty{t=\tau_1,l=l_1}}{\ty{t=\tau*_1,l=l_1}y_2}{\ty{t=\tau_2,l=l_2}e_2}}}
} {
  \inferrule*[Lab={[Eqv-GEP]}]
  {#1 \\ #2}
  {#3}
}

\ekvcSplit\brsrceqvnop{
  p1={~},
  c={\Nop\lequiv\Nop}
} {
  \inferrule*[Lab={[Eqv-Nop]}]
  {#1}
  {#2}
}

\ekvcSplit\brsrceqvcall{
  p1={\forall i.\ e_{1,i} \lequiv e_{2,i}},
  c={\biglequiv{\FCall{\var{x}}{{\ty{l=\lg}} g}{\ty{t=\tau_1,l=1_1}e_{1,1}, \dots, \ty{t=\tau_n,l=l_n}e_{1,n}}}
               {\FCall{\var{x}}{{\ty{l=\lg}} g}{\ty{t=\tau_1,l=1_1}e_{2,1}, \dots, \ty{t=\tau_n,l=l_n}e_{1,n}}}}
} {
  \inferrule*[Lab={[Eqv-Call]}]
  {#1}
  {#2}
}

\ekvcSplit\brsrceqvret{
  p1={e_1 \lequiv e_2},
  c={\biglequiv{\Ret{\type{\tau}{l_f} e_1}}{\Ret{\type{\tau}{l_f} e_2}}}
} {
  \inferrule*[Lab={[Eqv-Ret]}]
  {#1}
  {#2}
}

\ekvcSplit\brsrceqvbrconditional{
  p1={e_1 \lequiv e_2},
  c={\biglequiv{\BrCond{\type{\ione}{l} e}{\var{b_0}}{\var{b_1}}}
               {\BrCond{\type{\ione}{l} e}{\var{b_0}}{\var{b_1}}}}
} {
  \inferrule*[Lab={[Eqv-Br-Conditional]}]
  {#1}
  {#2}
}

\ekvcSplit\brsrceqvbrunconditional{
  p1={~},
  c={\Br{\var{b}}\lequiv\Br{\var{b}}}
} {
  \inferrule*[Lab={[Eqv-Br-Unconditional]}]
  {#1}
  {#2}
}

\ekvcSplit\brsrceqvheap{
  p1={\forall m\in \Loc.\ \Hspace_1(m) \mequiv \Hspace_2(m)},
  c={\Hspace_1 \lequiv \Hspace_2}}{
  \inferrule*[Lab={[Eqv-Heap]}]
  {#1}
  {#2}
}

\ekvcSplit\brsrceqvstacklow{
  p1={\forall m\in \Loc.\ \Stspace_1(m) \mequiv \Stspace_2(m)},
  p2={\Sigma_1 \lequiv \Sigma_2},
  c={\Stspace_1::\Sspace_1 \lequiv \Stspace_2::\Sspace_2}}{
  \inferrule*[Lab={[Eqv-Stack-Low]}]
  {#1 \\ #2}
  {#3}
}

\ekvcSplit\brsrceqvstackhigh{
  p1={\Sspace_1 \lequiv \bStspace_2::\Sspace_2},
  c={\fbStspace_1::\Sspace_1 \lequiv \bStspace_2::\Sspace_2}}{
  \inferrule*[Lab={[Eqv-Stack-High]}]
  {#1}
  {#2}
}

\ekvcSplit\brsrceqvvariables{
  p1={\forall x\in\Var.\ \Vspace_1(x) \mequiv \Vspace_2(x)},
  c={\Vspace_1 \lequiv \Vspace_2}}{
  \inferrule*[Lab={[Eqv-Variables]}]
  {#1}
  {#2}
}

\ekvcSplit\brsrceqvcallstacklow{
  p1={\Xspace_1=\bscfgx{v=\Vspace'_1,prev=\bprev_1,cur=\cur,trg=\varnothing,r=x,body=\body}::\Xspace_{1,t}},
  p2={\Xspace_2=\bscfgx{v=\Vspace'_2,prev=\bprev_2,cur=\cur,trg=\varnothing,r=x,body=\body}::\Xspace_{2,t}},
  p3={\Vspace_1\lequiv\Vspace_2},
  p4={\bprev_1\lequiv\bprev_2},
  p5={\scfgvx{v=\Vspace'_1,x=\Xspace_{1,t}}\lequiv\scfgvx{v=\Vspace'_2,x=\Xspace_{2,t}}},
  c={\scfgvx{x=\Xspace_1,v=\Vspace_1}\lequiv\scfgvx{x=\Xspace_2,v=\Vspace_2}}
} {
  \inferrule*[Lab={[Eqv-Callstack-Low]}]
  {#1 \\ #2 \\\\ #3 \\ #4 \\ #5}
  {#6}
}

\ekvcSplit\brsrceqvcallstackhigh{
  p1={\Xspace_1=\bscfgx{v=\bVspace'_1,prev=\bprev_1,cur=\fbcur,r=x,body=\fbbody}::\Xspace_{1,t}},
  p2={\scfgvx{v=\bVspace'_1,x=\Xspace_{1,t}}\lequiv\scfgvx{v=\bVspace_2,x=\Xspace_2}},
  c={\scfgvx{x=\Xspace_1,v=\fbVspace_1}\lequiv\scfgvx{x=\Xspace_2,v=\bVspace_2}}
} {
  \inferrule*[Lab={[Eqv-Callstack-High]}]
  {#1 \\ #2}
  {#3}
}

\ekvcSplit\brsrceqvcallstackbase{
  p1={\Vspace_1\lequiv\Vspace_2},
  c={\scfgvx{x=\varnothing,v=\Vspace_1}\lequiv\scfgvx{x=\varnothing,v=\Vspace_2}}
} {
  \inferrule*[Lab={[Eqv-Callstack-Base]}]
  {#1}
  {#2}
}

\ekvcSplit\brsrceqvexpressionconfig{
  p1={\Vspace_1 \lequiv \Vspace_2},
  p2={e_1 \lequiv e_2},
  c={\bscfge{v=\Vspace_1, e=e_1} \lequiv \bscfge{v=\Vspace_2, e=e_2}}
} {
  \inferrule*[Lab={[Eqv-Expression-Config]}]
  {#1 \\ #2}
  {#3}
}

\ekvcSplit\brsrceqvinstructionconfig{
  p1={\Hspace_1 \lequiv \Hspace_2},
  p2={\Sspace_1 \lequiv \Sspace_2},
  p3={\Vspace_1 \lequiv \Vspace_2},
  p4={\bprev_1 \lequiv \bprev_2},
  p5={i_1 \lequiv i_2},
  c={\bscfgi{h=\Hspace_1, s=\Sspace_1, v=\Vspace_1, prev=\bprev_1, i=i_1} \lequiv
     \bscfgi{h=\Hspace_2, s=\Sspace_2, v=\Vspace_2, prev=\bprev_2, i=i_2}}
}
{
  \inferrule*[Lab={[Eqv-Instruction-Config]}]
  {#1 \\ #2 \\ #3 \\\\ #4 \\ #5}
  {#6}
}

\ekvcSplit\brsrceqvbodyconfiglow{
  p1={\Hspace_1 \lequiv \Hspace_2},
  p2={\Sspace_1 \lequiv \Sspace_2},
  p3={\bscfgvx{v=\Vspace_1,x=\Xspace_1} \lequiv \bscfgvx{v=\Vspace_2,x=\Xspace_2}},
  p4={\bprev_1 \lequiv \bprev_2},
  p5={body_1 \lequiv body_2},
  p6={\isub{body_1}},
  c={\biglequiv{\bscfgb{h=\Hspace_1, s=\Sspace_1, v=\Vspace_1, x=\Xspace_1, prev=\bprev_1, cur=\cur, trg=\varnothing, body=body_1}}
               {\bscfgb{h=\Hspace_2, s=\Sspace_2, v=\Vspace_2, x=\Xspace_2, prev=\bprev_2, cur=\cur, trg=\varnothing, body=body_2}}}
} {
  \inferrule*[Lab={[Eqv-Body-Config-Low]}]
  {#1 \\ #2 \\ #3 \\ #4 \\\\ #5 \\ #6}
  {#7}
}

\ekvcSplit\brsrceqvbodyconfighigh{
  p1={\Hspace_1 \lequiv \Hspace_2},
  p2={\Sspace_1 \lequiv \Sspace_2},
  p3={\bscfgvx{v=\bVspace_1,x=\Xspace_1} \lequiv \bscfgvx{v=\bVspace_2,x=\Xspace_2}},
  c={\biglequiv{\bscfgb{h=\Hspace_1, s=\Sspace_1, v=\Vspace_1, x=\Xspace_1, prev=\bprev_1, cur=[\cur_1], body=[body_1]}}
               {\bscfgb{h=\Hspace_2, s=\Sspace_2, v=\Vspace_2, x=\Xspace_2, prev=\bprev_2, cur=[\cur_2], body=[body_2]}}}
} {
  \inferrule*[Lab={[Eqv-Body-Config-High]}]
  {#1 \\ #2 \\ #3}
  {#4}
}

\ekvcSplit\brsrctracenostep{
  c={\scfgb{body=\body} \bsstepbtr \scfgb{body=\body} \raises \configtwo{\Hspace}{\Sspace}}
} {
  \inferrule*[Lab=Trace-NoStep]
  {~}
  {#1}
}

\ekvcSplit\brsrctracenextstep{
  p1={\bscfgbn{body=\body, pow=0} \bsstepb \bscfgbn{body=\body, pow=1}},
  p2={\bscfgbn{body=\body, pow=1} \bsstepbtr \bscfgbn{body=\body, pow=n} \raises \eventpow{pow=1} \cdots \eventpow{pow=n}},
  c={\bscfgbn{body=\body, pow=0} \bsstepbtr \bscfgbn{body=\body, pow=n} \raises \eventpow{pow=0} :: \eventpow{pow=1} :: \cdots :: \eventpow{pow=n}}
} {
  \inferrule*[Lab=Trace-NextStep]
  {#1 \\ #2}
  {#3}
}

\ekvcSplit\brsrceqvevent{
  p1={\Hspace_1\lequiv\Hspace_2},
  p2={\Sspace_1\lequiv\Sspace_2},
  c={\eventsub{sub=1}\lequiv\eventsub{sub=2}}
} {
  \inferrule*[Lab=Eqv-Event]
  {#1 \\ #2}
  {#3}
}

\ekvcSplit\brsrceqvtracelockstep{
  p1={T_1 = \eventsub{sub=1} :: T_1'},
  p2={T_2 = \eventsub{sub=2} :: T_2'},
  p3={\eventsub{sub=1} \lequiv \eventsub{sub=2}},
  p4={T'_1 \simeq T'_2}, 
  c={T_1\simeq T_2}
} {
  \inferrule*[Lab=Eqv-Trace-LockStep]
  {#1 \\ #2 \\\\ #3 \\ #4}
  {#5}
}

\ekvcSplit\brsrceqvtracehalfstep{
  p1={T_1 = \eventsub{sub=1} :: T'_1},
  p2={T_2 = \eventsub{sub=2} :: T'_2},
  p3={\eventsub{sub=1} \lequiv \eventsub{sub=2}},
  p4={T'_1 \simeq T_2}, 
  c={T_1\simeq T_2}
} {
  \inferrule*[Lab=Eqv-Trace-HalfStep]
  {#1 \\ #2 \\\\ #3 \\ #4}
  {#5}
}

\ekvcSplit\brsrceqvtracesymm{
  p1={T_2 \simeq T_1}, 
  c={T_1\simeq T_2}
} {
  \inferrule*[Lab=Eqv-Trace-Symm]
  {#1}
  {#2}
}

\ekvcSplit\brsrceqvtraceempty{
  p1={T_1 = T_2 = \varnothing},
  c={T_1\simeq T_2}
} {
  \inferrule*[Lab=Eqv-Trace-Empty]
  {#1}
  {#2}
}

\ekvcSplit\trgtracenostep{
  c={\tcfgb{body=\body} \sstepbtr \tcfgb{body=\body} \raises \configtwo{\THspace}{\TSspace}}
} {
  \inferrule*[Lab=Trace-NoStep]
  {~}
  {#1}
}

\ekvcSplit\trgtracenextstep{
  p1={\tcfgbn{body=\body, pow=0} \sstepb \tcfgbn{body=\body, pow=1}},
  p2={\tcfgbn{body=\body, pow=1} \sstepbtr \tcfgbn{body=\body, pow=n} \raises \tcfgeventpow{pow=1} \cdots \tcfgeventpow{pow=n}},
  c={\tcfgbn{body=\body, pow=0} \sstepbtr \tcfgbn{body=\body, pow=n} \raises \tcfgeventpow{pow=0} :: \tcfgeventpow{pow=1} :: \cdots :: \tcfgeventpow{pow=n}}
} {
  \inferrule*[Lab=Trace-NextStep]
  {#1 \\ #2}
  {#3}
}

\ekvcSplit\trgeqvevent{
  p1={\THspace_1 \simeq \THspace_2},
  p2={\TSspace_1 \simeq \TSspace_2},
  c={\tcfgeventsub{sub=1}\simeq\tcfgeventsub{sub=2}}
} {
  \inferrule*[Lab=L-Eqv-Event]
  {#1 \\ #2}
  {#3}
}

\ekvcSplit\trgeqvtracelockstep{
  p1={\TT_1 = \tcfgeventsub{sub=1} :: \TT_1'},
  p2={\TT_2 = \tcfgeventsub{sub=2} :: \TT_2'},
  p3={\tcfgeventsub{sub=1} \simeq \tcfgeventsub{sub=2}},
  p4={\TT'_1 \simeq \TT'_2}, 
  c={\TT_1\simeq \TT_2}
} {
  \inferrule*[Lab=L-Eqv-Trace-LockStep]
  {#1 \\ #2 \\\\ #3 \\ #4}
  {#5}
}

\ekvcSplit\trgeqvtracehalfstep{
  p1={\TT_1 = \tcfgeventsub{sub=1} :: \TT'_1},
  p2={\TT_2 = \tcfgeventsub{sub=2} :: \TT'_2},
  p3={\tcfgeventsub{sub=1} \simeq \tcfgeventsub{sub=2}},
  p4={\TT'_1 \simeq \TT_2}, 
  c={\TT_1\simeq \TT_2}
} {
  \inferrule*[Lab=L-Eqv-Trace-HalfStep]
  {#1 \\ #2 \\\\ #3 \\ #4}
  {#5}
}

\ekvcSplit\trgeqvtracesymm{
  p1={\TT_2 \simeq \TT_1}, 
  c={\TT_1\simeq \TT_2}
} {
  \inferrule*[Lab=L-Eqv-Trace-Symm]
  {#1}
  {#2}
}

\ekvcSplit\trgeqvtraceempty{
  p1={\TT_1 = \TT_2 = \varnothing},
  c={\TT_1\simeq \TT_2}
} {
  \inferrule*[Lab=L-Eqv-Trace-Empty]
  {#1}
  {#2}
}

\ekvcSplit\trgeqvhevent{
  p1={\THspace_1 \hequiv \THspace_2},
  p2={\TSspace_1 \hequiv \TSspace_2},
  c={\tcfgeventsub{sub=1}\hequiv\tcfgeventsub{sub=2}}
} {
  \inferrule*[Lab=H-Eqv-Event]
  {#1 \\ #2}
  {#3}
}

\ekvcSplit\trgeqvhtracelockstep{
  p1={\TT_1 = \tcfgeventsub{sub=1} :: \TT_1'},
  p2={\TT_2 = \tcfgeventsub{sub=2} :: \TT_2'},
  p3={\tcfgeventsub{sub=1} \hequiv \tcfgeventsub{sub=2}},
  p4={\TT'_1 \hequiv \TT'_2}, 
  c={\TT_1\hequiv \TT_2}
} {
  \inferrule*[Lab=H-Eqv-Trace-LockStep]
  {#1 \\ #2 \\\\ #3 \\ #4}
  {#5}
}

\ekvcSplit\trgeqvhtracehalfstep{
  p1={\TT_1 = \tcfgeventsub{sub=1} :: \TT'_1},
  p2={\TT_2 = \tcfgeventsub{sub=2} :: \TT'_2},
  p3={\tcfgeventsub{sub=1} \hequiv \tcfgeventsub{sub=2}},
  p4={\TT'_1 \hequiv \TT_2}, 
  c={\TT_1\hequiv \TT_2}
} {
  \inferrule*[Lab=H-Eqv-Trace-HalfStep]
  {#1 \\ #2 \\\\ #3 \\ #4}
  {#5}
}

\ekvcSplit\trgeqvhtracesymm{
  p1={\TT_2 \hequiv \TT_1}, 
  c={\TT_1\hequiv \TT_2}
} {
  \inferrule*[Lab=H-Eqv-Trace-Symm]
  {#1}
  {#2}
}

\ekvcSplit\trgeqvhtraceempty{
  p1={\TT_1 = \TT_2 = \varnothing},
  c={\TT_1\hequiv \TT_2}
} {
  \inferrule*[Lab=H-Eqv-Trace-Empty]
  {#1}
  {#2}
}

\ekvcSplit\trgeconst{c={\tcfge{e=c} \tstepe c}}{
  \inferrule*[Lab=Et-Const]
  {~}
  {#1}
}

\ekvcSplit\trgevar{p={\TVspace(x) = \kappa},c={ \tcfge{e=\var{x}} \tstepe \kappa }}{
  \inferrule*[Lab=Et-Var]
  {#1}
  {#2}
}

\ekvcSplit\trgeasgn{
  p1={\configtwo{\TVspace}{e} \tstepe \kappa},
  p2={\TVspace' = \TVspace[x\mapsto \kappa]},
  c={\tcfgi{i=\Asgn{\var{x}}{\type{\tau}{l} e}} \tstepi  \tcfgi{v=\TVspace'}}
} {
  \inferrule*[Lab=Et-Asgn]
  {#1 \\ #2}
  {#3}
}

\ekvcSplit\trgeadd{
  p1={\configtwo{\TVspace}{e_i} \tstepe c_i},
  p2={\TVspace' = \TVspace[x\mapsto c_1+c_2]},
  c={\tcfgi{i=\Add{\var{x}}{\type{\tau}{l}  e_1}{e_2}} \tstepi  \tcfgi{v=\TVspace'}}
} {
  \inferrule*[Lab=Et-Add]
  {#1 \\ #2}
  {#3}
}

\ekvcSplit\trgealloca{
  p1={m = \mathrm{next}(\TMenv,\TStspace,\ty{}) \\ \TMenv(m) = \ty{}},
  p2={\TStspace(m) = \varnothing},
  p3={\TVspace'=\TVspace[x\mapsto m]},
  p4={\TStspace' = \TStspace[m\mapsto0]},
  c={\tcfgi{s=\TStspace::\TSspace, i=\Alloca{\var{x}}{\type{\tau}{l}}} \tstepi  \tcfgi{s=\TStspace'::\TSspace, v=\TVspace'}}
} {
  \inferrule*[Lab=Et-Alloca]
  {#1 \\ #2 \\\\ #3 \\ #4}
  {#5}
}

\ekvcSplit\trgemalloc{
  p1={m = \mathrm{next}(\TMenv,\THspace_{\mu'},\ty{}) \\ \TMenv(m) = \ty{}},
  p2={\THspace_{\mu'}(m) = \varnothing},
  p3={\TVspace'=\TVspace[x\mapsto m]},
  p4={\THspace' = \THspace[m\mapsto0]},
  p5={(\mu = \mu' \lor \mu'=0)},
  c={\tcfgi{i=\TgtMalloc{\var{x}}{\type{\tau}{l}}{\mu'}} \tstepi  \tcfgi{h=\THspace', v=\TVspace'}}
} {
  \inferrule*[Lab=Et-Malloc]
  {#1 \\ #2 \\\\ #3 \\ #4 \\ #5}
  {#6}
}

\ekvcSplit\trgefree{
  p1={\TVspace(x) = m},
  p2={m\in \Loc_{\THspace_{\mu_x}}},
  p3={\mu = \mu_x \lor \mu_x = 0},
  p4={\THspace[m] \ne \varnothing},
  p5={\THspace' = \THspace[m\mapsto\varnothing]},
  c={\tcfgi{i=\Free{\type{\tgtptr{\tau}{\mu}}{l}\var{x}}} \tstepi  \tcfgi{h=\THspace'}}
} {
  \inferrule*[Lab=Et-Free]
  {#1 \\ #2 \\ #3 \\ #4 \\ #5}
  {#6}
}

\ekvcSplit\trgestore{
  p1={\configtwo{\TVspace}{e} \tstepe \kappa},
  p2={\TVspace(y) = m},
  p3={\TMenv(m) =  (\mu_y, \type{\tgtptr{\tau'}{\mu'}}{l_2}) \implies \kappa \in \region{m}},
  p4={m\ne0},
  p5={\mu_y \in\{\mu, 0\}},
  p6={\sirenmemory{\mu}[m]\neq \varnothing},
  p7={\TMspace' = \TMspace [m\mapsto \kappa]},
  c={\tcfgi{i=\Store{\type{\tau}{l_1} e}{\type{\tgtptr{\tau}{\mu_y}}{l_2} \var{y}}} \tstepi  \tcfgi{h=\THspace',s=\TSspace'}}
} {
  \inferrule*[Lab=Et-Store]
  {#1 \\ #2 \\ #3 \\ #4 \\ #5 \\ #6 \\ #7}
  {#8}
}

\ekvcSplit\trgeload{
  p1={\TVspace(y) = m},
  p2={m \neq 0},
  p3={\mu_y \in\{\mu, 0\}},
  p4={\sirenmemory{\mu}(m) = \kappa},
  p5={\mathrm{isptr}(\tau) \implies \kappa\in\region{m}},
  p6={\TVspace'=\TVspace[x\mapsto\kappa]},
  c={\tcfgi{i=\Load{\var{x}}{\type{\tau}{l_1}}{\type{\tgtptr{\tau}{\mu_y}}{l_2}\var{y}}} \tstepi  \tcfgi{v=\TVspace'}}
} {
  \inferrule*[Lab=Et-Load]
  {#1 \\ #2 \\ #3 \\\\ #4 \\\\ #5 \\ #6}
  {#7}
}

\ekvcSplit\trgebrconditional{
  p1={\configtwo{\TVspace}{e} \tstepe d},
  p2={d\in\{0,1\}},
  c={\tcfgb{body=\BrCond{\type{\mathrm{Int_1}}{l} e}{\var{b_0}}{ \var{b_1}}} \tstepb  \tcfgb{body=\Br{ \var{b_c}}}}
} {
  \inferrule*[Lab=Et-Br-Conditional]
  {#1 \\ #2}
  {#3}
}

\ekvcSplit\trgebrunconditional{
  p1={\langle b, \mu \rangle: body \in \TFspace(b_c)},
  p2={b_p'=b_c},
  c={\tcfgb{body=\Br{ \var{b}}} \tstepb  \tcfgb{mode=\mu, prev=b_p', cur=b, body=body}}
} {
  \inferrule*[Lab=Et-Br-Unconditional]
  {#1 \\ #2}
  {#3}
}

\ekvcSplit\trgeinstruction{
p1={\tcfgi{i=i} \tstepi {\tcfgi{h=\THspace', s=\TSspace', v=\TVspace', e=\Elive', i=i'}}},
c={\tcfgb{body=\Seq{i}{\body}} \tstepb  \tcfgb{h=\THspace', s=\TSspace', v=\TVspace', e=\Elive', body=\Seq{i'}{\body}}}
} {
  \inferrule*[Lab=Et-Instruction]
  {#1}
  {#2}
}

\ekvcSplit\trgenop{
  p1={~},
  c={\tcfgb{body=\Seq{\Nop}{body}} \tstepb  \tcfgb{body=body}}
} {
  \inferrule*[Lab=Et-Nop]
  {#1}
  {#2}
}

\ekvcSplit\trgephi{
  p1={i\in [1, n]},
  p2={\prev = b_i},
  p3={\tcfge{e=e_i} \tstepe \kappa},
  p4={\TVspace' = \TVspace[x\mapsto\kappa]},
  c={\tcfgi{i=\PhiNode{\var{x}}{\type{\tau}{l}}{[b_1, e_1] \dots [b_n,e_n]}} \tstepi    \tcfgi{v=\TVspace',i=\Nop}}
} {
  \inferrule*[Lab=Et-Phi]
  {#1 \\ #2 \\ #3 \\ #4}
  {#5}
}

\ekvcSplit\trgeecreate{
  p1={\langle b, \mu\rangle \in \TFspace(b_c)},
  p2={\Elive' = \Elive \cup \{\mu\}},
  p3={\THspace' = \bigcup_{\mu \in h\cup\Elive'} \THspace_\mu},
  c={\tcfgi{mode=0, i={\EnclaveCreate{\mu}{ \var{b}}}} \tstepi  \tcfgi{mode=0, h=\THspace', e=\Elive'}}
} {
  \inferrule*[Lab=Et-Ecreate]
  {#1 \\ #2 \\ #3}
  {#4}
}

\ekvcSplit\trgeekill{
  p1={\mu\in \Elive},
  p2={\Elive' = \Elive \setminus \{\mu\}},
  p3={\THspace' = \THspace\setminus\THspace_{\mu}},
  c={\tcfgi{mode=0, i=\EnclaveKill{\mu}} \tstepi  \tcfgi{mode=0, h=\THspace', e=\Elive'}}
} {
  \inferrule*[Lab=Et-Ekill]
  {#1 \\ #2 \\ #3}
  {#4}
}

\ekvcSplit\trgeeenter{
  p1={\langle b, \mu\rangle: body' \in \TFspace(b_c)},
  p2={\mu\in{\Elive}},
  p3={\fresh{\TStspace, \mu}},
  p4={\TXspace' = \tcfgxenclave{mode=0} :: \TXspace},
  c={\tcfgb{mode=0, body=\Seq{\EnclaveEnter{\var{b}}}{body}} \tstepb  \tcfgb{x=\TXspace', s=\TStspace::\TSspace, prev=b_c, cur=b, mode=\mu,
        body=body'}}
} {
  \inferrule*[Lab=Et-Eenter]
  {#1 \\ #2 \\ #3 \\ #4}
  {#5}
}

\ekvcSplit\trgeeexit{
  p1={\langle b_c', 0\rangle \in \TFspace(b_c)},
  p2={\mu \ne 0},
  p3={\TXspace = \tcfgxenclave{prev=b_p', cur=b_c', mode=0, body=\body'} :: \TXspace'},
  c={\tcfgb{s=\TStspace::\TSspace, body=\EnclaveExit} \tstepb  \tcfgb{x=\TXspace', v=\TVspace, prev=b_p', cur=b_c', mode=0, body=\body'}}
} {
  \inferrule*[Lab=Et-Eexit]
  {#1 \\ #2 \\ #3 }
  {#4}
}

\ekvcSplit\trgeret{
  p1={\configtwo{\TVspace}{e} \tstepe \kappa},
  p2={\TXspace = \tcfgxcall{v=\TVspace', prev=\prev', cur=\cur'} :: \TXspace'},
  p3={\TVspace'' = \TVspace'[x\mapsto\kappa]},
  c={\tcfgb{s=\TStspace::\TSspace, body=\Ret{\type{\tau}{l} e}} \tstepb
       \tcfgb{v=\TVspace'', x=\TXspace', cur=\cur', prev=\prev', body=\body} }
} {
  \inferrule*[Lab=Et-Ret]
  {#1 \\ #2 \\ #3}
  {#4}
}

\ekvcSplit\trgeoret{
  p1={\configtwo{\TVspace}{e} \tstepe \kappa},
  p2={\TXspace = \tcfgxcall{kind=ocall, v=\TVspace', prev=b'_p, cur=b'_c} :: \TXspace'},
  p3={\TVspace'' = \TVspace'[x\mapsto\kappa]},
  c={\tcfgb{x=\TXspace, mode=0, body=\ORet{\type{\tau}{l} \var{x}}} \tstepb  \tcfgb{v=\TVspace'', x=\TXspace', prev=\prev', cur=\cur',
        body=body}}
} {
  \inferrule*[Lab=Et-ORet]
  {#1 \\ #2 \\ #3}
  {#4}
}

\ekvcSplit\trgecall{
  p1={\TFspace(g) = \FDef{\type{\tau}{l_g}~g^{\mu,\ocstat}}{\paramlist}{\overline{\langle b_i, \mu_i\rangle:body_i}}},
  p2={block_1 = b_1^{\mu} : \mathit{body_1}},
  p3={\configtwo{\TVspace}{\argi} \tstepe \kappa_i},
  p4={\TVspace' = \{\parami : \kappa_i\}},
  p5={\fresh{\TStspace, \mu}},
  p6={\TXspace' = \tcfgxcall{} :: \TXspace},
  c={\tcfgb{body=\Seq{\FCall{\var{x}}
          {\type{\tau}{l_f}\fname{f}}
          {\arglist}}
        {body}} \tstepb \\ \tcfgb{s=\TStspace::\TSspace, v=\TVspace', x=\TXspace', prev=\varnothing, cur=b_1,
        body=body_1}}
} {
  \inferrule*[Lab=Et-Call]
  {#1 \\ #3 \\ #4 \\ #5 \\ #6}
  {#7}
}

\ekvcSplit\trgeocall{
  p1={\TFspace(f) = \FDef{\type{\tau}{l_f}~\fname{f}^{0, \ocyes}}{\paramlist}{\overline{block}}},
  p2={block_0 = b_0^{0} : \mathit{body_0}},
  p3={\TVspace' = \{\parami : c_i\}},
  p4={\configtwo{\TVspace}{e_i} \tstepe c_i},
  p5={\fresh{\TStspace, 0}},
  p6={\TXspace' = \tcfgxcall{v=\{\}, prev=\varnothing, cur=\varnothing, mode=0, body=\code{oret\ }\type{\tau}{l}\var{x}} :: \tcfgxcall{kind=ocall} :: \TXspace},
  p7={\mode \ne 0},
  c={\tcfgb{body=\Seq{\OCall{\var{x}}
        {\type{\tau}{l_{ret}}\fname{f}}
        {\arglist}}
      {body}} \tstepb  \tcfgb{s=\TStspace::\TSspace, v=\TVspace', x=\TXspace', prev=\varnothing, cur=b_0, mode=0,
      body=\mathit{body}_0}}
} {
  \inferrule*[Lab=Et-OCall]
  {#1 \\ #2 \\ #3 \\ #4 \\ #5 \\ #6 \\ #7}
  {#8}
}

\ekvcSplit\trgegep{
  p1={\TVspace(y) = m},
  p2={\configtwo{\TVspace}{e} \tstepe c},
  p3={m' = m + c},
  p4={m'\in\region{m} \\ m'\in\modereg{m}},
  p5={\TVspace' = \TVspace[x \mapsto m']},
  c={\tcfgi{i=\GEP{\var{x}}
        {\type{\tau_1}{l_1}}
        {\type{\tgtptr{\tau_1}{\mu}}{l_1}\var{y}}
        {\type{\tau_2}{l_2} e}} \tstepi  \tcfgi{v=\TVspace'}}
} {
  \inferrule*[Lab=Et-Gep]
  {#1 \\ #2 \\ #3 \\\\ #4 \\ #5}
  {#6}
}

\ekvcSplit\trgepreserve{
  p1={\TVspace' = \TVspace[x \mapsto \TVspace(y)]},
  c={\tcfgi{i=\Preserve{\var{x}}{\type{\tau}{l} \var{y}}} \tstepi  \tcfgi{v=\TVspace', i=\Nop}}
} {
  \inferrule*[Lab=Et-Preserve]
  {#1}
  {#2}
}

\ekvcSplit\trgeprogram{
  p1={\forall i.\ \func_i = \FDef{\type{\tau}{l} f_i^{\mu_f,oc_f}}{\paramlist}{\overline{block}}},
  p2={\TFspace_f=\{f_i :\func_i \mid \forall i\}},
  p3={\TFspace_b= \{b : \func_i \mid \forall i,\ \forall b \in \func_i\}},
  p4={\TFspace = \TFspace_f \cup \TFspace_b},
  p5={\func_0 = \FDef{\type{\itt}{\bot} \main^{0, \ocno}}{}{\mathit{block}^+_{\main}}},
  p6={\forall j.\  x_{j} \in \TVspace},
  p7={\tcfgb{h=\THspace, e=\Elive, s=\varnothing, mode=0, x=\varnothing, prev=\varnothing, cur=\varnothing, body=\Seq{\FCall{\var{x}}{\main}{} }{\Ret{\var{x}}}}
    \ttrstepb
    \tcfgb{h=\THspace', e=\Elive', s=\varnothing, mode=0, v=\TVspace', x=\varnothing, prev=\varnothing, cur=\varnothing, body=\Ret{x}}},
  c={\langle \THspace, \TVspace, \func_0, \dots, \func_n \rangle \tstepp \configtwo{\THspace'}{\TVspace'(x)}}}{
  \inferrule*[Lab=Et-Program]
  {#1 \\ #5  \\ #7}
  {#8}
}

\ekvcSplit\trgprogramtrace{
  p1={\forall i.\ \func_i = \FDef{\type{\tau}{l} f_i^{\mu_f,oc_f}}{\paramlist}{\overline{block}}},
  p2={\TFspace_f=\{f_i :\func_i \mid \forall i\}},
  p3={\TFspace_b= \{b : \func_i \mid \forall i,\ \forall b \in func_i\}},
  p4={\TFspace = \TFspace_f \cup \TFspace_b},
  p5={func_0 = \FDef{\type{\itt}{\bot} \main^{0, \ocno}}{}{block^+_{\main}}},
  p6={\forall j.\  x_{j} \in V},
  p7={\Elive = \{1\} \\ \THspace = \THspace_0 \cup \THspace_1},
  p8={\tcfgb{s=\varnothing, x=\varnothing, prev=\varnothing, cur=\varnothing, body=\Seq{\FCall{\var{x}}{\type{\itt}{\public} \main}{}}{\Ret{\var{x}}}}
    \tsteptr
    \tcfgb{h=\Hspace', s=\varnothing, v=\TVspace', x=\varnothing, prev=\varnothing, cur=\varnothing, body=\Ret{\var{x}}}
    \triangleright
    T},
  c={\langle \THspace, \TVspace, func_0, \dots, func_N \rangle \tsteptr \configtwo{\THspace'}{\TVspace'(x)}}
  \triangleright
  T}{
  \inferrule*[Lab=Trace-Program]
  {#1 \\ #5 \\ #8}
  {#9}
}

\ekvcSplit\trgtdownarrowa{
  p1={l_1 \sqsubseteq l_2},
  c={l_1\downarrow_b \sqsubseteq l_2}
} {
  \inferrule*[Lab=Tt-DownArrow1]
  {#1}
  {#2}
}

\ekvcSplit\trgtdownarrowb{
  p1={l_1 \sqsubseteq l_2},
  c={l_1\downarrow_{b_1} \sqsubseteq_{b_2} l_2}
} {
  \inferrule*[Lab=Tt-DownArrow2]
  {#1}
  {#2}
}

\ekvcSplit\trgtdownarrowc{
  p1={~},
  c={l_1\downarrow_b \sqsubseteq_b \bot}
} {
  \inferrule*[Lab=Tt-DownArrow3]
  {#1}
  {#2}
}

\ekvcSplit\trgtconst{
  p1={\tau\in\{\mathrm{Int_1}, \mathrm{Int_{32}}\}},
  c={\ttctxe{} \tc  c : \type{\tau}{\bot}}
} {
  \inferrule*[Lab=Tt-Const]
  {#1}
  {#2}
}

\ekvcSplit\trgtvar{
  p1={\TTypeEnv(x) = \type{\tau}{l}},
  c={\ttctxe{} \tc  \var{x} : \type{\tau}{l}}
} {
  \inferrule*[Lab=Tt-Var]
  {#1}
  {#2}
}

\ekvcSplit\trgtsub{
  p1={\TTypeEnv \tc e : \type{\tau}{l_1}},
  p2={\type{\tau}{l_1} \sqsubseteq \type{\tau}{l_2}},
  c={\ttctxe{}\tc  e : \type{\tau}{l_2}}
} {
  \inferrule*[Lab=Tt-Sub]
  {#1 \\ #2}
  {#3}
}

\ekvcSplit\trgtnop{
  p1={~},
  c={\ttctxb{} \tc  \Nop}
} {
  \inferrule*[Lab=Tt-Nop]
  {#1}
  {#2}
}

\ekvcSplit\trgtasgn{
  p1={\mu\neq0 \lor l\sqsubseteq\public},
  p2={\TTypeEnv(x) = \type{\tau}{l}},
  p3={\TTypeEnv \tc e : \type{\tau}{l}},
  p4={\pc \flowsto l},
  c={\ttctxi{} \tc  \Asgn{\var{x}}{\type{\tau}{l}e}}
} {
  \inferrule*[Lab=Tt-Asgn]
  {#1 \\ #2 \\\\ #3 \\ #4}
  {#5}
}

\ekvcSplit\trgtadd{
  p1={\mu\neq0 \lor l\sqsubseteq\public},
  p2={\TTypeEnv(x) = \type{\tau}{l}},
  p3={\forall i: \TTypeEnv \vdash e_i : \type{\tau}{l}},
  p4={\pc\sqsubseteq l},
  c={\ttctxi{} \tc  \Add{\var{x}}{\type{\tau}{l} e_1}{e_2}}
} {
  \inferrule*[Lab=Tt-Add]
  {#1 \\ #2 \\\\ #3 \\ #4}
  {#5}
}

\ekvcSplit\trgtalloca{
  p1={\mu\neq0 \lor l\sqsubseteq\public},
  p2={\TTypeEnv(x) = \type{\tgtptr{\tau}{\mu}}{l}},
  p3={\pc \flowsto l},
  c={\ttctxi{} \tc  \Alloca{\var{x}}{\type{\tau}{l}}}
} {
  \inferrule*[Lab=Tt-Alloca]
  {#1 \\ #2 \\ #3}
  {#4}
}

\ekvcSplit\trgtmalloc{
  p1={\mu\neq0 \lor l\sqsubseteq\public},
  p2={\TTypeEnv(x) = \type{\tgtptr{\tau}{\mu_x}}{l}},
  p3={\pc \flowsto l},
  c={\ttctxi{} \tc  \TgtMalloc{\var{x}}{\type{\tau}{l}}{\mu_x}}
} {
  \inferrule*[Lab=Tt-Malloc]
  {#1 \\ #2 \\ #3}
  {#4}
}

\ekvcSplit\trgtfree{
  p1={\mu\neq0 \lor l\sqsubseteq\public},
  p2={\TTypeEnv(x) = \type{\tgtptr{\tau}{\mu_x}}{l}},
  p3={\pc \flowsto l},
  c={\ttctxi{} \tc  \Free{\type{\tgtptr{\tau}{\mu_x}}{l}}{\var{x}}}
} {
  \inferrule*[Lab=Tt-Free]
  {#1 \\ #2 \\ #3}
  {#4}
}

\ekvcSplit\trgtload{
  p1={\mu\neq0 \lor l_1\sqsubseteq\public},
  p2={\TTypeEnv(x)=\type{\tau}{l_1}},
  p3={\TTypeEnv \tc y : \type{\type{\ptr{\tau}}{\mu'}}{l_2}},
  p4={(\mu=\mu') \lor (\mu'=0)},
  p5={\type{\tau}{l_2} \sqsubseteq \type{\tau}{l_1}},
  p6={\pc \sqsubseteq l_1},
  c={\ttctxi{} \tc  \Load{\var{x}}{\type{\tau}{l_1}}{\type{\type{\ptr{\tau}}{\mu'}}{l_2}\var{y}}}
} {
  \inferrule*[Lab=Tt-Load]
  {#1 \\ #2 \\ #3 \\\\ #4 \\ #5 \\ #6}
  {#7}
}

\ekvcSplit\trgtstore{
  p1={\mu\neq0 \lor l_2\sqsubseteq\public},
  p2={\TTypeEnv \tc e : \type{\tau}{l_1}},
  p3={\TTypeEnv \tc \var{y} : \type{\type{\ptr{\tau}}{\mu'}}{l_2}},
  p4={(\mu=\mu_y) \lor (\mu_y=0)},
  p5={\type{\tau}{l_1} \sqsubseteq \type{\tau}{l_2}},
  p6={\pc \sqsubseteq l_2},
  c={\ttctxi{} \tc \Store{\type{\tau}{l_1}e}{\type{\type{\ptr{\tau}}{\mu_y}}{l_2}\var{y}}}
} {
  \inferrule*[Lab=Tt-Store]
  {#1 \\ #2 \\ #3 \\\\ #4 \\ #5 \\ #6}
  {#7}
}

\ekvcSplit\trgtphi{
  p1={\mu\neq0 \lor l\sqsubseteq\public},
  p2={\TTypeEnv(x) = \type{\tau}{l}},
  p3={\forall i.\ \TTypeEnv \tc e_i : \type{\tau}{l} },
  p4={\forall i.\ \TBenv(b_i) \flowsto l},
  p5={\pc\flowsto l},
  c={\ttctxi{} \tc  \PhiNode{\var{x}}{\type{\tau}{l}}{[b_0, e_0] \dots [b_n,e_n]}}
} {
  \inferrule*[Lab=Tt-Phi]
  {#1 \\ #2 \\ #3 \\\\ #4 \\ #5}
  {#6}
}

\ekvcSplit\trgtgep{
  p1={\mu\neq0 \lor l_1\sqsubseteq\public},
  p2={\TTypeEnv(x) = \TTypeEnv(y) = \type{\tgtptr{\tau_1}{\mu_y}}{l_1}},
  p3={l_2 \sqsubseteq l_1},
  p4={\tau_2 \in \{\mathrm{Int_{32}}, \mathrm{Int_1}\}},
  p5={\pc\sqsubseteq l_1},
  p6={(\mu_y = \mu \lor \mu_y = 0)},
  c={\ttctxi{} \tc  \GEP{\var{x}}{\type{\tau_1}{l_1}}{\type{\tgtptr{\tau_1}{\mu_y}}{l_1}\var{y}}{\type{\tau_2}{l_2} e}}
} {
  \inferrule*[Lab=Tt-Gep]
  {#1 \\ #2 \\ #3 \\\\ #4 \\ #5}
  {#7}
}

\ekvcSplit\trgtbrunconditional{
  p1={\TBenv(b) = (\mu, l_b)},
  p2={\pc \sqsubseteq_{b} l_b},
  c={\ttctxb{} \tc  \Br{\var{b}}}
} {
  \inferrule*[Lab=Tt-Br-Unconditional]
  {#1 \\ #2}
  {#3}
}

\ekvcSplit\trgtbrconditional{
  p1={\TTypeEnv \tc e : \type{\mathrm{Int_1}}{l}},
  p2={\trg = \mergepoint{\cur}},
  p3={\forall i: \TBenv(b_i) = (\mu, l_i)},
  p4={\forall i: \pc \sqcup l\downarrow_{\trg} \sqsubseteq_{b_i} l_i},
  c={\ttctxb{} \tc  \BrCond{\type{\mathrm{Int_1}}{l}e}{\var{b_1}}{\var{b_2}}}
} {
  \inferrule*[Lab=Tt-Br-Conditional]
  {#1 \\ #2 \\\\ #3 \\ #4}
  {#5}
}

\ekvcSplit\trgtcall{
  p1={\TFenv(g) = \FHead{\type{\tau}{l_g} \fname{g}^{\mu_g,oc_g}}{\paramlist}{}},
  p2={\mu\neq0 \lor l_g\sqsubseteq\public},
  p3={\TTypeEnv(x) = \type{\tau}{l_g}},
  p4={l_x = l_g},
  p5={\pc \sqsubseteq l_g},
  p6={\forall i:\TTypeEnv \tc e_i : \type{\tau_i}{l_i}},
  p7={\mu_g = \mu},
  p8={oc_g = {\ocyes} \lor oc = {\ocno}},
  c={\ttctxb{} \tc  \FCall{\var{x}}{\type{\tau}{l_g} \fname{g}}{\type{\tau_1}{l_1}e_1, \type{\tau_2}{l_2}e_2\dots, \type{\tau_n}{l_n}e_n}}
} {
  \inferrule*[Lab=Tt-Call]
  {#1 \\ #2 \\ #3 \\ #5 \\ #6 \\ #7 \\ #8}
  {#9}
}

\ekvcSplit\trgtret{
  p1={\TTypeEnv \tc e : \type{\tau}{l_f}},
  p2={\type{\tau}{l} = \type{\tau}{\lf}},
  p3={\mu \neq 0 \lor l \flowsto \bot},
  c={\ttctxb{} \tc  \Ret{\type{\tau}{l_f} e}}
} {
  \inferrule*[Lab=Tt-Ret]
  {#1 \\ #3}
  {#4}
}

\ekvcSplit\trgtbody{
  p1={\ttctxb{} \tc  i},
  p2={\ttctxb{} \tc  body},
  c={\ttctxb{} \tc  \Seq{i}{body}}
} {
  \inferrule*[Lab=Tt-Body]
  {#1 \\\\ #2}
  {#3}
}

\ekvcSplit\trgteenter{
  p1={\TBenv(b) = (\mu_b, l_b)},
  p2={\mu_b \ne 0},
  p3={\mu = 0},
  p4={\pc \sqsubseteq l_b},
  c={\ttctxb{} \tc  \EnclaveEnter{\var{b}}}
} {
  \inferrule*[Lab=Tt-Eenter]
  {#1 \\ #2 \\ #3 \\ #4}
  {#5}
}

\ekvcSplit\trgteexit{
  p1={\mu \ne 0},
  c={\ttctxb{} \tc  \EnclaveExit}
} {
  \inferrule*[Lab=Tt-Eexit]
  {#1}
  {#2}
}

\ekvcSplit\trgtecreate{
  p1={\TBenv(b) = (\mu_e, l_b)},
  p2={\mu = 0},
  p4={\mu_e\ne 0},
  p5={pc\sqsubseteq\public},
  c={\ttctxb{} \tc  \EnclaveCreate{ \mu_e}{\var{b}}}
} {
  \inferrule*[Lab=Tt-Ecreate]
  {#1 \\ #2 \\ #3\\ #4}
  {#5}
}

\ekvcSplit\trgtekill{
  p1={\mu = 0},
  p2={\mu_e\ne 0},
  p3={pc\sqsubseteq\public},
  c={\ttctxb{} \tc  \EnclaveKill{\mu_e}}
} {
  \inferrule*[Lab=Tt-Ekill]
  {#1 \\ #2 \\ #3}
  {#4}
}

\ekvcSplit\trgtocall{
  p1={\TFenv(f) = \FHead{\type{\tau}{l_g} \fname{g}^{\mu_g,oc_g}}{\paramlist}{}},
  p2={\mu\neq0 \lor l_x\sqsubseteq\public},
  p3={\TTypeEnv(x) = \type{\tau}{l_x}},
  p4={l_x = l_g},
  p5={\pc \sqsubseteq l_g},
  p6={\forall i:\TTypeEnv \tc e_i : \type{\tau_i}{l_i}},
  p7={\mu_g = 0 \\ \mu \neq 0},
  p8={oc_g=oc=\ocyes},
  c={\ttctxb{} \tc  \OCall{\var{x}}{\type{\tau}{l_g} \fname{g}}{\type{\tau_1}{l_1}e_1, \type{\tau_2}{l_2}e_2\dots, \type{\tau_n}{l_n}e_n}}
} {
  \inferrule*[Lab=Tt-OCall]
  {#1 \\ #2 \\ #3 \\ #4 \\ #5 \\ #6 \\ #7 \\ #8}
  {#10}
}

\ekvcSplit\trgtoret{
  p1={\TTypeEnv \tc e : \type{\tau}{l}},
  p2={\type{\tau}{l} \flowsto \type{\tau}{\lf}},
  p3={\mu=0},
  p4={l\flowsto\bot},
  c={\ttctxb{} \tc  \ORet{\type{\tau}{l} e}}
} {
  \inferrule*[Lab=Tt-ORet]
  {#1 \\ #2 \\ #3 \\ #4}
  {#5}
}

\ekvcSplit\trgtpreserve{
  p1={\mu = 0},
  p2={\TTypeEnv(x) = \type{\tau}{l}},
  p3={\TTypeEnv \tc \var{y} : \type{\tau}{l}},
  p4={\pc\sqsubseteq l},
  c={\ttctxb{} \tc  \Preserve{\var{x}}{\type{\tau}{l} \var{y}}}
} {
  \inferrule*[Lab=Tt-Preserve]
  {#1 \\ #2 \\ #3 \\ #4}
  {#5}
}

\ekvcSplit\trgtfunction{
  p1={\TFenv(f) = \FHead{\type{\tau}{\lf} \fname{f}^{\mu_f, oc}}{\new\paramlist}{}\{\configtwo{b_1}{\mu_1} : \body_1; \configtwo{b_2}{\mu_2}; \cdots\}},
  p2={\forall i: \pc_i = \TBenv(b_i)},
  p3={\forall i: l_f \sqsubseteq \pc_i},
  p4={\forall i: \TTypeEnv_f, \mu_i, oc, \pc_i, b_i, l_f \tc body_i},
  p5={\mu_f=0 \lor \forall i: \mu_f = \mu_i},
  p6={\mu_f \neq 0 \lor \lf \flowsto \bot},
  p7={(oc = \code{\new{\ocno}}) \lor (oc=\new{\ocyes} \land \forall i:\mu_i=0)},
  c={\ttctxf{} \tc \new{\fname{f}}}
} {
  \inferrule*[Lab=Tt-Function]
  {#1 \\ #2 \\ #3 \\ #4 \\ #5 \\ #6 \\ #7}
  {#8}
}

\ekvcSplit\trgtprogram{
  p1={\TFspace(f_0)=\FDef{\type{\mathrm{\itt}}{\public} \fname{main^{0, \ocno}}}{\arglist}{blocks}},
  p2={\forall~i.~\TTypeEnv_{i} = \restrict{\TTypeEnv}{i}},
  p3={\forall~i.~\TBenv_{f_i} = \lfloor \TBenv \rfloor_{f_i}},
  p4={\forall~i.~\TTypeEnv_{i} \tc f_i},
  c={\ttctxp{} \tc  f_0, f_1, \dots,f_n}
} {
  \inferrule*[Lab=Tt-Program]
  {#1 \\ #2  \\ #4}
  {#5}
}

\ekvcSplit\trgtmemoryconfig{
  p1={\forall m\in{\Loc}.\ (\TSspace\cup\THspace)(m) = \varnothing \lor \TMenv \tc (\TSspace\cup\THspace)(m) : \TMenv(m)},
  c={ \tc  \configtwo{\TSspace}{\THspace}}}{
  \inferrule*[Lab=Tt-Memory-Config]
  {#1}
  {#2}
}

\ekvcSplit\trgtvariables{
  p1={\forall x\in\Var.\ \TMenv,\TTypeEnv \tc \TVspace(x) : \TTypeEnv(x)},
  c={ \TTypeEnv \tc  \TVspace}}{
  \inferrule*[Lab=Tt-Variables]
  {#1}
  {#2}
}

\ekvcSplit\trgtfunctions{
  p1={\forall f\in \Fcn.\ ~ \tc \TFspace(f)},
  c={ \tc  \TFspace}}{
  \inferrule*[Lab=Tt-Functions]
  {#1}
  {#2}
}

\ekvcSplit\trgtcallstackcall{
  p1={\TFenv(\cur) = \FHead{\type{\tau}{l_g} \fname{g}^{\mu_g, oc_g}}{\paramlist}{}},
  p2={\mu_f=\mu_{b_c}},
  p3={l_g \flowsto pc \flowsto l_f},
  p4={\pc = \TBenv(\cur)},
  p5={\TTypeEnv_g \tc \TVspace},
  p6={\TTypeEnv_g(x) = \type{\tau}{\lf}},
  p7={\TTypeEnv_g, \pc, l_g \tc body},
  p8={\mu_g, l_g \tc \TXspace},
  c={\mu_f, \lf \tc  \tcfgxcall{mode=\mu_{b_c}}::\TXspace}}{
  \inferrule*[Lab=Tt-Callstack-Call]
  {#1 \\ #2 \\ #3 \\ #4 \\ #5 \\ #6 \\ #7 \\ #8}
  {#9}
}

\ekvcSplit\trgtcallstackocall{
  p1={\TFenv(\cur) = \FHead{\type{\tau}{l_g} \fname{g}^{\mu_g, oc_g}}{\paramlist}{}},
  p2={\mu_{b_c}\neq0},
  p3={\mu_f=0 \\ {l_g \flowsto pc \flowsto l_f}},
  p4={\pc = \TBenv_g(\cur)},
  p5={\TTypeEnv_g \tc \TVspace},
  p6={\TTypeEnv_g(x) = \type{\tau}{\lf}},
  p7={\TTypeEnv_g, \pc, l_g \tc body},
  p8={\mu_g, l_g \tc \TXspace},
  c={\mu_f, \lf \tc  \tcfgxcall{kind=ocall,mode=\mu_{b_c}}::\TXspace}}{
  \inferrule*[Lab=Tt-Callstack-Ocall]
  {#1 \\ #2 \\ #3 \\ #4 \\ #5 \\ #6 \\ #7 \\ #8}
  {#9}
}

\ekvcSplit\trgtcallstackeenter{
  p1={\TFenv(\cur) = \FHead{\type{\tau}{l_h} \fname{h}^{0, \ocno}}{\paramlist}{}},
  p2=\mu_{b_c}= 0,
  p3={\pc = \TBenv(\cur)},
  p4={l_h \flowsto \pc},
  p5=~,
  p6=~,
  p7={\TTypeEnv, \pc, l_h \tc body},
  p8={0, l_h \tc \TXspace},
  c={\mu_f, \lf \tc  \tcfgxenclave{mode=\mu_{b_c}}::\TXspace}}{
  \inferrule*[Lab=Tt-Callstack-Eenter]
  {#1 \\ #2 \\ #3 \\ #4 \\ #7 \\ #8}
  {#9}
}

\ekvcSplit\trgtcallstackbase{
  p1={\TXspace = \varnothing},
  c={\TMenv, \mu_f, \lf \tc  \TXspace}}{
  \inferrule*[Lab=Tt-Callstack-Base]
  {#1}
  {#2}
}

\ekvcSplit\trgtinstructionconfig{
  p1={\TMenv \tc \configtwo{\TSspace}{\THspace}},
  p2={\TMenv, \TTypeEnv \tc \TVspace},
  p3={\trgtypingcontextcom \tc{i}},
  p4={\mu = \mu_{b_c}},
  p5={\tc \configtwo{\Elive}{\THspace}},i
  c={\trgtypingcontextconfig \tc  \tcfgi{mode=\mu_{b_c}, i=i}}}{
  \inferrule*[Lab=Tt-Instruction-Config]
  {#1 \\ #2 \\ #3 \\ #4}
  {#6}
}

\ekvcSplit\trgtelive{
  p1={\forall\mu:(\mu\in \Elive \Leftrightarrow \THspace_\mu\subseteq \THspace)},
  c={\tc \configtwo{\Elive}{\THspace}}}{
  \inferrule*[Lab=Tt-E-live]
  {#1}
  {#2}
}

\ekvcSplit\trgtbodyconfig{
  p1={\tc \configtwo{\TSspace}{\THspace}},
  p2={\TTypeEnv \tc \TVspace},
  p3={\mu, \lf \tc \TXspace},
  p4={\tc \TFspace},
  p5={\tc \configtwo{\Elive}{\THspace}},i
  p6={\trgtypingcontextcom \tc{body}},
  p7={\mu = \mu_{b_c}},
  c={\trgtypingcontextconfig \tc  \tcfgb{mode=\mu_{b_c}, body=body}}}{
  \inferrule*[Lab=Tt-Body-Config]
  {#1 \\ #2 \\ #3 \\ #4 \\ #6 \\ #7}
  {#8}
}

\ekvcSplit\trgtexpressionconfig{
  p1={\TMenv, \TTypeEnv \tc \TVspace},
  p2={\TTypeEnv \tc e : \ty{}},
  c={\TMenv, \TTypeEnv \tc  \tcfge{}}
} {
  \inferrule*[Lab=Tt-Expression-Config]
  {#1 \\ #2}
  {#3}
}

\ekvcSplit\trgtresultconfig{
  p1={\TMenv \tc \configtwo{\varnothing}{\THspace}},
  p2={\tc c : \type{\itt}{\public}},
  c={\TMenv \tc \configtwo{\THspace}{c}}
} {
  \inferrule*[Lab=Tt-Result-Config] 
  {#1 \\\\ #2}
  {#3}
}

\ekvcSplit\trgtprogramconfig{
  p1={ \tc p},
  p2={\TMenv \tc \THspace},
  p3={\TMenv, \TTypeEnv \tc \TVspace},
  c={\TMenv, \TTypeEnv \tc  \tcfgp{}}}{
  \inferrule*[Lab=Tt-Program-Config]
  {#1 \\ #2 \\ #3}
  {#4}
}

\ekvcSplit\brtrgeconst{
  p1={~},
  c={\tcfge{e=\bconst} \btstepe \bconst}}{
  \inferrule*[Lab={[Et-Const]}]
  {#1}
  {#2}
}

\ekvcSplit\brtrgevar{
  p1={\TVspace(x) = \bkappa},
  c={\tcfge{e=\var{x}} \btstepe \bkappa}}{
  \inferrule*[Lab={[Et-Var]}]
  {#1}
  {#2}
}

\ekvcSplit\brtrgeasgn{
  p1={\tcfge{e=e} \btstepe \bkappa},
  p2={\TVspace' = \TVspace[x\mapsto \higher{\bkappa}{l}]},
  c={\btcfgi{i=\Asgn{\var{x}}{\type{\tau}{l} e}} \btstepi 
    \btcfgi{v=\bTVspace'}}}{
  \inferrule*[Lab={[Et-Asgn]}]
  {#1 \\ #2}
  {#3}
}

\ekvcSplit\brtrgeadd{
  p1={\configtwo{\TVspace}{e_i} \tstepe \bconst_i},
  p2={\bTVspace' = \bTVspace\! [x\mapsto \higher{(\bconst_1+\bconst_2)}{l}]},
  c={\tcfgi{i=\Add{\var{x}}{\type{\tau}{l}  e_1}{e_2}} \tstepi  \tcfgi{v=\bTVspace'}}
} {
  \inferrule*[Lab={[Et-Add]}]
  {#1 \\ #2}
  {#3}
}

\ekvcSplit\brtrgealloca{
  p1={m = \mathrm{next}(\TMenv,\TStspace,\ty{}) \\ \TMenv(m) = \ty{}},
  p2={\bTStspace(m) = \varnothing},
  p3={\TVspace'=\TVspace[x\mapsto \higher{m}{l}]},
  p4={\bTStspace' = \bTStspace{} [m\mapsto\higher{0}{l}]},
  c={\btcfgi{s=\bTStspace::\TSspace, i=\Alloca{\var{x}}{\type{\tau}{l}}} \btstepi \btcfgi{s=\bTStspace'::\TSspace, v=\TVspace'}}
} {
  \inferrule*[Lab={[Et-Alloca]}]
  {#1 \\ #2 \\\\ #3 \\ #4}
  {#5}
}

\ekvcSplit\brtrgemalloc{
  p1={m = \mathrm{next}(\TMenv,\THspace_{\mu'},\ty{}) \\ \TMenv(m) = \ty{}},
  p2={\THspace_{\mu'}(m) = \varnothing},
  p3={\TVspace'=\TVspace[x\mapsto\higher{m}{l}]},
  p4={\THspace' = \THspace[m\mapsto\higher{0}{l}]},
  p5={(\mu = \mu' \lor \mu'=0)},
  c={\btcfgi{i=\TgtMalloc{\var{x}}{\type{\tau}{l}}{\mu'}} \tstepi \btcfgi{h=\THspace', v=\TVspace'}}
} {
  \inferrule*[Lab={[Et-Malloc]}]
  {#1 \\ #2 \\\\ #3 \\ #4 \\ #5}
  {#6}
}

\ekvcSplit\brtrgefree{
  p1={\TVspace(x) = \bloc},
  p2={\bloc\in \Loc_{\THspace_{\mu_x}}},
  p3={\mu=\mu_x \lor \mu_x = 0},
  p4={\THspace(\bloc) \ne \varnothing},
  p5={\THspace' = \THspace[\bloc\mapsto\varnothing]},
  c={\btcfgi{i=\Free{\type{\tgtptr{\tau}{\mu_x}}{l}\var{x}}} \tstepi  \btcfgi{h=\THspace'}}
} {
  \inferrule*[Lab={[Et-Free]}]
  {#1 \\ #2 \\ #3 \\ #4 \\ #5}
  {#6}
}

\ekvcSplit\brtrgeload{
  p1={\TVspace(y) = \bloc},
  p2={m\ne0},
  p3={\mu_y \in\{\mu, 0\}},
  p4={\sirenmemory{\mu}(\bloc) = \bkappa},
  p5={\mathrm{isptr}(\tau) \implies \kappa\in\region{m}},
  p6={\TVspace'=\TVspace[x\mapsto \higher{\bkappa}{l_1}]},
  c={\btcfgi{i=\Load{\var{x}}{\type{\tau}{l_1}}{\type{\ptr{\tau}, \mu_y}{l_2}\var{y}}} \btstepi
       \btcfgi{v=\TVspace'}}
} {
  \inferrule*[Lab={[Et-Load]}]
  {#1 \\ #2 \\ #3 \\ #4 \\ #5 \\ #6}
  {#7}
}

\ekvcSplit\brtrgestore{
  p1={\configtwo{\TVspace}{e} \btstepe \bkappa},
  p2={\TVspace(y) = m},
  p3={\Menv(m) = \ty{t=\tau*,l=l_2} \implies \kappa \in \region{m}},
  p4={m\ne0},
  p5={\mu_y \in\{\mu, 0\}},
  p6={\sirenmemory{\mu}[m]\neq \varnothing},
  p7={\TMspace' = \TMspace [m\mapsto \higher{\bkappa}{l_2}]},
  c={\btcfgi{i=\Store{\type{\tau}{l_1} e}{\type{\tgtptr{\tau}{\mu_y}}{l_2} \var{y}}} \btstepi  \btcfgi{h=\THspace',s=\TSspace'}}
} {
  \inferrule*[Lab={[Et-Store]}]
  {#1 \\ #2 \\ #3 \\ #4 \\ #5 \\ #6 \\ #7}
  {#8}
}

\ekvcSplit\brtrgephi{
  p1={i\in [1, n]},
  p2={b_p = b_i},
  p3={\configtwo{\TVspace}{e_i} \btstepe {\bkappa}},
  p4={\bTVspace'={\bTVspace}[x\mapsto \higher{\bkappa}{l}]},
  c={\btcfgi{i=\PhiNode{\var{x}}{\type{\tau}{l}}{[b_1, e_1] \dots [b_n,e_n]}} \tstepi  \btcfgi{i=\Asgn{\var{x}}{e_i}}}
} {
  \inferrule*[Lab={[Et-Phi]}]
  {#1 \\ #2 \\ #3 \\ #4}
  {#5}
}

\ekvcSplit\brtrgegep{
  p1={\TVspace(y) = \bloc},
  p2={\configtwo{\TVspace}{e} \btstepe \bconst},
  p3={\bloc' = \bloc + \bconst},
  p4={\bloc'\in\region{m} \\ m'\in\modereg{m}},
  p5={\TVspace' = \TVspace[x \mapsto \bloc']},
  c={\btcfgi{i=\GEP{\var{x}}
        {\type{\tau_1}{l_1}}
        {\type{\tgtptr{\tau_1}{\mu_y}}{l_1}\var{y}}
        {\type{\tau_2}{l_2} e}} \btstepi  \btcfgi{v=\TVspace'}}
} {
  \inferrule*[Lab={[Et-Gep]}]
  {#1 \\ #2 \\ #3 \\\\ #4 \\ #5}
  {#6}
}

\ekvcSplit\brtrgenoplow{
  p1={~},
  c={\btcfgb{trg=\varnothing,body=\Seq{\Nop}{body}} \btstepb  \btcfgb{trg=\varnothing, body=body}}
} {
  \inferrule*[Lab={[Et-Nop-Low]}]
  {#1}
  {#2}
}

\ekvcSplit\brtrgenophigh{
  p1={~},
  c={\btcfgb{cur=\fbcur,body=[\Seq{\Nop}{body}]} \btstepb  \btcfgb{cur=\fbcur,body=body}}
} {
  \inferrule*[Lab={[Et-Nop-High]}]
  {#1}
  {#2}
}

\ekvcSplit\brtrgeinstructionlow{
p1={\btcfgi{i=i} \btstepi {\btcfgi{h=\THspace', s=\TSspace', v=\TVspace', e=\Elive', i=i'}}},
c={\btcfgb{trg=\varnothing, body=\Seq{i}{\body}} \btstepb  \btcfgb{h=\THspace', s=\TSspace', v=\TVspace', e=\Elive', trg=\varnothing, body=\Seq{i'}{\body}}}
} {
  \inferrule*[Lab={[Et-Instruction-Low]}]
  {#1}
  {#2}
}

\ekvcSplit\brtrgeinstructionhigh{
p1={\btcfgi{i=i} \btstepi {\btcfgi{h=\THspace', s=\TSspace', v=\TVspace', e=\Elive', i=i'}}},
c={\btcfgb{cur=\fbcur, body=[\Seq{i}{\body}]} \btstepb  \btcfgb{h=\THspace', s=\TSspace', v=\TVspace', e=\Elive', cur=\fbcur, body=[\Seq{i'}{\body}]}}
} {
  \inferrule*[Lab={[Et-Instruction-High]}]
  {#1}
  {#2}
}

\ekvcSplit\brtrgebrconditionallow{
  p1={\configtwo{\TVspace}{e} \tstepe \bdonst},
  p2={\higher{\bdonst}{l}\in\{0,1\}},
  c={\btcfgb{body=\BrCond{\type{i1}{l} e}{\var{b_0}}{ \var{b_1}}} \btstepb \btcfgb{trg=\varnothing, body=\Br{\var{b_d}}}}
} {
  \inferrule*[Lab={[Et-Br-Conditional-Low]}]
  {#1 \\ #2}
  {#3}
}

\ekvcSplit\brtrgebrconditionalraise{
  p1={\configtwo{\TVspace}{e} \tstepe \bdonst},
  p2={\higher{\bdonst}{l}\in\{\high{0},\high{1}\}},
  p3={\trg = \mathrm{ipdom}(cur)},
  c={\btcfgb{body=\BrCond{\type{i1}{l} e}{\var{b_0}}{ \var{b_1}}} \btstepb \btcfgb{body=\high{\Br{\var{b_d}}}}}
} {
  \inferrule*[Lab={[Et-Br-Conditional-Raise]}]
  {#1 \\ #2 \\ #3}
  {#4}
}

\ekvcSplit\brtrgebrconditionalhigh{
  p1={\configtwo{\TVspace}{e} \tstepe \bdonst},
  c={\btcfgb{body={[\BrCond{\type{i1}{l} e}{\var{b_0}}{ \var{b_1}}]}} \btstepb \btcfgb{cur=\fbcur,body=[\Br{\var{b_d}}]}}
} {
  \inferrule*[Lab={[Et-Br-Conditional-High]}]
  {#1}
  {#2}
}

\ekvcSplit\brtrgebrunconditionalhigh{
  p1={\langle b, \mu \rangle: body \in \TFspace(b_c)},
  p2={\fbprev'=\fbcur},
  p3={b \neq \trg},
  c={\btcfgb{body=[\Br{\var{b}}]} \btstepb \btcfgb{prev=\fbprev', cur=[b], body=[body]}}
} {
  \inferrule*[Lab={[Et-Br-Unconditional-High]}]
  {#1 \\ #2 \\ #3}
  {#4}
}

\ekvcSplit\brtrgebrunconditionallower{
  p1={\langle b, \mu \rangle: body \in \TFspace(b_c)},
  p2={\fbprev'=\fbcur},
  p3={b = \trg},
  c={\btcfgb{cur=\fbcur, body=\high{\Br{\var{b}}}} \btstepb \btcfgb{prev=\fbprev', cur=b, trg=\varnothing, body=body}}
} {
  \inferrule*[Lab={[Et-Br-Unconditional-Lower]}]
  {#1 \\ #2 \\ #3}
  {#4}
}

\ekvcSplit\brtrgebrunconditionallow{
  p1={\langle b, \mu \rangle: body \in \TFspace(b_c)},
  p2={\prev'=\cur},
  c={\btcfgb{body=\Br{\var{b}}} \btstepb \btcfgb{prev=b_p', cur=b, trg=\varnothing, body=body}}
} {
  \inferrule*[Lab={[Et-Br-Unconditional-Low]}]
  {#1 \\ #2}
  {#3}
}

\ekvcSplit\brtrgeretlow{
  p1={\configtwo{\TVspace}{e} \btstepe \bkappa},
  p2={\TXspace = \btcfgxcall{v=\TVspace', prev=\bprev', cur=\cur', trg=\varnothing} :: \TXspace'},
  p3={\TVspace''=\TVspace'[x\mapsto{\higher{\bkappa_r}{l}}},
  c={\btcfgb{s=\TStspace::\TSspace,
             trg=\varnothing,
             body=\Ret{\type{\tau}{l} e}} \btstepb 
     \btcfgb{cur=\cur, trg=\varnothing, v=\TVspace'', x=\TXspace', prev=\bprev', cur=\cur', 
        body={body}}}
} {
  \inferrule*[Lab={[Et-Ret-Low]}]
  {#1 \\ #2 \\ #3}
  {#4}
}

\ekvcSplit\brtrgerethigh{
  p1={\configtwo{\fbTVspace}{e} \btstepe \bkappa},
  p2={\TXspace = \btcfgxcall{v=\bTVspace', prev=\bprev', cur=\fbcur', body=\fbbody} :: \TXspace'},
  p3={\bTVspace''=\bTVspace'[x\mapsto{\higher{\bkappa_r}{l}}]\\},
  c={\btcfgb{v=\fbTVspace, s=\fbTStspace::\TSspace, cur=\fbcur, trg=\bigstar, body=[\Ret{\type{\tau}{l} e}]} \btstepb
      \btcfgb{s=\TSspace, v=\bTVspace'', prev=\bprev', cur=\fbcur', x=\TXspace',
        body=[body]}}
} {
  \inferrule*[Lab={[Et-Ret-High]}]
  {#1 \\ #2 \\ #3}
  {#4}
}

\ekvcSplit\brtrgecalllow{
  p1={\TFspace(g) = \FDef{\type{\tau}{l_g} g^{\mu_g,oc_g}}{\ty{t=\tau_1,l=l_1}x_1, \dots, \ty{t=\tau_n,l=l_n}x_n}{\configtwo{b_1}{\mu_1}:\body_1; \cdots; \configtwo{b_m}{\mu_m}:\body_m}},
  p2={\configtwo{\TVspace}{e_i} \btstepe \bkappa_i},
  p3={\TVspace' = \{x_i : \higher{\bkappa_i}{l_i}\}},
  p4={\fresh{\TStspace, \mu_g}},
  p5={\TXspace' = \btcfgxcall{trg=\varnothing} :: \TXspace},
  c={\btcfgb{trg=\varnothing, body=\Seq{\FCall{\var{x}}
          {\type{\tau}{l_f}\fname{f}}
          {e_1, e_2, \dots, e_n}}
        {body}} \btstepb \\ \btcfgb{s=\TStspace::\TSspace, v=\TVspace', x=\TXspace', prev=\varnothing, cur=b_1,
        body=body_1}}
} {
  \inferrule*[Lab={[Et-Call-Low]}]
  {#1 \\ #2 \\ #3 \\ #4 \\\\ #5}
  {#6}
}

\ekvcSplit\brtrgecallhigh{
  p1={\TFspace(g) = \FDef{\type{\tau}{l_g} g^{\mu_g,oc_g}}{\ty{t=\tau_1,l=l_1}x_1, \dots, \ty{t=\tau_n,l=l_n}x_n}{\configtwo{b_1}{\mu_1}:\body_1; \cdots; \configtwo{b_m}{\mu_m}:\body_m}},
  p2={\configtwo{\TVspace}{e_i} \btstepe \bkappa_i},
  p3={\TVspace' = \{x_i : \higher{\bkappa_i}{l_i}\}},
  p4={\fresh{\fbTStspace, \mu_g}},
  p5={\TXspace' = \btcfgxcall{v=\bTVspace, cur=\fbcur, prev=\bprev} :: \TXspace},
  c={\btcfgb{cur=\fbcur, v=\bTVspace, body=[\Seq{\FCall{\var{x}}
          {\type{\tau}{l_f}\fname{f}}
          {e_1, e_2, \dots, e_n}}
        {body}]} \btstepb \\ \btcfgb{s=\TStspace::\TSspace, v=\fbTVspace', x=\TXspace', prev=\varnothing, cur=[b_1],
        trg=\bigstar,
        body=[body_1]}}
} {
  \inferrule*[Lab={[Et-Call-High]}]
  {#1 \\ #2 \\ #3 \\ #4 \\\\ #5}
  {#6}
}

\ekvcSplit\brtrgeprogram{
  p1={\forall i.\ func_i = \FDef{\type{\tau}{l} f_i^{\mu_f,oc_f}}{\paramlist}{\overline{block}}},
  p2={\TFspace_f=\{f_i :func_i \mid \forall i\}},
  p3={\TFspace_b= \{b : func_i \mid \forall i,\ \forall b \in func_i\}},
  p4={\TFspace = \TFspace_f \cup \TFspace_b},
  p5={func_0 = \FDef{\type{\itt}{\bot} \main^{0, \ocno}}{}{block^*_{\main}}},
  p6={\forall j.\  x_{j} \in \TVspace},
  p7={\mathit{body_{init}}=(\Seq{\FCall{\var{x}}{\type{\itt}{\public} \main}{x_1,\ldots, x_m}}{\Ret{\var{x}}})},
  p8={\btcfgb{s=\varnothing, x=\varnothing, prev=\varnothing, cur=\varnothing, trg=\varnothing, body=\mathit{body_{init}}}
    \bttrstepb
    \btcfgb{h=\Hspace', s=\varnothing, v=\TVspace', x=\varnothing, prev=\varnothing, cur=\varnothing, body=\Ret{\var{x}}}},
  c={\langle \THspace, \TVspace, func_0, \dots, func_N \rangle \tstepp \configtwo{\THspace'}{\TVspace'(x)}}}{
  \inferrule*[Lab={[Et-Program]}]
  {#1 \\ #5 \\ #8 \\ #7}
  {#9}
}

\ekvcSplit\brtrgepreserve{
  p1={\TVspace' = \TVspace[x\mapsto \TVspace(y)]},
  c={\tcfgi{i=\Preserve{\var{x}}{\type{\tau}{l} \var{y}}} \tstepi  \tcfgi{v=\TVspace'}}
} {
  \inferrule*[Lab={[Et-Preserve]}]
  {#1}
  {#2}
}

\ekvcSplit\brtrgeecreate{
  p1={\TFspace(b_c) = \FDef{\ty{} f}{\paramlist}{\ldots, \langle b, \mu\rangle: body, \ldots}},
  p2={\Elive' = \Elive \cup \{\mu\}},
  p3={\THspace' = \bigcup_{\mu \in h\cup\Elive'} \THspace_\mu},
  p4={(\THspace_\mu\subseteq\THspace) \lor (\forall m\in\dom{\THspace_\mu}: \THspace_\mu(m)=\varnothing)},
  c={\btcfgi{mode=0, i={\EnclaveCreate{\mu}{ \var{b}}}} \tstepi  \btcfgi{mode=0, h=\THspace', e=\Elive'}}
} {
  \inferrule*[Lab={[Et-Ecreate]}]
  {#1 \\ #2 \\ #3}
  {#5}
}

\ekvcSplit\brtrgeekill{
  p1={\mu\in \Elive},
  p2={\Elive' = \Elive \setminus \{\mu\}},
  p3={\THspace' = \THspace\setminus\THspace_{\mu}},
  c={\btcfgi{mode=0, i=\EnclaveKill{\mu}} \tstepi  \btcfgi{mode=0, h=\THspace', e=\Elive'}}
} {
  \inferrule*[Lab={[Et-Ekill]}]
  {#1 \\ #2 \\ #3}
  {#4}
}

\ekvcSplit\brtrgeeenter{
  p1={\TFspace(b_c) = \FDef{\ty{} f^{0, \ocno}}{\paramlist}{\ldots, \langle b, \mu\rangle: body', \ldots}},
  p2={\mu\in{\Elive}},
  p3={\fresh{\TStspace, \mu}},
  p4={\TXspace' = \tcfgxenclave{mode=0} :: \TXspace},
  c={\btcfgb{mode=0, trg=\varnothing, body=\Seq{\EnclaveEnter{\var{b}}}{body}} \tstepb
       \btcfgb{trg=\varnothing, x=\TXspace', s=\TStspace::\TSspace, prev=b_c, cur=b, mode=\mu,
        body=body'}}
} {
  \inferrule*[Lab={[Et-Eenter]}]
  {#1 \\ #2 \\ #3 \\ #4}
  {#5}
}

\ekvcSplit\brtrgeeexit{
  p1={\TFspace(b_c) = \FDef{\ty{} f^{\mu_f}}{\paramlist}{\ldots, \langle b_c', 0\rangle: body', \ldots}},
  p2={\mu \ne 0},
  p3={\TXspace = \tcfgxenclave{prev=b_p', cur=b_c', mode=0} :: \TXspace'},
  c={\btcfgb{trg=\varnothing, s=\TStspace::\TSspace, body=\EnclaveExit} \tstepb  \btcfgb{trg=\varnothing, x=\TXspace', prev=b_p', cur=b_c', mode=0, body=body}}
} {
  \inferrule*[Lab={[Et-Eexit]}]
  {#1 \\ #2 \\ #3}
  {#4}
}

\ekvcSplit\brtrgeoret{
  p1={\configtwo{\TVspace}{e} \tstepe \kappa},
  p1={\TXspace = \tcfgxcall{kind=ocall, v=\TVspace', prev=\prev', cur=\cur'} :: \TXspace'},
  p3={\TVspace'' = \TVspace'[x\mapsto\kappa]},
  c={\btcfgb{mode=0, trg=\varnothing, body=\ORet{\type{\tau}{l} \var{x}}} \tstepb
       \btcfgb{v=\TVspace'', trg=\varnothing, x=\TXspace', prev=\prev', cur=\cur',
        body=body}}
} {
  \inferrule*[Lab={[Et-ORet]}]
  {#1 \\ #2 \\ #3}
  {#4}
}

\ekvcSplit\brtrgeocall{
  p1={\TFspace(f) = \FDef{\type{\tau}l_f l_{pc} \fname{f}^{0,\ocyes}}{x_0, x_1, \dots, x_n}{block_0; block_1; \dots block_m}},
  p2={block_i = \langle b, id\rangle : body_i},
  p3={\TVspace' = \{x_i : c_i\}},
  p4={\configtwo{\TVspace}{e_i} \tstepe c_i},
  p5={\fresh{\TStspace, 0}},
  p6={\TXspace' = \btcfgxcall{v=\{\}, trg=\varnothing, prev=\varnothing, cur=\varnothing, mode=0, body=\code{oret\ }\type{\tau}{l}\var{x}} :: \tcfgxcall{kind=ocall} :: \TXspace},
  p7={\mode \ne 0},
  c={\btcfgb{trg=\varnothing, body=\Seq{\OCall{\var{x}}
        {\type{\tau}{l_{ret}}\fname{f}}
        {e_1, e_2, \dots, e_n}}
      {body}} \tstepb \\
      \btcfgb{trg=\varnothing, s=\TStspace::\TSspace, v=\TVspace', x=\TXspace', prev=\varnothing, cur=b0, mode=0,
      body=body_0}}
} {
  \inferrule*[Lab={[Et-OCall]}]
  {#1 \\ #2 \\ #3 \\ #4 \\ #5 \\ #6 \\ #7}
  {#8}
}

\ekvcSplit\brtrgtconstlow{
  p1={\tau\in\{\mathrm{Int_1}, \mathrm{Int_{32}}\}},
  c={\ttctxe{} \tc  c : \type{\tau}{\bot}}
} {
  \inferrule*[Lab={[Tt-Const-Low]}]
  {#1}
  {#2}
}

\ekvcSplit\brtrgtconsthigh{
  p1={\tau\in\{\mathrm{Int_1}, \mathrm{Int_{32}}\} \\ l \nflowsto \bot},
  c={\ttctxe{} \tc  \fbconst : \type{\tau}{\bot}}
} {
  \inferrule*[Lab={[Tt-Const-High]}]
  {#1}
  {#2}
}

\ekvcSplit\brtrgtvar{
  p1={\TTypeEnv(x) = \type{\tau}{l}},
  c={\ttctxe{} \tc  \var{x} : \type{\tau}{l}}
} {
  \inferrule*[Lab={[Tt-Var]}]
  {#1}
  {#2}
}

\ekvcSplit\brtrgtasgn{
  p1={\mu\neq0 \lor l\sqsubseteq\public},
  p2={\TTypeEnv(x) = \type{\tau}{l}},
  p3={\TTypeEnv \tc e : \type{\tau}{l}},
  p4={\pc \flowsto l},
  c={\ttctxi{} \tc  \Asgn{\var{x}}{\type{\tau}{l}e}}
} {
  \inferrule*[Lab={[Tt-Asgn]}]
  {#1 \\ #2 \\\\ #3 \\ #4}
  {#5}
}

\ekvcSplit\brtrgtadd{
  p1={\mu\neq0 \lor l\sqsubseteq\public},
  p2={\TTypeEnv(x) = \type{\tau}{l}},
  p3={\forall i: \TTypeEnv \vdash e_i : \type{\tau}{l}},
  p4={\pc\sqsubseteq l},
  c={\ttctxi{} \tc  \Add{\var{x}}{\type{\tau}{l} e_1}{e_2}}
} {
  \inferrule*[Lab={[Tt-Add]}]
  {#1 \\ #2 \\\\ #3 \\ #4}
  {#5}
}

\ekvcSplit\brtrgtalloca{
  p1={\mu\neq0 \lor l\sqsubseteq\public},
  p2={\TTypeEnv(x) = \type{\tgtptr{\tau}{\mu}}{l}},
  p3={\pc \flowsto l},
  c={\ttctxi{} \tc  \Alloca{\var{x}}{\type{\tau}{l}}}
} {
  \inferrule*[Lab={[Tt-Alloca]}]
  {#1 \\ #2 \\ #3}
  {#4}
}

\ekvcSplit\brtrgtmalloc{
  p1={\mu\neq0 \lor l\sqsubseteq\public},
  p2={\TTypeEnv(x) = \type{\tgtptr{\tau}{\mu_x}}{l}},
  p3={\pc \flowsto l},
  c={\ttctxi{} \tc  \TgtMalloc{\var{x}}{\type{\tau}{l}}{\mu_x}}
} {
  \inferrule*[Lab={[Tt-Malloc]}]
  {#1 \\ #2 \\ #3}
  {#4}
}

\ekvcSplit\brtrgtfree{
  p1={\mu\neq0 \lor l\sqsubseteq\public},
  p2={\TTypeEnv(x) = \type{\tgtptr{\tau}{\mu_x}}{l}},
  p3={\pc \flowsto l},
  c={\ttctxi{} \tc  \Free{\type{\tgtptr{\tau}{\mu_x}}{l}}{\var{x}}}
} {
  \inferrule*[Lab={[Tt-Free]}]
  {#1 \\ #2 \\ #3}
  {#4}
}

\ekvcSplit\brtrgtload{
  p1={\mu\neq0 \lor l_1\sqsubseteq\public},
  p2={\TTypeEnv(x)=\type{\tau}{l_1}},
  p3={\TTypeEnv \tc y : \type{\type{\ptr{\tau}}{\mu'}}{l_2}},
  p4={(\mu=\mu') \lor (\mu'=0)},
  p5={\type{\tau}{l_2} \sqsubseteq \type{\tau}{l_1}},
  p6={\pc \sqsubseteq l_1},
  c={\ttctxi{} \tc  \Load{\var{x}}{\type{\tau}{l_1}}{\type{\type{\ptr{\tau}}{\mu'}}{l_2}\var{y}}}
} {
  \inferrule*[Lab={[Tt-Load]}]
  {#1 \\ #2 \\ #3 \\\\ #4 \\ #5 \\ #6}
  {#7}
}

\ekvcSplit\brtrgtstore{
  p1={\mu\neq0 \lor l_2\sqsubseteq\public},
  p2={\TTypeEnv \tc e : \type{\tau}{l_1}},
  p3={\TTypeEnv \tc \var{y} : \type{\type{\ptr{\tau}}{\mu_y}}{l_2}},
  p4={(\mu=\mu_y) \lor (\mu_y=0)},
  p5={\type{\tau}{l_1} \sqsubseteq \type{\tau}{l_2}},
  p6={\pc \sqsubseteq l_2},
  c={\ttctxi{} \tc \Store{\type{\tau}{l_1}e}{\type{\type{\ptr{\tau}}{\mu_y}}{l_2}\var{y}}}
} {
  \inferrule*[Lab={[Tt-Store]}]
  {#1 \\ #2 \\ #3 \\\\ #4 \\ #5 \\ #6}
  {#7}
}

\ekvcSplit\brtrgtphi{
  p1={\mu\neq0 \lor l\sqsubseteq\public},
  p2={\TTypeEnv(x) = \type{\tau}{l}},
  p3={\forall i.\ \TTypeEnv \tc e_i : \type{\tau}{l} },
  p4={\forall i.\ \TBenv(b_i) \flowsto l},
  p5={\pc\flowsto l},
  c={\ttctxi{} \tc  \PhiNode{\var{x}}{\type{\tau}{l}}{[b_0, e_0] \dots [b_n,e_n]}}
} {
  \inferrule*[Lab={[Tt-Phi]}]
  {#1 \\ #2 \\ #3 \\\\ #4 \\ #5}
  {#6}
}

\ekvcSplit\brtrgtgep{
  p1={\mu\neq0 \lor l_1\sqsubseteq\public},
  p2={\TTypeEnv(x) = \TTypeEnv(y) = \type{\tgtptr{\tau_1}{\mu_y}}{l_1}},
  p3={l_2 \sqsubseteq l_1},
  p4={\tau_2 \in \{\mathrm{Int_{32}}, \mathrm{Int_1}\}},
  p5={\pc\sqsubseteq l_1},
  p6={(\mu_y = \mu \lor \mu_y = 0)},
  c={\ttctxi{} \tc  \GEP{\var{x}}{\type{\tau_1}{l_1}}{\type{\tgtptr{\tau_1}{\mu_y}}{l_1}\var{y}}{\type{\tau_2}{l_2} e}}
} {
  \inferrule*[Lab={[Tt-Gep]}]
  {#1 \\ #2 \\ #3 \\\\ #4 \\ #5}
  {#7}
}

\ekvcSplit\brtrgtnop{
  p1={~},
  c={\ttctxb{} \tc  \Nop}
} {
  \inferrule*[Lab={[Tt-Nop]}]
  {#1}
  {#2}
}

\ekvcSplit\brtrgtcall{
  p1={\TFenv(g) = \FHead{\type{\tau}{l_g} \fname{g}^{\mu_g,oc_g}}{\type{\tau_1}{l_1}x_1, \type{\tau_2}{l_2}x_2, \dots, \type{\tau_n}{l_n}x_n}{}},
  p2={\mu\neq0 \lor l_g\sqsubseteq\public},
  p3={\TTypeEnv(x) = \type{\tau}{l_g}},
  p4={l_x = l_g},
  p5={\pc \sqsubseteq l_g},
  p6={\forall i:\TTypeEnv \tc e_i : \type{\tau_i}{l_i}},
  p7={\mu_g = \mu},
  p8={oc_g = {\ocyes} \lor oc = {\ocno}},
  c={\ttctxb{} \tc  \FCall{\var{x}}{\type{\tau}{l_g} \fname{g}}{\type{\tau_1}{l_1}e_1, \type{\tau_2}{l_2}e_2\dots, \type{\tau_n}{l_n}e_n}}
} {
  \inferrule*[Lab={[Tt-Call]}]
  {#1 \\ #2 \\ #3 \\ #5 \\ #6 \\ #7 \\ #8}
  {#9}
}

\ekvcSplit\brtrgtret{
  p1={\TTypeEnv \tc e : \type{\tau}{l_f}},
  p2={l=l_f},
  p3={\mu \neq 0 \lor l \flowsto \bot},
  c={\ttctxb{} \tc  \Ret{\type{\tau}{l_f} e}}
} {
  \inferrule*[Lab={[Tt-Ret]}]
  {#1 \\ #3}
  {#4}
}

\ekvcSplit\brtrgtbody{
  p1={\ttctxb{} \tc  i},
  p2={\ttctxb{} \tc  body},
  c={\ttctxb{} \tc  \Seq{i}{body}}
} {
  \inferrule*[Lab={[Tt-Body]}]
  {#1 \\\\ #2}
  {#3}
}

\ekvcSplit\brtrgtbrunconditional{
  p1={\TBenv(b) = (\mu, l_b)},
  p2={\pc \sqsubseteq_{b} l_b},
  c={\ttctxb{} \tc  \Br{\var{b}}}
} {
  \inferrule*[Lab={[Tt-Br-Unconditional]}]
  {#1 \\ #2}
  {#3}
}

\ekvcSplit\brtrgtbrconditional{
  p1={\TTypeEnv \tc e : \type{\mathrm{Int_1}}{l}},
  p2={\forall i: \TBenv(b_i) = (\mu, l_i)},
  p3={\forall i: \pc \sqcup l\downarrow_{b_t} \sqsubseteq_{b_i} l_i},
  p4={b_t = \text{ ipdom }(b_c)},
  c={\ttctxb{} \tc  \BrCond{\type{\mathrm{Int_1}}{l}e}{\var{b_1}}{\var{b_2}}}
} {
  \inferrule*[Lab={[Tt-Br-Conditional]}]
  {#1 \\ #2 \\\\ #3 \\ #4}
  {#5}
}

\ekvcSplit\brtrgtfunction{
  p1={\TFenv(f) = \FHead{\type{\tau}{\lf} \fname{f}^{\mu_f, oc}}{\paramlist}{}\{\configtwo{b_1}{\mu_1} : \body_1; \configtwo{b_2}{\mu_2}; \cdots\}},
  p2={\forall i: \pc_i = \TBenv(b_i)},
  p3={\forall i: l_f \sqsubseteq \pc_i},
  p4={\forall i: \TTypeEnv_f, \mu_i, oc, \pc_i, b_i, l_f \tc body_i},
  p5={\mu_f=0 \lor \forall i: \mu_f = \mu_i},
  p6={\mu_f \neq 0 \lor \lf \flowsto \bot},
  p7={(oc = \code{\ocno}) \lor (oc=\ocyes \land \forall i:\mu_i=0)},
  c={\ttctxf{} \tc  \fname{f}}
} {
  \inferrule*[Lab={[Tt-Function]}]
  {#1 \\ #2 \\ #3 \\ #4 \\ #5 \\ #6 \\ #7}
  {#8}
}

\ekvcSplit\brtrgtprogram{
  p1={\TFspace(f_0)=\FDef{\type{\mathrm{\itt}}{\public} \fname{main^{0, \ocno}}}{\arglist}{blocks}},
  p2={\forall~i.~\TTypeEnv_{f_i} = \lfloor \TTypeEnv \rfloor_{f_i}},
  p3={\forall~i.~\TBenv_{f_i} = \lfloor \TBenv \rfloor_{f_i}},
  p4={\forall~i.~\TTypeEnv_{f_i}, \tc f_i},
  c={\ttctxp{} \tc  f_0, f_1, \dots,f_n}
} {
  \inferrule*[Lab={[Tt-Program]}]
  {#1 \\ #2 \\ #4}
  {#5}
}

\ekvcSplit\brtrgteenter{
  p1={\TBenv(b) = (\mu_b, l_b)},
  p2={\mu_b \ne 0},
  p3={\mu = 0},
  p4={\pc \sqsubseteq l_b},
  c={\ttctxb{} \tc  \EnclaveEnter{\var{b}}}
} {
  \inferrule*[Lab={[Tt-Eenter]}]
  {#1 \\ #2 \\ #3 \\ #4}
  {#5}
}

\ekvcSplit\brtrgteexit{
  p1={\mu \ne 0},
  c={\ttctxb{} \tc  \EnclaveExit}
} {
  \inferrule*[Lab={[Tt-Eexit]}]
  {#1}
  {#2}
}

\ekvcSplit\brtrgtecreate{
  p1={\TBenv(b) = (\mu_e, l_b)},
  p2={\mu = 0},
  p4={\mu_e\ne 0},
  p5={pc\sqsubseteq\public},
  c={\ttctxb{} \tc  \EnclaveCreate{ \mu_e}{\var{b}}}
} {
  \inferrule*[Lab={[Tt-Ecreate]}]
  {#1 \\ #2 \\ #3 \\ #4}
  {#5}
}

\ekvcSplit\brtrgtekill{
  p1={\mu = 0},
  p2={\mu_e\ne 0},
  p3={pc\sqsubseteq\public},
  c={\ttctxb{} \tc  \EnclaveKill{\mu_e}}
} {
  \inferrule*[Lab={[Tt-Ekill]}]
  {#1 \\ #2 \\ #3}
  {#4}
}

\ekvcSplit\brtrgtocall{
  p1={\TFenv(f) = \FHead{\type{\tau}{l_g} \fname{g}^{\mu_g,oc_g}}{\type{\tau_1}{l_1}x_1, \type{\tau_2}{l_2}x_2, \dots, \type{\tau_n}{l_n}x_n}{}},
  p2={\TTypeEnv(x) = \type{\tau}{l_x}},
  p3={l_x = l_g},
  p4={\pc \sqsubseteq l_g},
  p5={\forall i:\TTypeEnv \tc e_i : \type{\tau_i}{l_i}},
  p6={\mu_g = 0 \\ \mu \neq 0},
  p7={oc_g=oc=\ocyes},
  c={\ttctxb{} \tc  \OCall{\var{x}}{\type{\tau}{l_g} \fname{g}}{\type{\tau_1}{l_1}e_1, \type{\tau_2}{l_2}e_2\dots, \type{\tau_n}{l_n}e_n}}
} {
  \inferrule*[Lab={[Tt-OCall]}]
  {#1 \\ #2 \\ #3 \\ #4 \\ #5 \\ #6 \\ #7}
  {#8}
}

\ekvcSplit\brtrgtoret{
  p1={\TTypeEnv \tc e : \type{\tau}{l}},
  p2={\type{\tau}{l} \flowsto \type{\tau}{\lf}},
  p3={\mu=0},
  p4={l\flowsto\bot},
  c={\ttctxb{} \tc \ORet{\type{\tau}{l} e}}
} {
  \inferrule*[Lab={[Tt-ORet]}]
  {#1 \\ #2 \\ #3 \\ #4}
  {#5}
}

\ekvcSplit\brtrgtpreserve{
  p1={\mu = 0},
  p2={\TTypeEnv(x) = \type{\tau}{l}},
  p3={\TTypeEnv \tc \var{y} : \type{\tau}{l}},
  p4={\pc\sqsubseteq l},
  c={\ttctxb{} \tc  \Preserve{\var{x}}{\type{\tau}{l} \var{y}}}
} {
  \inferrule*[Lab={[Tt-Preserve]}]
  {#1 \\ #2 \\ #3 \\ #4}
  {#5}
}

\ekvcSplit\brtrgtmemoryconfig{
  p1={\forall m\in{\Loc}.\ (\TSspace\cup\THspace)(m) = \varnothing \lor \TMenv \tc (\TSspace\cup\THspace)(m) : \TMenv(m)},
  c={\TMenv \tc  \configtwo{\TSspace}{\THspace}}}{
  \inferrule*[Lab={[Tt-Memory-Config]}]
  {#1}
  {#2}
}

\ekvcSplit\brtrgtvariables{
  p1={\forall x\in\Var.\ \TMenv,\TTypeEnv \tc \TVspace(x) : \TTypeEnv(x)},
  c={\TMenv, \TTypeEnv \tc  \TVspace}}{
  \inferrule*[Lab={[Tt-Variables]}]
  {#1}
  {#2}
}

\ekvcSplit\brtrgtcallstackbase{
  p1={\TXspace = \varnothing},
  c={\TMenv, \mu_f, \lf \tc  \TXspace}}{
  \inferrule*[Lab={[Tt-Callstack-Base]}]
  {#1}
  {#2}
}

\ekvcSplit\brtrgtelive{
  p1={\forall\mu:(\mu\in \Elive \Leftrightarrow \THspace_\mu\subseteq \THspace)},
  c={\tc \configtwo{\Elive}{\THspace}}}{
  \inferrule*[Lab={[Tt-E-live]}]
  {#1}
  {#2}
}

\ekvcSplit\brtrgtfunctions{
  p1={\forall f\in \Fcn.\ ~ \tc \TFspace(f)},
  c={ \tc  \TFspace}}{
  \inferrule*[Lab={[Tt-Functions]}]
  {#1}
  {#2}
}

\ekvcSplit\brtrgtresultconfig{
  p1={\TMenv \tc \configtwo{\varnothing}{\THspace}},
  p2={\tc c : \type{\itt}{\public}},
  c={\TMenv \tc \configtwo{\THspace}{c}}
} {
  \inferrule*[Lab={[Tt-Result-Config]}]
  {#1 \\\\ #2}
  {#3}
}

\ekvcSplit\brtrgtexpressionconfig{
  p1={\TMenv, \TTypeEnv \tc \TVspace},
  p2={\TMenv, \TTypeEnv \tc e : \ty{}},
  c={\TMenv,\TTypeEnv \tc  \tcfge{}}
} {
  \inferrule*[Lab={[Tt-Expression-Config]}]
  {#1 \\ #2}
  {#3}
}

\ekvcSplit\brtrgtinstructionconfig{
  p1={\TMenv \tc \configtwo{\TSspace}{\THspace}},
  p2={\TMenv, \TTypeEnv \tc \TVspace},
  p3={\trgtypingcontextcom \tc{i}},
  p4={\mu = \mu_{b_c}},
  p5={\tc \configtwo{\Elive}{\THspace}},i
  c={\trgtypingcontextconfig \tc  \btcfgi{mode=\mu_{b_c}, i=i}}}{
  \inferrule*[Lab={[Tt-Instruction-Config]}]
  {#1 \\ #2 \\ #3 \\ #4}
  {#6}
}

\ekvcSplit\brtrgtbodyconfiglow{
  p1={\TMenv \tc \configtwo{\TSspace}{\THspace}},
  p2={\TMenv, \TTypeEnv \tc \TVspace},
  p3={\TMenv, \mu, \lf \tc \TXspace},
  p4={ \tc \TFspace},
  p5={\tc \configtwo{\Elive}{\THspace}},i
  p6={\trgtypingcontextcom \tc{body}},
  p7={\mu = \mu_{b_c}},
  c={\trgtypingcontextconfig \tc  \btcfgb{mode=\mu_{b_c}, trg=\varnothing, body=body}}}{
  \inferrule*[Lab={[Tt-Body-Config-Low]}]
  {#1 \\ #2 \\ #3 \\ #4 \\ #6 \\ #7}
  {#8}
}

\ekvcSplit\brtrgtbodyconfighigh{
  p1={\TMenv \tc \configtwo{\TSspace}{\THspace}},
  p2={\TMenv, \TTypeEnv \tc \TVspace},
  p3={\TMenv, \mu, \lf \tc \TXspace},
  p4={ \tc \TFspace},
  p5={\tc \configtwo{\Elive}{\THspace}},i
  p6={\trgtypingcontextcom \tc{body}},
  p7={\mu = \mu_{b_c}},
  p8={l \nflowsto \bot \\ \declat{l}{\trg}\flowsto\pc' \\ \pc\flowsto\pc'},
  c={\trgtypingcontextconfig \tc  \btcfgb{cur=\fbcur, mode=\mu_{b_c}, body=[body]}}}{
  \inferrule*[Lab={[Tt-Body-Config-High]}]
  {#1 \\ #2 \\ #3 \\ #4 \\ #6 \\ #7 \\ #8}
  {#9}
}

\ekvcSplit\brtrgtprogramconfig{
  p1={ \tc p},
  p2={\TMenv \tc \THspace},
  p3={\TMenv, \TTypeEnv \tc \TVspace},
  c={\TMenv, \TTypeEnv \tc  \tcfgp{}}}{
  \inferrule*[Lab=Tt-Program-Config]
  {#1 \\ #2 \\ #3}
  {#4}
}

\ekvcSplit\brtrgeqvconst{
  p1={\kappa_1 = \kappa_2 \lor (\isbr{\bkappa_1} \land \isbr{\bkappa_2})},
  c={\bkappa_1 \lequiv \bkappa_2}}{
  \inferrule*[Lab={[TEqv-Const]}]
  {#1}
  {#2}
}

\ekvcSplit\brtrgeqvvar{
  c={x\lequiv x}
} {
  \inferrule*[Lab={[TEqv-Var]}]
  {~}
  {#1}
}

\ekvcSplit\brtrgeqvblocklabels{
  p1={\blab_1 = \blab_2 \lor (\isbr{\blab_1} \land \isbr{\blab_2})},
  c={\blab_1 \lequiv \blab_2}}{
  \inferrule*[Lab={[TEqv-Block-Labels]}]
  {#1}
  {#2}
}

\ekvcSplit\brtrgeqvmemories{
  p1={\bkappa_1 \lequiv \bkappa_2 \\\\
      \lor\ (\bkappa_1 = \varnothing \land \isbr{\bkappa_2})\\\\
      \lor\ (\isbr{\bkappa_1} \land\ \bkappa_2 = \varnothing)},
  c={\bkappa_1 \mequiv \bkappa_2}}{
  \inferrule*[Lab={[TEqv-Memories]}]
  {#1}
  {#2}
}

\ekvcSplit\brtrgeqvbodylow{
  p1={\bbody_1 = \Seq{i_1}{\body_{1,t}} \\},
  p2={\bbody_2 = \Seq{i_2}{\body_{2,t}} \\\\},
  p3={i_1 \lequiv i_2 \\},
  p4={\body_{1,t} = \body_{2,t}},
  c={\bbody_1 \lequiv \bbody_2}
} {
  \inferrule*[Lab={[TEqv-Body-Low]}]
  {#1 #2 #3 #4}
  {#5}
}

\ekvcSplit\brtrgeqvbodyhigh{
  p1={~},
  c={\fbbody_1 \lequiv \fbbody_2}
} {
  \inferrule*[Lab={[TEqv-Body-High]}]
  {#1}
  {#2}
}

\ekvcSplit\brtrgeqvasgn{
  p1={e_1 \lequiv e_2},
  c={\biglequiv{\Asgn{\var{x}}{\type{\tau}{l} e_1}}
               {\Asgn{\var{x}}{\type{\tau}{l} e_2}}}
} {
  \inferrule*[Lab={[TEqv-Asgn]}]
  {#1}
  {#2}
}

\ekvcSplit\brtrgeqvadd{
  p1={e_1 \lequiv e_2},
  p2={e_3 \lequiv e_4},
  c={\biglequiv{\Add{\var{x}}{\type{\tau}{l} e_1}{e_3}}
               {\Add{\var{x}}{\type{\tau}{l} e_2}{e_4}}}
} {
  \inferrule*[Lab={[TEqv-Add]}]
  {#1 \\ #2}
  {#3}
}

\ekvcSplit\brtrgeqvload{
  p1={y_1 \lequiv y_2},
  c={\biglequiv{\Load{\var{x}}{\type{\tau}{l_1}}{\type{\ptr{\tau}}{l_2}y_1}}
               {\Load{\var{x}}{\type{\tau}{l_1}}{\type{\ptr{\tau}}{l_2}y_2}}}
} {
  \inferrule*[Lab={[TEqv-Load]}]
  {#1}
  {#2}
}

\ekvcSplit\brtrgeqvstore{
  p1={e_1 \lequiv e_2 \\ x_1\lequiv x_2},
  c={\biglequiv{\Store{\type{\tau}{l_1} e_1}{\type{\ptr{\tau}}{l_2} x_1}}
               {\Store{\type{\tau}{l_1} e_2}{\type{\ptr{\tau}}{l_2} x_2}}}}{
  \inferrule*[Lab={[TEqv-Store]}]
  {#1}
  {#2}
}

\ekvcSplit\brtrgeqvmalloc{
  p1={~},
  c={\biglequiv{\Malloc{\var{x}}{\type{\tau}{l}}}{\Malloc{\var{x}}{\type{\tau}{l}}}}
} {
  \inferrule*[Lab={[TEqv-Malloc]}]
  {#1}
  {#2}
}

\ekvcSplit\brtrgeqvalloca{
  p1={~},
  c={\biglequiv{\Alloca{\var{x}}{\type{\tau}{l}}}{\Alloca{\var{x}}{\type{\tau}{l}}}}
} {
  \inferrule*[Lab={[TEqv-Alloca]}]
  {#1}
  {#2}
}

\ekvcSplit\brtrgeqvfree{
  p1={y_1\lequiv y_2},
  c={\biglequiv{\Free{\type{\tau*}{l}y_1}}{\Free{\type{\tau*}{l}y_2}}}
} {
  \inferrule*[Lab={[TEqv-Free]}]
  {#1}
  {#2}
}

\ekvcSplit\brtrgeqvphi{
  p1={\forall i.\ \blab_{1,i}\lequiv\blab_{2,i} \land e_{1,i} \lequiv e_{2,1}},
  c={\biglequiv{\PhiNode{\var{x}}{\type{\tau}{l}}{[b_{1,1}, e_{1,1}] \dots [b_{1,n},e_{1,n}]}}
               {\PhiNode{\var{x}}{\type{\tau}{l}}{[b_{2,1}, e_{2,1}] \dots [b_{2,n},e_{2,n}]}}}
} {
  \inferrule*[Lab={[TEqv-Phi]}]
  {#1}
  {#2}
}

\ekvcSplit\brtrgeqvgep{
  p1={y_1 \lequiv y_2},
  p2={e_1 \lequiv e_2},
  c={\biglequiv{\GEP{\var{x}}{\ty{t=\tau_1,l=l_1}}{\ty{t=\tau*_1,l=l_1}y_1}{\ty{t=\tau_2,l=l_2}e_1}}
               {\GEP{\var{x}}{\ty{t=\tau_1,l=l_1}}{\ty{t=\tau*_1,l=l_1}y_2}{\ty{t=\tau_2,l=l_2}e_2}}}
} {
  \inferrule*[Lab={[TEqv-GEP]}]
  {#1 \\ #2}
  {#3}
}

\ekvcSplit\brtrgeqvnop{
  p1={~},
  c={\Nop\lequiv\Nop}
} {
  \inferrule*[Lab={[TEqv-Nop]}]
  {#1}
  {#2}
}

\ekvcSplit\brtrgeqvcall{
  p1={\forall i.\ e_{1,i} \lequiv e_{2,i}},
  c={\biglequiv{\FCall{\var{x}}{{\ty{l=\lg}} g}{\ty{t=\tau_1,l=1_1}e_{1,1}, \dots, \ty{t=\tau_n,l=l_n}e_{1,n}}}
               {\FCall{\var{x}}{{\ty{l=\lg}} g}{\ty{t=\tau_1,l=1_1}e_{2,1}, \dots, \ty{t=\tau_n,l=l_n}e_{1,n}}}}
} {
  \inferrule*[Lab={[TEqv-Call]}]
  {#1}
  {#2}
}

\ekvcSplit\brtrgeqvret{
  p1={e_1 \lequiv e_2},
  c={\biglequiv{\Ret{\type{\tau}{l} e_1}}{\Ret{\type{\tau}{l} e_2}}}
} {
  \inferrule*[Lab={[TEqv-Ret]}]
  {#1}
  {#2}
}

\ekvcSplit\brtrgeqvbrconditional{
  p1={e_1 \lequiv e_2},
  c={\biglequiv{\BrCond{\type{\ione}{l} e}{\var{b_0}}{\var{b_1}}}
               {\BrCond{\type{\ione}{l} e}{\var{b_0}}{\var{b_1}}}}
} {
  \inferrule*[Lab={[TEqv-Br-Conditional]}]
  {#1}
  {#2}
}

\ekvcSplit\brtrgeqvbrunconditional{
  p1={~},
  c={\Br{\var{b}}\lequiv\Br{\var{b}}}
} {
  \inferrule*[Lab={[TEqv-Br-Unconditional]}]
  {#1}
  {#2}
}

\ekvcSplit\brtrgeqvheap{
  p1={\forall m\in \Loc.\ \THspace_1(m) \mequiv \THspace_2(m)},
  c={\THspace_1 \lequiv \THspace_2}}{
  \inferrule*[Lab={[TEqv-Heap]}]
  {#1}
  {#2}
}

\ekvcSplit\brtrgeqvstacklow{
  p1={\forall m\in \Loc.\ \Stspace_1(m) \mequiv \Stspace_2(m)},
  p2={\Sigma_1 \lequiv \Sigma_2},
  c={\Stspace_1::\TSspace_1 \lequiv \Stspace_2::\TSspace_2}}{
  \inferrule*[Lab={[TEqv-Stack-Low]}]
  {#1 \\ #2}
  {#3}
}

\ekvcSplit\brtrgeqvstackhigh{
  p1={\TSspace_1 \lequiv \bStspace_2::\TSspace_2},
  c={\fbStspace_1::\TSspace_1 \lequiv \bStspace_2::\TSspace_2}}{
  \inferrule*[Lab={[TEqv-Stack-High]}]
  {#1}
  {#2}
}

\ekvcSplit\brtrgeqvvariables{
  p1={\forall x\in\Var.\ \TVspace_1(x) \mequiv \TVspace_2(x)},
  c={\TVspace_1 \lequiv \TVspace_2}}{
  \inferrule*[Lab={[TEqv-Variables]}]
  {#1}
  {#2}
}

\ekvcSplit\brtrgeqvcallstacklow{
  p1={\TXspace_1=\btcfgxcall{v=\TVspace'_1,prev=\bprev_1,cur=\cur,body=\body}::\TXspace_{1,t}},
  p2={\TXspace_2=\btcfgxcall{v=\TVspace'_2,prev=\bprev_2,cur=\cur,body=\body}::\TXspace_{2,t}},
  p3={\TVspace_1\lequiv\TVspace_2},
  p4={\bprev_1\lequiv\bprev_2},
  p5={\tcfgvx{v=\TVspace'_1,x=\TXspace_{1,t}}\lequiv\tcfgvx{v=\TVspace'_2,x=\TXspace_{2,t}}},
  c={\tcfgvx{x=\TXspace_1,v=\TVspace_1}\lequiv\tcfgvx{x=\TXspace_2,v=\TVspace_2}}
} {
  \inferrule*[Lab={[TEqv-Callstack-Low]}]
  {#1 \\ #2 \\\\ #3 \\ #4 \\ #5}
  {#6}
}

\ekvcSplit\brtrgeqvcallstackhigh{
  p1={\TXspace_1=\btcfgxcall{v=\bVspace'_1,prev=\bprev_1,cur=\fbcur,body=\fbbody}::\TXspace_{1,t}},
  p2={\tcfgvx{v=\bVspace'_1,x=\TXspace_{1,t}}\lequiv\tcfgvx{v=\bVspace_2,x=\TXspace_2}},
  c={\tcfgvx{x=\TXspace_1,v=\fbVspace_1}\lequiv\tcfgvx{x=\TXspace_2,v=\bVspace_2}}
} {
  \inferrule*[Lab={[TEqv-Callstack-High]}]
  {#1 \\ #2}
  {#3}
}

\ekvcSplit\brtrgeqvcallstackbase{
  p1={\TVspace_1\lequiv\TVspace_2},
  c={\tcfgvx{x=\varnothing,v=\TVspace_1}\lequiv\tcfgvx{x=\varnothing,v=\TVspace_2}}
} {
  \inferrule*[Lab={[TEqv-Callstack-Base]}]
  {#1}
  {#2}
}

\ekvcSplit\brtrgeqvexpressionconfig{
  p1={\TVspace_1 \lequiv \TVspace_2},
  p2={e_1 \lequiv e_2},
  c={\btcfge{v=\TVspace_1, e=e_1} \lequiv \btcfge{v=\TVspace_2, e=e_2}}
} {
  \inferrule*[Lab={[TEqv-Expression-Config]}]
  {#1 \\ #2}
  {#3}
}

\ekvcSplit\brtrgeqvinstructionconfig{
  p1={\THspace_1 \lequiv \THspace_2},
  p2={\TSspace_1 \lequiv \TSspace_2},
  p3={\TVspace_1 \lequiv \TVspace_2},
  p4={\bprev_1 \lequiv \bprev_2},
  p5={i_1 \lequiv i_2},
  c={\btcfgi{h=\THspace_1, s=\TSspace_1, v=\TVspace_1, prev=\bprev_1, i=i_1} \lequiv
     \btcfgi{h=\THspace_2, s=\TSspace_2, v=\TVspace_2, prev=\bprev_2, i=i_2}}
}
{
  \inferrule*[Lab={[TEqv-Instruction-Config]}]
  {#1 \\ #2 \\ #3 \\\\ #4 \\ #5}
  {#6}
}

\ekvcSplit\brtrgeqvbodyconfiglow{
  p1={\THspace_1 \lequiv \THspace_2},
  p2={\TSspace_1 \lequiv \TSspace_2},
  p3={\tcfgvx{v=\TVspace_1,x=\TXspace_1} \lequiv \tcfgvx{v=\TVspace_2,x=\TXspace_2}},
  p4={\bprev_1 \lequiv \bprev_2},
  p5={body_1 \lequiv body_2},
  p6={\isub{body_1}},
  c={\biglequiv{\btcfgb{h=\THspace_1, s=\TSspace_1, v=\TVspace_1, x=\TXspace_1, prev=\bprev_1, cur=\cur, trg=\varnothing, body=body_1}}
               {\btcfgb{h=\THspace_2, s=\TSspace_2, v=\TVspace_2, x=\TXspace_2, prev=\bprev_2, cur=\cur, trg=\varnothing, body=body_2}}}
} {
  \inferrule*[Lab={[TEqv-Body-Config-Low]}]
  {#1 \\ #2 \\ #3 \\\\ #4 \\ #5 \\ #6}
  {#7}
}

\ekvcSplit\brtrgeqvbodyconfighigh{
  p1={\THspace_1 \lequiv \THspace_2},
  p2={\TSspace_1 \lequiv \TSspace_2},
  p3={\TXspace_1\lequiv\TXspace_2},
  p4={\TVspace_1\concat\flatten{\TXspace_1} \lequiv \TVspace_2\concat\flatten{\TXspace_2}},
  c={\biglequiv{\btcfgb{h=\THspace_1, s=\TSspace_1, v=\TVspace_1, x=\TXspace_1, prev=\bprev_1, cur=[\cur_1], body=[body_1]}}
               {\btcfgb{h=\THspace_2, s=\TSspace_2, v=\TVspace_2, x=\TXspace_2, prev=\bprev_2, cur=[\cur_2], body=[body_2]}}}
} {
  \inferrule*[Lab={[TEqv-Body-Config-High]}]
  {#1 \\ #2 \\ #3 \\\\ #4}
  {#5}
}

\ekvcSplit\trgeqvidentical{
  p1={~},
  c={instr \lequiv instr}
} {
  \inferrule*[Lab=Eqv-Identical]
  {#1}
  {#2}
}

\ekvcSplit\trgeqvbracketed{
  p1={~},
  c={[instr] \lequiv [instr']}
} {
  \inferrule*[Lab=Eqv-Bracketed]
  {#1}
  {#2}
}

\ekvcSplit\trgeqvasgn{
  p1={l\nsqsubseteq L \lor e = e'},
  c={\Asgn{\var{x}}{\type{\tau}{l} e} \lequiv \Asgn{\var{x}}{\type{\tau}{l}  e'}}
} {
  \inferrule*[Lab=Eqv-Asgn]
  {#1}
  {#2}
}

\ekvcSplit\trgeqvadd{
  p1={l\nsqsubseteq L \lor (e1 = e1' \land e2=e2')},
  c={\Add{\var{x}}{\type{\tau}{l}  e_1}{e_2} \lequiv \Add{\var{x}}{\type{\tau}{l}  e_1'}{e_2'}}
} {
  \inferrule*[Lab=Eqv-Add]
  {#1}
  {#2}
}

\ekvcSplit\trgeqvload{
  p1={l_1\nsqsubseteq L \lor (y_1 = y_2)},
  c={\Load{\var{x}}{\type{\tau}{l_1}}{\type{\ptr{\tau}}{l_2}\var{y_1}} \lequiv
      \Load{\var{x}}{\type{\tau}{l_1}}{\type{\ptr{\tau}}{l_2}\var{y_2}}}
} {
  \inferrule*[Lab=Eqv-Load]
  {#1}
  {#2}
}

\ekvcSplit\trgeqvstore{
  p1={l_2\nsqsubseteq L \lor (e_1 = e_2)},
  c={\Store{\type{\tau}{l_1} e_1}{\type{\ptr{\tau}}{l_2} \var{x}} \lequiv
      \Store{\type{\tau}{l_1} e_2}{\type{\ptr{\tau}}{l_2} \var{x}}}
} {
  \inferrule*[Lab=Eqv-Store]
  {#1}
  {#2}
}

\ekvcSplit\trgeqvseq{
  p1={instr_1 \lequiv instr_1'},
  p2={instr_2 \lequiv instr_2'},
  c={\Seq{instr_1}{instr_2} \lequiv \Seq{instr_1}{instr_2}}
} {
  \inferrule*[Lab=Eqv-Seq]
  {#1 \\ #2}
  {#3}
}

\ekvcSplit\trgeqvphi{
  p1={l\nsqsubseteq L \lor (\forall i\in[0,n]:  b_i=b_i' \land e_i = e_i')},
  c={\PhiNode{\var{x}}{\type{\tau}{l}}{[b_0, e_0] \dots [b_n,e_n]} \lequiv
      \PhiNode{\var{x}}{\type{\tau}{l}}{[b_0', e_0'] \dots [b_n',e_n']}}
} {
  \inferrule*[Lab=Eqv-Phi]
  {#1}
  {#2}
}

\ekvcSplit\trgeqvbrconditional{
  p1={l \nsqsubseteq L \lor (e=e' \land b_0=b_0' \land b_1=b_1')},
  c={\BrCond{\type{i1}{l} e}{\var{b_0}}{ \var{b_1}} \lequiv
      \BrCond{\type{i1}{l} e'}{\var{b_0'}}{ \var{b_1'}}}
} {
  \inferrule*[Lab=Eqv-Br-Conditional]
  {#1}
  {#2}
}

\ekvcSplit\trgeqvbrunconditional{
  p1={pc(b_1)\nsqsubseteq L},
  p2={pc(b_2)\nsqsubseteq L},
  c={\Br{ \var{b_1}} \lequiv\Br{ \var{b_2}}}
} {
  \inferrule*[Lab=Eqv-Br-Unconditional]
  {#1 \\ #2}
  {#3}
}

\ekvcSplit\trgeqvblock{
  p1={(pc(b_1) \nsqsubseteq L \land pc(b_2) \nsqsubseteq L) \lor b_1=b_2},
  c={b_1 \lequiv b_2}
} {
  \inferrule*[Lab=Eqv-Block]
  {#1}
  {#2}
}

\ekvcSplit\trgeqvinstctxt{
  p1={\THspace_1 \lequiv \THspace_2},
  p2={\TSspace_1 \lequiv \TSspace_2},
  p3={\TVspace_1 \lequiv \TVspace_2},
  p4={prev_i = prev_{i,t} :: prev_{b}},
  p5={\color{red}{\cancel{prev_1 \lequiv prev_2}}},
  p6={cur_1=cur_2},
  p7={instr_1 \approx_l instr_2},
  p8={unbracketed(instr_1)},
  c={\confignine{\TFspace}{\Fspace}{\THspace_1}{\TSspace_1}{\TVspace_1}{prev_1}{cur_1}{\varnothing}{instr_1} \lequiv
      \confignine{\TFspace}{\Fspace}{\THspace_2}{\TSspace_2}{\TVspace_2}{prev_2}{cur_2}{\varnothing}{instr_2}}
} {
  \inferrule*[Lab=Eqv-Inst-Ctxt]
  {#1 \\ #2 \\ #3 \\ #4 \\ #5 \\ #6 \\ #7 \\ #8}
  {#9}
}

\ekvcSplit\trgeqvinstctxtbracketed{
  p1={\THspace_1 \lequiv \THspace_2},
  p2={\TSspace_1 \lequiv \TSspace_2},
  p3={\TVspace_1 \lequiv \TVspace_2},
  p4={trg_1 = trg_2},
  c={\confignine{\TFspace}{\Fspace}{\THspace_1}{\TSspace_1}{\TVspace_1}{prev_1}{cur_1}{trg_1}{\high{instr_1}} \lequiv
      \confignine{\TFspace}{\Fspace}{\THspace_2}{\TSspace_2}{\TVspace_2}{prev_2}{cur_2}{trg_2}{\high{instr_2}}}
} {
  \inferrule*[Lab=Eqv-Inst-Ctxt-Bracketed]
  {#1 \\ #2 \\ #3 \\ #4}
  {#5}
}

\AtBeginDocument{%
  }

\lstdefinestyle{customc}{
  belowcaptionskip=1\baselineskip,
  breaklines=true,
  language=C,
  showstringspaces=false,
  basicstyle=\footnotesize\ttfamily,
  keywordstyle=\bfseries\color{green!40!black},
  commentstyle=\itshape\color{purple!40!black},
  identifierstyle=\color{blue},
  stringstyle=\color{orange},
  escapeinside={(*@}{@*)},
  postbreak=\mbox{\textcolor{red}{$\hookrightarrow$}\space}
}

\definecolor{codegreen}{rgb}{0,0.6,0}
\definecolor{codegray}{rgb}{0.5,0.5,0.5}
\definecolor{codepurple}{rgb}{0.58,0,0.82}
\definecolor{backcolour}{rgb}{0.95,0.95,0.92}
\definecolor{highlightcolor}{rgb}{0.90,0.90,0.86}
\definecolor{enclavecolor}{rgb}{0,0.6,0}

\lstdefinestyle{mystyle}{
    frame=single,
    commentstyle=\color{purple!30!blue},
    keywordstyle=\color{magenta},
    numberstyle=\tiny\color{codegray},
    stringstyle=\color{codepurple},
    basicstyle=\footnotesize\ttfamily,
    breakatwhitespace=false,         
    breaklines=true,                 
    captionpos=b,                    
    keepspaces=true,                 
    numbers=left,                    
    numbersep=5pt,                  
    showspaces=false,                
    showstringspaces=false,
    showtabs=false,                  
      tabsize=2,
    postbreak=\mbox{\textcolor{red}{$\hookrightarrow$}\space}
}

\makeatletter
\let\@authorsaddresses\@empty   
\makeatother

\begin{document}

\ifappendix
\title[Appendix]{{\color{teal}{Appendix:}} Type-Directed, Secure-by-Construction Enclave Partitioning for LLVM }

\else

\iftechreport
\title[Technical Report]{{\color{teal}{Technical Report:}} Type-Directed, Secure-by-Construction Enclave Partitioning for LLVM }

\else
\title{Type-Directed, Secure-by-Construction Enclave Partitioning for LLVM}
\fi

\fi

\author{Wesley B Nuzzo}
\email{Wesley\_Nuzzo@student.uml.edu}
\affiliation{%
  \institution{University of Massachusetts Lowell}
  \country{USA}
}

\author{Samuel Dodson}
\email{Samuel\_Dodson@student.uml.edu}
\affiliation{%
  \institution{University of Massachusetts Lowell}
  \country{USA}
}

\author{Benjamin Houle}
\email{benjamin\_houle@student.uml.edu}
\affiliation{%
  \institution{University of Massachusetts Lowell}
  \country{USA}
}

\author{Tarakaram Gollamudi}
\email{tarakaram\_gollamudi@uml.edu}
\affiliation{%
  \institution{University of Massachusetts Lowell}
  \country{USA}
}


\author{Anitha Gollamudi}
\email{anitha\_gollamudi@uml.edu}
\affiliation{%
  \institution{University of Massachusetts Lowell}
  \country{USA}
}

\setcopyright{none}
\settopmatter{printacmref=false}
\renewcommand\footnotetextcopyrightpermission[1]{}

\iftechreport
  \maketitle
  \setcounter{tocdepth}{3}
  \tableofcontents
  \input{llvm-enclaves-tr}
  \bibliography{mybib}
  \bibliographystyle{abbrvnat}
\else

\ifappendix

\maketitle

\else
\begin{abstract}

  Trusted Execution Environments (TEEs) provide hardware-supported
isolation through enclaves that protect code and data independently of
software abstractions. However, TEEs alone cannot enforce information
flow security. This problem is further aggravated in LLVM-like
low-level languages that allow unrestricted pointer manipulation and
unstructured control flow. Moreover, using TEEs effectively typically
requires manually partitioning applications into enclave and
non-enclave components, a process that is labor-intensive,
error-prone, and lacks fine-grained control.

We address these challenges with a three-step approach. First, we
formalize \srclang, an enclave-oblivious calculus based on LLVM IR,
equipped with a novel permissive type system that enforces security
against low-level attackers. To obtain meaningful guarantees,
\srclang combines information-flow control with security-aware
coarse-grained memory safety. Second, we extend \srclang\ to
\targetlang, an enclave-aware calculus that enforces noninterference
against stronger attackers capable of observing arbitrary
non-enclave memory. Third, we develop a type-driven,
type-preserving compilation from \srclang\ to \targetlang that
automatically produces secure enclave-aware programs, eliminating the
need for manual partitioning while providing fine-grained control over
host–enclave boundaries.

We implement and evaluate \splitacronym on thirteen micro-benchmarks and real-world workloads, including applications from SGXGauge, on Intel SGX hardware. \splitacronym scales to OpenSSL ($425{,}953$ LLVM IR instructions) and supports multiple objectives that expose trade-offs among enclave TCB size, host--enclave transitions, and boundary data movement; for OpenSSL, optimizing for transitions reduces them from $393$ to $187$. Runtime overhead is dominated by fixed enclave costs for short-running workloads, whereas long-running applications better amortize these costs and approach native performance.

\end{abstract}

\maketitle


\ifappendix
\section{Narrative}
Make sure to address the below concerns.

\begin{enumerate}
\item \todo{Why can't we put everything in enclaves? Ans: Resources and computation limitations.}

\item \todo{Existing works: do not handle fine-tuning partitions. Our work is a partitioning debugging tool. Extensible framework for adding more objective functions; changing security policies.}

\item \todo{Existing formalisms on partitioning: do not handle declassification. We support gradual release.}

\item \todo{LLVM specific challenges: low-level. Overtainting and memory safety vulnerabilities compromise real security guarantees.}

\item \todo{Acknowledge that using postdominators is well-studied but highlight how we are doing it locally in a locally composable mananer.}

\item \todo{Bring in block label assignment algorithm for reducing overtainting. Deemphasize the novelty.}
  
\item \todo{Overview example: focus on changing partitioning boundaries.}

\item \todo{Explain annotation effort aimed at reducing TCB. Involves careful placement of declassifications.}

\item \todo{Explain the difference between Glamdring source/sink vs full type annotations}
  
\item \todo{Talk about inferno for reducing the annotation effort. Explicitly mention that inferring declassifications is a future work.}

\item \todo{Explain why partitioning a fully-annotated LL file is non-trivial}

\item \todo{SMT bottlenecks for scaling to real applications}

\item \todo{SPLIT partitioning is probabilistic and is dependent on the SMT solver}
  
  \item \todo{Scope for optimizations. Switchless calls for enabling performance; turning on optimizations during the partitioning that will reduce the code size: future work}
\end{enumerate}
\fi

\section{Introduction}

Trusted Execution Environments (TEEs), such as Intel SGX~\cite{intelsgx} and ARM TrustZone~\cite{trustzone}, provide strong hardware-based isolation against privileged attackers, including a compromised operating system. By executing sensitive code inside secure \emph{enclaves}, TEEs protect code and data even when the kernel is malicious. For example, an authentication application may store secret credentials (e.g., passwords) within an enclave, protecting them from the operating system.

\begin{figure}[h]
  \vspace{-10pt}
\scalebox{0.9}{
  \tikzset{
  publabel/.style={
    anchor=south,
    font=\scriptsize
  }
}

\begin{tikzpicture}[
  block/.style={rectangle, draw, minimum width=1.6cm, minimum height=0.55cm, font=\scriptsize},
  edge/.style={->, >=Latex, thick},
  node distance=0.5cm
]


\node[block, align=left] (entry) {
\texttt{entry:}\\
\texttt{\%sec = icmp sgt i32 \%s1, \%s2}\\
\texttt{br i1 \%sec,  \%then,  \%else}
};
\node[publabel] at ([yshift=-0.7mm]entry.north west) {\textcolor{codegreen}{public}};

\node[block, below left=of entry, align=left, xshift=10mm] (then) {
\texttt{then:}\\
\texttt{\%a = add i32 \%s1, 10}\\
\texttt{br  \%merge}
};
\node[publabel] at ([yshift=-0.7mm]then.north west) {\textcolor{red}{private}};

\node[block, below right=of entry, align=left, xshift=-20mm] (else) {
\texttt{else:}\\
\texttt{\%b = sub i32 \%s2, 3}\\
\texttt{br  \%merge}
 };
\node[publabel] at ([yshift=-0.7mm]else.north west) {\textcolor{red}{private}};

\node[block, below=of $(then)!0.5!(else)$, yshift=-3.5mm,  align=left] (merge) {
  \texttt{merge:}\\
  \texttt{\%pub = icmp eq i32 ..., 0} \\
  \texttt{br i1 \%pub,  \%z,  \%nz}
  };
\node[publabel] at ([yshift=-1mm]merge.north west) {\textcolor{red}{private}};

\node[block, below left=of merge, align=left] (zero) {\texttt{z:}\\ \dots};
\node[publabel] at ([yshift=-1mm]zero.north west) {\textcolor{red}{private}};

\node[block, below right=of merge, align=left] (nonzero) {\texttt{nz:}\\ \dots};
\node[publabel] at ([yshift=-1mm, xshift=3mm]nonzero.north west) {\textcolor{red}{private}};

\draw[edge] (entry) -- (then);
\draw[edge] (entry) -- (else);

\draw[edge] (then) -- (merge);
\draw[edge] (else) -- (merge);

\draw[edge] (merge) --  (zero);
\draw[edge] (merge) --  (nonzero);

\end{tikzpicture}
}
\caption{Tracking implicit flows in LLVM using a standard IFC type system. The program gets rejected since there are public side-effect in the private block \texttt{merge}. \label{fig:implicitflow}}
\vspace{-10pt}
\end{figure}

LLVM~\cite{llvm} is a widely used low-level intermediate language with
unstructured control flow (Figure~\ref{fig:implicitflow}) and arbitrary
pointer arithmetic. Our goal is to enforce security in LLVM using TEEs
while enabling flexible enclave partitioning. Given a security
specification identifying sensitive data, the system should
automatically place the corresponding code and data inside enclaves
with formal security guarantees. By targeting LLVM IR, our approach
aims to protect any application that compiles to LLVM, enabling broad
deployment.

Unfortunately, TEEs cannot prevent a vulnerable application from leaking its own
secrets; thus, placing an entire application inside an enclave does not
guarantee confidentiality. While \citet{gcoopsla2016} propose a formal
framework for enclave-based information-flow security, its guarantees
do not extend to LLVM-like low-level languages, which are memory-unsafe
(e.g., pointer arithmetic and \code{gep}-induced type confusion) and
feature unstructured control flow. Enforcing security in this setting
therefore introduces additional challenges.



\paragraph{Challenge 1. IFC for Low-level Languages.}
Enforcing IFC in LLVM requires tracking implicit flows from complex
control flow. A branch on secret data can taint all successors, leading
standard IFC systems to reject programs with public effects—an overly
restrictive outcome (Figure~\ref{fig:implicitflow}).
More permissive approaches allow taint downgrading
at join points, but prior work~\cite{siftal, webkitjs} does not integrate
cleanly with static IFC type system and often requires declassification to avoid
over-tainting.

Moreover, IFC does not address memory errors such as buffer overflows,
which can expose secrets via public pointers. Thus, meaningful security
requires preventing memory violations across security boundaries. 




\paragraph{Challenge 2. Using TEEs is challenging.}
TEE constraints require partitioning applications into enclave and host
components (e.g., SGX disallows system calls inside enclaves).
LibOS-based approaches~\cite{graphene} increase the TCB and attack
surface~\cite{graphenebugs}. Manual partitioning is costly and
error-prone~\cite{Kumar2022sgxgauge, Hasan2020PortOrShim}.
Several frameworks have explored automated application
partitioning~\cite{gcoopsla2016, civet, glamdring, cadote, rubinov,
hastee} for high-level languages, but none target LLVM IR–like
low-level languages.
Moreover, systems designed for real-world
applications~\cite{civet, glamdring, cadote, rubinov, hastee} do not
provide formal security guarantees, whereas \citet{gcoopsla2016}
offers strong formal guarantees but does not evaluate the approach
on real-world applications.



\paragraph{Challenge 3. Exploring Partitioning Trade-offs.}
Frequent host–enclave transitions incur significant overhead, making
naive partitioning inefficient. Most prior work, except
\citet{gcoopsla2016}, assumes a fixed objective and overlooks the
security–performance trade-offs inherent in practical partitioning.
While \citet{gcoopsla2016} proposes a proof-of-concept, their evaluation neither executes
the partitioned programs to analyze performance–security trade-offs nor
demonstrates the approach on real-world applications.


\begin{tcolorbox}[
colback=white!10,      
    colframe=black!70!white,
    title=Research Question, 
    rounded corners 
  ]
How can we enforce secure-by-construction enclave partitioning for LLVM
programs while enabling flexible, fine-grained partitioning trade-offs
between security and performance?

\end{tcolorbox}

Our first key insight is that coarse-grained memory safety suffices to
prevent buffer overflows from corrupting or exposing secret data. Our
second insight addresses over-tainting in low-level languages by
reconstructing control-flow dependencies from the control-flow graph
(CFG), which is key to scaling our framework to real-world benchmarks.
Third, we systematically study security–performance trade-offs in LLVM
by formulating fine-grained partitioning as a constraint optimization
problem with user-defined objective functions. Finally, to mitigate the
cost of constraint generation, we employ several optimizations for
efficient solving.

Our formalism  addresses the above research question by integrating TEE abstractions into the secure
compilation pipeline, enabling type- and policy-guided placement of
sensitive code and data inside enclaves to enforce end-to-end security
against  two distinct low-level attacker classes:
\lattacker, constrained by language-level security guarantees, and
\eattacker, limited only by hardware-enforced memory isolation.

\paragraph{Contributions.}
This paper makes the following contributions:

\begin{itemize}

\item \textbf{Secure low-level language with enclave support.}

\begin{itemize}

\item We formalize \underline{S}ecure \underline{I}mmediate
\underline{R}epresentation (\srclang), an enclave-agnostic calculus
based on LLVM IR~\cite{llvm} with information-flow policies (including
declassification), and prove that it enforces \emph{gradual release}—a
condition more permissive than noninterference—against \lattacker.

Our key insight is that IFC in low-level programs does not require
fine-grained memory safety. Instead, \srclang combines \emph{novel}
security-aware coarse-grained memory safety with permissive
control-flow–induced implicit-flow tracking, mitigating over-tainting
in LLVM-like code.

\item We extend \srclang\ with \underline{EN}clave capabilities to obtain
\targetlang, an enclave-aware calculus that enforces security against
\eattacker. \targetlang introduces novel enclave abstractions that model TEE memory isolation and prevent leakage of enclave-resident secrets across enclave--host transitions, enabling formal security reasoning for partitioned programs.


\end{itemize}







\item \textbf{Secure-by-construction enclave partitioning.}
We formulate secure application partitioning as a type-guided
compilation from enclave-agnostic \srclang programs to
enclave-aware \targetlang programs, and prove that the compilation
is \emph{security type-preserving}.
Crucially, although we focus on Intel SGX in this work, the enclave abstractions can be instantiated for other TEEs, such as ARM TrustZone, RISC-V enclaves, and AMD SEV, without changing the translation.

\item \textbf{Flexible partitioning via constraint optimization.}
We reduce enclave partitioning to a constraint optimization problem,
parameterized by developer-provided objective functions, enabling
systematic exploration of security–performance trade-offs.

\item \textbf{Implementation.}
We instantiate \srclang with LLVM IR and \targetlang with Intel SGX
enclaves, and implement \splitacronym as an out-of-tree LLVM pass
targeting the Open Enclave SDK~\cite{openenclave}.


\item \textbf{Evaluation.}
We evaluate \splitacronym on thirteen applications spanning micro-benchmarks and real-world workloads, including BFS, Memcached, OpenSSL, and SVM workloads from SGXGauge~\cite{sgxgauge}, on Intel SGX hardware. Across four objectives ($\tmin$, $\emin$, $\tmin+\emin$, and $\objdata$), \splitacronym produces distinct partitions that expose concrete trade-offs among enclave size, host--enclave transitions, and boundary data movement. These trade-offs are most pronounced in larger applications such as Memcached and OpenSSL: OpenSSL contains 425,827 LLVM IR instructions, yet only $0.03\%$ are directly labeled private, leaving substantial flexibility in placing the surrounding public code. Our runtime evaluation further quantifies the performance impact of the generated partitions, demonstrating how partitioning decisions and enclave--host interactions affect end-to-end execution time.
\end{itemize}
\noindent \textit{Note. Full definitions and proofs are in the
accompanying \textbf{technical report}.}




\section{Overview} \label{sec:overview}

Figure~\ref{fig:login} shows the C-snippet of an authentication program, which implements HMAC-based one-time password (HOTP).
The function \texttt{verifyOtp} checks the user's submitted OTP \texttt{submitted} against the expected OTP \texttt{expected}, which is a secret.

Figure~\ref{fig:login-llvm} shows the corresponding  \srclang snippet for \texttt{verifyOtp}. We obtain this by compiling C program to LLVM IR and  annotating with a security policy via  metadata. \srclang supports functions, blocks and  low-level instructions such as  memory load and store, both conditional and unconditional branches, call/return, and getelementptr (\code{gep}) that permits arbitrary pointer arithmetic. We focus on novel aspects: security policy, declassification, security-aware memory safety, and the security condition.

Without loss of generality, we assume  a two point confidentiality lattice, public $\public$ and private $\private$ with  the usual information-flow ordering $\public \sqsubseteq \private$.
The annotations \texttt{"$\code{i32}^\bot$"} and \texttt{"$\code{i32}^\top$"} specify the confidentiality of variables such as \texttt{submitted} (lines 2 and 8) and \texttt{expected} (lines 12 and 13). The \texttt{ct\_compare} call (line 17) has a $\private$ policy on its return value, and $\private$, and $\public$ policies on its first and second arguments, respectively. Since \texttt{status} depends on sensitive data, it is labeled $\private$ and must be explicitly declassified before use in public contexts; this is achieved by \texttt{declassify.i32} (line 21), which converts \texttt{status} from $\private$ to $\public$. The two point lattice can be extended to a more complex lattice, with all non-$\bot$ levels being treated as ``secret'' for the purposes of enclave assignment.

\begin{figure*}[t]
\begin{subfigure}[c]{0.47\textwidth}
  \centering
  \scalebox{0.83}{
\begin{lstlisting}[style=mystyle, language=C, mathescape=true]

static uint32_t submittedCode;   /* public */
static uint64_t counter;         /* public */

/* set up seed and read (public) inputs */
void init(int argc, char **argv);

/* recompute expected OTP (SECRET result) */
uint32_t expectedOtp(uint64_t ctr);

/* constant-time compare of two 6-digit codes*/
int ct_compare(uint32_t a, uint32_t b);

/*
 * 1. recompute expected OTP from secret seed
 * 2. CT compare against the submitted code
 * 3. return DECLASSIFIED accept/reject status
*/
int verifyOtp(uint32_t submitted, uint64_t ctr) {
    uint32_t expected = expectedOtp(ctr);
    int status = ct_compare(expected, submitted);
    return declassify_i32(status);
}

int main(int argc, char **argv) {
    init(argc, argv);
    int status = verifyOtp(submittedCode, counter);
    if (status) printf("\nAccess Granted!\n");
    else        printf("\nAccess Denied!\n");
    return status ? 0 : 1;
}
\end{lstlisting}
  }
  \caption{HOTP Authentication in C (snippet) \label{fig:login}}
\end{subfigure}%
~
\begin{subfigure}[c]{0.62\textwidth}
  \scalebox{0.83}{
\begin{lstlisting}[style=mystyle, language=llvm, mathescape=true]

define i32$^\bot$ @verifyOtp(i32$^\bot$ %submitted, i64$^\bot$ %ctr) $(\bot)$ {
entry:
  %submitted_loc = alloca i32$^\bot$
  %ctr_loc = alloca i64$^\bot$
  %expected_loc = alloca i32$^\top$
  %status_loc = alloca i32$^\top$
  store i32$^\bot$ %submitted, ptr$^\bot$ %submitted_loc
  store i64$^\bot$ %ctr, ptr$^\bot$ %ctr_loc
  ;; expected = expectOtp(ctr)
  %ctr_1 = load i64$^\bot$, ptr$^\bot$ %ctr_loc
  %expected = call i32$^\top$ @expectedOtp(i64$^\bot$ %ctr_1)
  store i32$^\top$ %expected, ptr$^\top$ %expected_loc
  ;; status = ct_compare(expected, submitted)
  %expected_1 = load i32$^\top$, ptr$^\top$ %expected_loc
  %submitted_1 = load i32$^\bot$, ptr$^\bot$ %submitted_loc
  %status = call i32$^\top$ @ct_compare(i32$^\top$ %expected_1, i32$^\bot$ %submitted_1)
  store i32$^\top$ %status, ptr$^\top$ %status_loc
  ;; return declassify_i32(status)
  %status_1 = load i32$^\top$, ptr$^\top$ %status_loc
  %out = declassify i32$^\bot$, i32$^\top$ %status_1
  ret i32$^\bot$ %out
}
\end{lstlisting}
  }
  \caption{\code{verifyOtp()} in \srclang \label{fig:login-llvm}}
\end{subfigure}
\vspace{-10pt}
\end{figure*}

\srclang's memory consists of both stack and heap regions.
Novel to \srclang, memory is partitioned into security-indexed
regions, and allocation instructions allocate memory from the
corresponding regions; the memory security specification maps
each memory location to its security region. In the example above,
lines 3 and 4 allocate memory from public regions.

The \code{gep} instruction performs pointer arithmetic and is a
well-known attack vector for exposing secrets through public
pointers. A key contribution of \srclang is to close this attack
vector by enforcing coarse-grained memory safety. The semantics of
\code{gep} therefore differ from standard LLVM IR: the instruction
aborts if the computed address crosses security boundaries.

Unlike fine-grained memory safety mechanisms (e.g., fat pointers
that store base addresses and offsets), this check only ensures
that pointer arithmetic does not cross security regions; arithmetic
may still overflow within the same region. The \code{gep}
instruction also models integer-to-pointer conversions, which can
lead to type-confusion attacks (discussed in
Section~\ref{fig:ptrtoint}).

The \srclang IFC type system assigns security levels to basic blocks
and tracks information flows across them. In particular, it propagates
taints induced by conditional branches to all successor blocks. While
this propagation is standard, \srclang also statically identifies
points in the control-flow graph where such taints can be safely
removed.

Similar to prior work, \srclang uses postdominator nodes to determine
these points. However, \srclang introduces a novel form of postdominator-indexed
security labels that allow the type system to safely remove taints at
the corresponding nodes.  We defer the discussion to Section~\ref{sec:source}.

We consider an attacker \lattacker that is restricted to observing
memory regions labeled at or below its security level $l$.
For example, a \pubattacker (i.e., an \lattacker with $l=\public$)
cannot access $\private$ memory regions.

\srclang's type system enforces security against \lattacker.
For programs that do not perform declassification, it guarantees
\emph{noninterference} (Theorem~\ref{thm:sirni}): if two input memories
differ only in locations whose security level is strictly higher than
$l$, and are otherwise indistinguishable to the attacker, then their
outputs are also indistinguishable to an attacker at security level $l$.

For programs that include declassification, \srclang enforces
\emph{gradual release} (Theorem~\ref{thm:sirgr}): if two input memories
differ in locations with security level strictly higher than $l$, but
release the same information through declassification, and are
otherwise indistinguishable to the attacker, then their outputs remain
indistinguishable at security level $l$~\cite{gradualrelease}.
Gradual release is more permissive than noninterference, as it allows
controlled release of information according to semantic declassification
policies. Consequently, the program in Figure~\ref{fig:login-llvm} is secure.

\begin{figure*}[t]
  \centering
  \begin{tikzpicture}
    \regblock{id=mymain,text={my\_main},x=-7.6cm,y=0cm,args=double};
    \regblock{id=init,text={init},x=-7.6cm,y=-1.2cm};
    \regblock{id=printf,text={printf},x=-7.6cm,y=1.2cm,args={dashed}};
    \regblock{id=strtoul,text={strtoul},x=-5.8cm,y=-1.2cm,args={dashed}};

    \regblock{id=verifyotpecall,text={verifyOtp\_ecall},x=-3cm,y=0cm,args={fill=green!10}};
    \regblock{id=verifyotp,text={verifyOtp},x=-0.4cm,y=0cm,args={fill=green!10}};
    \regblock{id=ctcompare,text={ct\_compare},x=-0.4cm,y=1.2cm,args={fill=green!10}};
    \regblock{id=declassifyi,text={declassify.i32},x=-0.4cm,y=-1.2cm,args={fill=green!10}};
    \regblock{id=expectedotp,text={expectedOtp},x=1.8cm,y=0cm,args={fill=green!10}};
    \regblock{id=hmacsha,text={hmac\_sha1},x=4cm,y=0cm,args={fill=green!10}};
    \regblock{id=shainit,text={sha1\_init},x=1.8cm,y=1.2cm,args={fill=green!10}};
    \regblock{id=shafinal,text={sha1\_final},x=4cm,y=1.2cm,args={fill=green!10}};
    \regblock{id=shaupdate,text={sha1\_update},x=6.2cm,y=0cm,args={fill=green!10}};
    \regblock{id=shablock,text={sha1\_block},x=6.2cm,y=-1.2cm,args={fill=green!10}};
    \regblock{id=rol,text={rol},x=4cm,y=-1.2cm,args={fill=green!10}};
    \regblock{id=memset,text={memset},x=1.8cm,y=-1.2cm,args={fill=green!10,dashed}};
    \regblock{id=memcpy,text={memcpy},x=6.2cm,y=1.2cm,args={fill=green!10,dashed}};

    \draw[dashed,gray] (-4.75cm,1.9cm) -- (-4.75cm,-2.0cm);
    \node[anchor=south,font=\scriptsize,gray] at (-4.75cm,1.9cm) {enclave boundary};

    \draw
      (mymain) edge (printf)
      (mymain) edge (init)
      (init) edge (strtoul)

      (mymain) edge node[above,font=\scriptsize,pos=0.6] {ecall} (verifyotpecall)

      (verifyotpecall) edge (verifyotp)
      (verifyotp) edge (ctcompare)
      (verifyotp) edge (declassifyi)
      (verifyotp) edge (expectedotp)
      (expectedotp) edge (hmacsha)
      (hmacsha) edge (shainit)
      (hmacsha) edge (shafinal)
      (hmacsha) edge (shaupdate)
      (hmacsha) edge (memset)
      (hmacsha) edge (memcpy)
      (shafinal) edge (shaupdate)
      (shaupdate) edge (shablock)
      (shaupdate) edge (memcpy)
      (shablock) edge (rol)
    ;

    \begin{scope}[font=\scriptsize]
      \node[draw,fill=green!10,minimum width=4mm,minimum height=3mm] (lA) at (-7.6cm,-2.5cm) {};
      \node[anchor=west] at (-7.3cm,-2.5cm) {enclave (app)};
      \node[draw,dashed,fill=green!10,minimum width=4mm,minimum height=3mm] (lB) at (-4.6cm,-2.5cm) {};
      \node[anchor=west] at (-4.3cm,-2.5cm) {enclave (library)};
      \node[draw,minimum width=4mm,minimum height=3mm] (lC) at (-0.9cm,-2.5cm) {};
      \node[anchor=west] at (-0.6cm,-2.5cm) {host (app)};
      \node[draw,dashed,minimum width=4mm,minimum height=3mm] (lD) at (1.7cm,-2.5cm) {};
      \node[anchor=west] at (2cm,-2.5cm) {host (library)};
    \end{scope}
  \end{tikzpicture}
  \caption{Partitioned HOTP application, MinTCB partition.\label{fig:output}}
\end{figure*}

\paragraph{Secure Partitioning.}
\targetlang extends \srclang with novel enclave memory regions and
enclave-management instructions for creating and dismantling enclaves,
and for managing transitions into and out of enclave execution
(Section~\ref{sec:target}). These abstractions closely model TEE
execution and its associated defenses. The semantics enforce the key
invariant that enclave memory locations can only be accessed by code
executing within the same enclave, reflecting the memory isolation
guarantees provided by Intel SGX.

\targetlang's type system enforces noninterference against a more
powerful \eattacker (Theorem~\ref{thm:siren-ni-low}) that can observe
the entire host memory regardless of its security level (e.g., a
compromised kernel). In particular, if two input host memories are
indistinguishable to the attacker, then the resulting output host
memories remain indistinguishable to the \eattacker.

We reduce enclave partitioning to compiling an enclave-oblivious
\srclang program into an enclave-aware \targetlang program. Our
translation enforces only the essential security constraints—for
example, ensuring that secret pointers remain within their
corresponding enclaves—while allowing flexibility in how code is
assigned to enclaves. The resulting translation is type-driven and
type-preserving (Sections~\ref{sec:translation} and \ref{sec:transpreserve}),
ensuring that the generated \targetlang program is well-typed and
provably secure (Theorem~\ref{thm:transtypepreserve}).

\paragraph{Partitioning Trade-offs.}
Our implementation consists of two components: \infertool, which infers a complete security policy from sparsely annotated LLVM IR, and \splitacronym, which partitions the annotated IR into Intel SGX host and enclave components using the OpenEnclave SDK. By operating on LLVM IR, \splitacronym is agnostic to the source language (e.g., C, C++, or Rust) and facilitates future support for other TEE backends, such as ARM TrustZone and RISC-V enclaves.

We define the trusted computing base (TCB) as the code placed inside the enclave. Rather than naively placing the entire application in the enclave, \splitacronym supports four objectives: \tmin minimizes host--enclave transitions, \emin minimizes enclave code size (and hence the TCB), \objboth jointly minimizes transitions and TCB size, and \objdata minimizes data transferred across the enclave--host boundary.

Our evaluation demonstrates that these objectives expose concrete partitioning trade-offs. For the running HOTP example, minimizing the TCB produces 635 enclave instructions and 53 transitions, whereas minimizing transitions increases the enclave to 733 instructions but reduces transitions to 1. Figure~\ref{fig:output} shows the call graph of HOTP translated under the \emin objective. \code{my\_main}, \code{init}, and I/O functions such as \code{printf} and \code{strtoul} reside on the host, while sensitive routines such as \code{verifyOTP} and required library routines and wrappers reside in the enclave. Note that I/O functions are pinned to the host because Intel SGX enclaves cannot directly perform I/O. For TEEs that support I/O within the trusted domain, this restriction can be disabled.

This trade-off becomes more pronounced in larger applications. For BFS, minimizing transitions reduces their number from 137 to 17, while for OpenSSL (425,953 LLVM IR instructions), it reduces transitions from 393 to 187. In contrast, several benchmarks produce similar partitions across objectives because their security and placement constraints leave little optimization freedom. We evaluate these trade-offs in Section~\ref{sec:eval}.

  \section{\srclang: The Source Language} \label{sec:source}




\srcsyntax
\srclang  models LLVM with four key distinctions. First, each data type is explicitly associated with a security policy, and the type system enforces fine-grained information-flow tracking using these policies. Second, the memory model employs a partitioned allocator indexed by security levels. Third, memory access instructions are instrumented to enforce coarse-grained memory safety: pointer arithmetic and load/store operations are confined to the same security-level region. Fourth, it supports a declassify instruction that is operationally the same as assignment but allows information to flow from private to public variables. 

\srclang is simpler than VeLLVM~\cite{zellweger2012vellvm}---a framework for verifying LLVM programs---but introduces
security-oriented features. The goals of \srclang and VeLLVM are
orthogonal; in principle, the security mechanisms developed for
\srclang could be incorporated into VeLLVM to obtain similar
information-flow guarantees.
Below, we briefly summarize the syntax, well-formedness
and focus on the novel aspects of \srclang.
Readers familiar with LLVM semantics may directly skip to Section~\ref{sec:coarsegrained}.

Figure~\ref{fig:srcsyntax} shows the syntax of \srclang. Variables $x$, $y$, $z \in \Var$ map via environment $\Vspace$ to constants $\kappa$, which are memory locations $m$ or numeric literals $c$. Functions are $f$ and blocks $b$. Expressions $e$ are literals or variables. Ground types $\tau$ are boolean $\ione$, 32-bit integer $\itt$, or pointers. Each value has a security label $l \in {\public, \private}$ from a two-point lattice ($\public \sqsubseteq \private$). Full types $\mathit{Type}$ are tuples $(\tau, l)$.

A program $p$ is a non-empty list of functions. Each function $\func$ has a standard signature (return type, name, typed parameters), a sequence of blocks, and a security level $l$ capturing the minimum confidentiality of side effects and its return value. Blocks $\mathit{block}$ have identifiers $b$ and bodies $\body$, sequences of instructions ending with a terminator. All function names and block identifiers are unique, with $\Fspace$ mapping each name or block label to its function definition. \srclang enforces static single assignment (SSA), where each variable is defined at most once.

Memory $\memory$ maps locations to values and is partitioned by heap $\Loc_{\Hspace}$ and stack $\Loc_{\Sspace}$, with each divided into security regions indexed by level $l$. The memory specification $\Menv$ associates each location with a type, comprising a ground type and security label.

Instructions are classified as internal ($i$) or terminators ($t$), with terminators controlling program flow. \srclang models all major instruction classes, including arithmetic (\code{add}), pointer arithmetic (\code{gep}), memory access (\code{load}, \code{store}), branching (\code{br}), function calls and returns, and phi ($\phi$) for merging values along control-flow paths. Each instruction specifies operand and result types. Shaded instructions are not part of the surface syntax. While branching semantics are standard, \srclang uses a permissive typing approach to track control-flow-induced information flow. A program is \emph{well-formed} if it satisfies SSA, has a unique return node, and places all $\phi$ instructions at the start of each block.

\paragraph{\srclang Semantics}
\label{sec:srcsemantics}
The operational semantics follow a standard small-step formulation.
The small-step judgment
$\scfgi{i=i} \sstepi \scfgi{h=\Hspace', s=\Sspace', v=\Vspace', i=i'}$
describes the transition of a \srclang configuration
$\scfgi{i=i}$. A configuration is represented as the tuple
$\langle \Hspace, \Sspace, \Vspace, \prev, i \rangle$,
where $\Hspace$ denotes the heap, $\Sspace$ the stack,
$\Vspace$ the variable environment, $\prev$ the previously
evaluated block, and $i$ the current instruction.
$\Fspace$ denotes the function lookup environment.

We omit standard rules for operations such as assignments,
function calls,  branches and arithmetic expressions, and instead focus
on the novel semantics that enforce coarse-grained memory
safety. The full semantics appear in \S~1.2 of the technical report.


\subsection{Security-aware Coarse-grained Memory Safety} \label{sec:coarsegrained}

The syntax of \code{gep} is
$\GEP{x}{\type{\tau}{l}}{\type{\ptr{\tau}}{l}y}{\itt c}$,
where $y$ is the base pointer and $c$ is the offset. If $y$ points to
the base address $m$, then $x$ points to the address $m+c$, referred to
as the effective address. We motivate the need for enforcing memory safety in \srclang (and,
more generally, in LLVM-like low-level languages) by demonstrating the vulnerabilities arising from \code{gep} instruction.

\begin{figure}[h]
\begin{lstlisting}[style=mystyle, language=llvm, morekeywords={gep}, mathescape=true]
	%p = alloca <$\mathrm{Int_{32}}$, $\bot$> 
	%dp = alloca <$\mathrm{Int_{32}*}$, $\bot$> 
	store <$\mathrm{Int_{32}*}$, $\bot$> %p, <$\mathrm{Int_{32}**}$, $\bot$>  %dp 
	%dp_als = gep <$\mathrm{Int_{32}}$, $\bot$>,  <$\mathrm{Int_{32}*}$, $\bot$>  %p, <$\mathrm{Int_{32}}$, $\bot$> offset
	    ;ip is equivalent to (ptrtoint p)
	%ip = load <$\mathrm{Int_{32}}$, $\bot$>, <$\mathrm{Int_{32}*}$, $\bot$> %dp_als 
\end{lstlisting}
\caption{Type confusion vulnerability. \code{gep} models \code{ptrtoint}; \code{dp} and \code{dp\_als} are aliases with different ground types. \label{fig:ptrtoint}}
\end{figure}

First, the base and effective addresses may belong to different
security regions, allowing information leakage through seemingly
public pointers. Second, \code{gep} can simulate
\code{ptrtoint}/\code{inttoptr} conversions, enabling type confusion.
In such cases, a public pointer may be reinterpreted as a public
integer. This is dangerous because the integer can subsequently be
promoted to a private value and converted back into a private pointer,
potentially enabling writes of secret data through apparently private
pointers that alias public memory.

Figure~\ref{fig:ptrtoint} illustrates this vulnerability. Lines~1--2
allocate memory containing values of type $\mathrm{Int_{32}}$ and
$\mathrm{Int_{32}}^{*}$, respectively. Line~3 stores $\%p$ into
$\%dp$, and line~4 uses \code{gep} to create a pointer \texttt{\%dp\_als} by
adding an \texttt{offset} to $\%p$. For a suitable \texttt{offset},
\texttt{\%dp\_als} can alias $\%dp$. Notably, \code{gep} treats the contents of
\texttt{\%dp\_als} as having type $\mathrm{Int_{32}}$ rather than
$\mathrm{Int_{32}}^{*}$. Consequently, line~6 reads the contents of
\texttt{\%dp\_als} as an integer instead of a pointer.

A similar construction can simulate \code{inttoptr}. Together, these
operations form a gadget that can launder secrets even though the
program is well typed.

\begin{figure}[t]
\scalebox{0.85}{
  \begin{mathpar}
     \srcemalloc{}

    \srcegep{}

    \srceload{}
  \end{mathpar}
 }
  \caption{\srclang: Memory access semantics.\label{fig:srcmemsemantics}}
  \vspace{-15pt}
\end{figure}

Since the vulnerability relies on pointer aliasing, we ensure that a $\bot$ pointer can never alias a $\top$ pointer. We partition memory into disjoint $\bot$- and $\top$-regions and allocate objects according to their security levels. We enforce this invariant in two ways. First, we restrict \code{gep} so that pointer arithmetic cannot cross region boundaries: an operation that computes an address in a different security region is terminated, enforcing security-aware coarse-grained memory safety. Second, we restrict \code{load} so that a loaded pointer belongs to the same region as the memory location from which it is loaded. For example, when loading from a $(\tau*)*$ pointer $y$ into a $\tau*$ pointer $x$, both $x$ and $y$ must point to the same security region. Consequently, $\top$ pointers can only point into the $\top$ region and $\bot$ pointers into the $\bot$ region; since the regions are disjoint, cross-security pointer aliasing is impossible. We incorporate this invariant into type preservation by requiring well-typed memories to satisfy it, thereby preventing secrets from being laundered through pointer/integer confusion.

Figure~\ref{fig:srcmemsemantics} presents the semantics for the memory
access instructions that enforce this property. Rule~\rulename{E-Malloc}
allocates a fresh heap location $m$ from $\Loc_{H,l}$ for security
level $l$, zero-initializes $\Hspace(m)$, and binds $x$ to a pointer
to $m$ whereas \code{free} uninitializes $m$. This models
partitioned heap allocators such as Allocation Tokens in
Clang~\cite{clang-alloctoken}. The notation
$\Vspace[x \mapsto m]$ extends or updates $\Vspace$ by mapping $x$ to
$m$.

Rule~\rulename{E-GEP} computes the effective address $m'$ by adding the
evaluated offset $c$ (obtained from expression $e$) to the base address
$m$ stored in pointer $y$, binding the result to $x$. The rule requires
$m' \in \region{m}$, ensuring that pointer arithmetic does not cross
security-region boundaries. 
Note that this does not prevent buffer overflows within a security region.
Load and store operations are similarly
restricted to dereference pointers whose source and target addresses
belong to the same region.

Rule~\rulename{E-Load} evaluates a non-null pointer $y$ to obtain the
address $m$ and retrieves the corresponding value, binding it to $x$.
The premise $\mathrm{isptr}(\tau) \implies \kappa \in \region{m}$
ensures that pointers remain confined to their region, preventing
cross-boundary accesses.

For example, suppose addresses $0$--$4999$ belong to the $\bot$ region and $5000$--$9999$ to the $\top$ region. For $\GEP{x}{\type{\itt}{\bot}}{\type{\ptr{\itt}}{\bot} y}{\type{\itt}{\bot}10}$, if $y$ points to $4995$, the effective address is $5005\in\top$; since \code{gep} crosses the region boundary, evaluation cannot proceed (in practice, the address could instead be sandboxed within the $\bot$ region). Similarly, for $\Load{x}{\type{\ptr{\itt}}{\top}}{\type{\ptr{\ptr{\itt}}}{\top}\var{y}}$, if loading from $y\in\top$ yields $\kappa=4000\in\bot$, the premise $\mathrm{isptr}(\tau)\Longrightarrow\kappa\in\region{m}$ fails, preventing a $\top$ pointer from referencing the $\bot$ region.




\subsection{\srclang Security Definition}
\label{sec:srcthreatmodel}
We consider an \lattacker that can observe memory locations with
security level $l$ or lower. Two memories $\memory_1$ and $\memory_2$
are \emph{$l$-equivalent}, written $\memory_1 \simeq_{l}^{\Menv} \memory_2$,
if their $l$-projections are equal. 

We lift the small-step relation to produce a trace of memories, i.e.,
$\scfgp{p=p} \ssteptr \scfgres{} \raises T$.
Trace equivalence is defined pointwise over the corresponding elements
of the traces.
A program $p$, without any declassification instructions, is secure if the \lattacker cannot distinguish between the memory traces generated using $l$-equivalent initial memories. Definition~\ref{def:nilattacker} formally states the same.

\begin{definition}[\srclang Noninterference w.r.t  \lattacker]\label{def:nilattacker}
  Given a memory specification $\Menv$, a program $p$ is noninterfering against a \lattacker, if for all initial heaps $\Hspace^0_1$, $\Hspace^0_2$ and a variable map $\Vspace$:
  \begin{enumerate}
  \item the initial heaps are $l$-equivalent; that is, $\Hspace^0_1 \simeq_{l}^{\Menv} \Hspace^0_2$, and
  \item the program configuration $\scfgp{h=\Hspace^0_i}$ steps to the final configuration $\scfgres{h=\Hspace'_i, c=c_i}$, emitting the trace $T_i$; that is, for all $i \in \{1, 2\}$,  $\scfgp{h=\Hspace^0_i}\ssteptr \scfgres{h=\Hspace'_i, c=c_i} \raises T_i$,
  \end{enumerate}
  then $c_1 = c_2$ and the traces are $l$-equivalent, i.e., $T_1 \simeq_{l}^{\Menv} T_2$.
\end{definition}


Definition~\ref{def:nilattacker} enforces strict noninterference and thus forbids any intentional release of secret information. To accommodate controlled release, our second security condition captures \emph{gradual release}. Intuitively, a program satisfies gradual release if, at every point during execution, two runs that start from low-equivalent heaps and have released the same information so far produce indistinguishable public traces.

\begin{definition}[Release Function]
Given a memory specification $\Menv$, a release function
$R : \memory \mapsto \mathsf{Val}$,
maps an initial heap to the information that may be declassified during
execution. Two memories agree on the released information if $R(\memory_1) = R(\memory_2)$.
\end{definition}

\begin{definition}[Release Projection]
Let $T$ be a trace. The release projection $R(T)$ returns the subsequence of values explicitly declassified along $T$.
\end{definition}

\begin{definition}[\srclang Gradual Release w.r.t \lattacker]\label{def:gradualrelease}
Given a memory specification $\Menv$ and a release projection $R(\cdot)$ over traces,
a program $p$ satisfies gradual release against a \lattacker\ if for all initial heaps
$\Hspace^0_1$, $\Hspace^0_2$ and a variable map $\Vspace$:

\begin{enumerate}
\item the initial heaps are $l$-equivalent; that is,
$\Hspace^0_1 \simeq_{l}^{\Menv} \Hspace^0_2$,

\item the program configuration $\scfgp{h=\Hspace^0_i}$ steps to the final configuration
$\scfgres{h=\Hspace'_i, c=c_i}$, emitting the trace $T_i$; that is, for all $i \in \{1,2\}$,
$
\scfgp{h=\Hspace^0_i}
\ssteptr
\scfgres{h=\Hspace'_i, c=c_i}
\raises T_i,
$
\end{enumerate}

then $c_1 = c_2$ and for all prefixes $U_1 \preceq T_1$ and $U_2 \preceq T_2$, if
$R(U_1) = R(U_2)$,
then $U_1 \simeq_{l}^{\Menv} U_2$.
\end{definition}


\subsection{\srclang Type System}\label{sec:srctyping}


 
\begin{figure}[h]
\scalebox{0.85}{
 \begin{mathpar}
\srctgep{}

\srctload{}
\end{mathpar}
}
\caption{Typing \code{gep} and \code{load} \label{fig:srctyping}}
\vspace{-10pt}
\end{figure}
The typing judgment $\stctxb{} \tc{i}$ asserts that instruction $i$ is well-typed under context $(\pc, \cur, \lf)$, where $\pc$, the standard program counter label, tracks control-induced flows, $\cur$ is the current block, and $\lf$ is the function’s label.
A common premise $\pc \flowsto l$ (for result label $l$) prevents implicit flows within a block, ensuring that high-security contexts do not produce low-observable effects. As an exception, \code{declassify}  admits a flow from $\top$ to $\bot$. Most of the typing rules, except for branches, follow standard IFC type system; we therefore show representative typing rules for \code{gep} and \code{load} (Figure~\ref{fig:srctyping}). We exclusively discuss typing branches in Section~\ref{sec:permcontrolflow}. The full typing rules appear in \S~1.4 of the accompanying technical report.

Rule~\textsc{T-Load} states that the \code{load} instruction is
well typed if $\type{\tau}{l_2} \flowsto \type{\tau}{l_1}$ and
$\pc \flowsto l_2$. The former prevents explicit flows: a well-typed
instruction cannot load a value from a private location into a
public variable. The latter prevents implicit flows by ensuring
that the program counter label flows to the label of the accessed
location. 
The relation $\type{\tau}{l_2} \flowsto \type{\tau}{l_1}$ lifts the
$\flowsto$ relation on security labels to full types when $\tau$
is a non-pointer type. Pointer types are invariant under subtyping;
that is, $\type{\tau*}{l} \flowsto \type{\tau*}{l}$.
Rule~\textsc{T-GEP} (and \textsc{T-Store}, not shown here) has
analogous premises.

Notice that \textsc{T-Load} on its own (i.e., without the enforcement of $\top$ and $\bot$ regions) is not sufficient to prevent pointer type confusion attacks. If $y$ happens to be the result of a $\code{gep}$ instruction, the type of $y$ may not match the type of other pointers pointing to the same location (such as a pointer that was assigned that location via $\code{malloc}$). In particular, if we load from a location that was created as an integer location, it could have a value that is not valid as a pointer to an $l_1$ location (because the location it points to has a different security level). This is why we require the additional premise in \textsc{E-Load}.

\textsc{T-Load} alone, without enforcing disjoint $\top$ and $\bot$ regions, cannot prevent pointer type confusion. If $y$ is derived via \code{gep}, its type may differ from that of an alias to the same location. Consequently, loading a pointer from an integer-typed location may produce an address in a region inconsistent with its security label, motivating the additional premise in \textsc{E-Load}.

\paragraph{Typing HOTP.}

In our running example (Figure~\ref{fig:login-llvm}), values derived from the $\top$-labeled \code{\%expected} remain $\top$ unless declassified. Thus, \code{\%status}, \code{\%status\_loc}, and \code{\%status\_1} (lines~17--21) inherit the label $\top$ from \code{\%expected} and can be assigned to the lower-labeled \code{\%out} on line~21 only via declassification; an ordinary assignment would not typecheck. Hence, provided \code{\%expected} is labeled $\top$, the type system prevents its information from flowing to $\bot$ except through explicit declassification.

\subsection{Permissive Control-flow Typing}\label{sec:permcontrolflow}

The problem of over-tainting (or ``label creep'') is exacerbated in
languages with unstructured control flow, where unconditional branches
can induce overly restrictive typing (Figure~\ref{fig:implicitflow}).
Prior work commonly leverages postdominators, a node that executes
along all paths from the current block to the function's exit, to statically reset the
program counter ($\pc$) to a safe label. Our approach follows a similar
intuition but avoids complex continuations~\cite{siftal} and abstract
interpretation–based influence regions~\cite{aldous2015}. Instead, we
introduce a simpler and more expressive mechanism that tracks implicit
flows using postdominator-indexed pc labels, enabling taint to be
safely peeled off at postdominator nodes.

In \srclang, $\pc$ captures control dependencies from both conditional
and unconditional branches. The $\pc$ of a block represents a lower
bound on its control-flow dependencies. Each basic block is assigned a
$\pc$ of one of three forms: a standard confidentiality level $l$, the novel
block-indexed label $\declat{l}{b}$ (``level $l$ until block $b$''), or
a join $\pc_1 \sqcup \pc_2$. The block-indexed label designates $b$, an
immediate postdominator, as a safe reset point for $\pc$, marking the
end of the influence of the last conditional branch.
For example, the \texttt{merge} block in
Figure~\ref{fig:implicitflow} marks the end of the taint induced by
\texttt{\%sec} in the \texttt{entry} block. We first illustrate this idea
on a more complex control-flow example, and then present the type system.

\begin{figure}[h]
\scalebox{0.7}{
\blocklabeldiagram
}
\caption{Permissive control-flow typing using blocked-indexed $\pc$ labels.  \label{fig:permcflow}}
\end{figure}

Figure~\ref{fig:permcflow} illustrates a control-flow graph annotated
(possibly manually) with block-level security labels, modeling a
while-loop nested within an if-condition.
The block \code{while.end} immediately postdominates
both \code{while.cond} and \code{while.body}, serving as the convergence
point for the loop and the conditional. As we will describe later, the
\srclang type system admits this program.

We focus on the $\pc$ labels assigned to each block. The \code{entry}
block inherits the function’s label $\bot$. The $\pc$ for
\code{while.cond},
$\bot \sqcup \declat{\public}{\code{while.end}} \sqcup
\declat{\private}{\code{while.end}}$, captures taints induced by
conditions $c_1$ and $c_2$, both of which can be peeled off at the
postdominator \code{while.end}. The block \code{while.body} carries the
same $\pc$.

At \code{while.end}, the postdominator marks the end of these taints,
allowing the $\pc$ to reset to $\bot$ after peeling off the
corresponding policies. Thus, block-indexed $\pc$ labels enable
fine-grained tracking of taints from multiple conditionals without
requiring complex bookkeeping.

Rather than requiring manual assignment of block security labels, we
provide an algorithm that computes the most permissive $\pc$ labels,
that the type system admits. The formal algorithm is
presented in \S~1.4.1 of the accompanying technical report.
We incorporate this algorithm into \infertool (Section~\ref{sec:impl}), our policy inference
tool, to reduce annotation effort for real-world applications.

Intuitively, the algorithm propagates control-flow taints of the form
$\declat{l}{\code{pdom}}$ along conditional branches, where $l$ is the
label of the branch condition and \code{pdom} its immediate
postdominator. Successor blocks inherit these taints to capture
implicit flows. For example, the block
\code{while.cond} inherits $\declat{\public}{\code{while.end}}$ from the
$\public$-conditioned branch in \code{entry} and
$\declat{\private}{\code{while.end}}$ from the
$\private$-conditioned branch in \code{while.body}, yielding
$\bot \sqcup \declat{\public}{\code{while.end}} \sqcup
\declat{\private}{\code{while.end}}$.
At the postdominator \code{while.end}, these control dependencies are
discharged, and the $\pc$ resets to $\bot$.


\medskip
We now formalize typing rules for branches using block-indexed labels
$\declat{l}{b}$. A key challenge is capturing legal flows between $\pc$
labels. To this end, we introduce a novel block-indexed flows-to
relation, $\flowsto_b$, which refines the standard $\flowsto$ relation.
It propagates security levels along control paths and discharges the
control-flow taint $\declat{l}{b}$ at the corresponding immediate
postdominator, effectively lowering the $\pc$. Intuitively,
$\flowsto_b$ behaves like $\flowsto$ along control paths and peels off
the indexed label at the postdominator where control flow merges.



\begin{figure}[h]
\scalebox{0.85}{
\begin{mathpar}
    \bflowsuper{}
    \bflowhigh{}
    \bflowdecl{}
\end{mathpar}
}
\caption{Block-indexed flows-to \label{fig:bflowsto}}
\end{figure}

Figure~\ref{fig:bflowsto} shows the formal rules. Rules \textsc{BFlow-Super} and \textsc{BFlow-High} lift the standard $\flowsto$ relation to the $\flowsto_b$ relation; rule \textsc{BFlow-Decl} says that a security label can be peeled off once we get to a  safe merge point, namely the immediate postdominator node $b$; note that the block labels match in the conclusion of rule \textsc{BFlow-Decl}.

\begin{figure}[t]
\scalebox{0.85}{
\begin{mathpar}
    \srctbrconditional{}
    
    \srctbrunconditional{}
\end{mathpar}
}
\caption{Typing conditional and unconditional branches. \label{fig:permctrltype}}
\end{figure}

Figure~\ref{fig:permctrltype} shows the typing rules for branches. Rule \textsc{T-Br-Conditional} type-checks a conditional branch by requiring a boolean condition and matching security levels for target blocks. The premise $\trg = \mergepoint{\cur}$ identifies the post-dominator $b_t$. The label $\pc \join \declat{l}{\trg}$ allows permissive control-flow, peeling off $\declat{l}{\trg}$ at the post-dominator, while ensuring $\forall i.\ \pc \join \declat{l}{\trg} \flowsto_{b_i} \Benv(b_i)$. 
\textsc{T-Br-Unconditional} similarly enforces permissive flow to the target label.

Revisiting Figure~\ref{fig:permcflow}, the instruction ${\small{\BrCond{\bluec}{\code{while.end}}{\code{while.body}}}}$ is well-typed  because  $\declat{\top}{\code{while.end}}\, \flowsto_{\code{while.end}} \bot$. Similarly, revisiting  Figure~\ref{fig:implicitflow},  the type system  admits the program under a more permissive $\bot$ policy at the \texttt{merge} block.

\paragraph{Permissive Typing in HOTP.}

Consider extending Figure~\ref{fig:login} so that instead of returning the declassified \texttt{\%status}, the program modifies some secret state (e.g., a bank balance) only if the authentication is successful. Then this operation must take place in a private block (with a high $\pc$), and any subsequent public action (e.g., printing ``transaction complete'') can occur only after the pc is lowered, ensuring that effects observable to the \lattacker are independent of the branch taken.

We conclude the \srclang discussion by formalizing its type system’s security guarantee: it enforces noninterference and gradual release for a \pubattacker (\lattacker with $l=\public$).
While the theorem applies to any \lattacker, we use \pubattacker to represent the more typical attacker.

\begin{theorem}[\srclang Noninterference w.r.t \pubattacker]\label{thm:sirni}
  Let $p$ be a program that does not declassify. If $\stctxp{} \tc p$, then $p$ is noninterfering against a \pubattacker.
  That is, for all initial heaps $\Hspace^0_1$, $\Hspace^0_2$ and variable map $\Vspace$, if:
  \begin{enumerate}
    \item $p$ is well-formed and well-typed; that is, $\stctxp{} \tc p$, and
    \item initial heaps are $\public$-equivalent, that is, $\Hspace^0_1 \simeq_{\public}^{\Menv} \Hspace^0_2$, and
    \item program configurations $\scfgp{h=\Hspace^0_i}$ step to a result configuration $\scfgres{h=\Hspace'_i, c=c_i}$, each emitting a trace $T_i$;
      that is, for all $i \in \{1, 2\}$,  $\scfgp{h=\Hspace^0_i}\ssteptr \scfgres{h=\Hspace'_i,c=c_i} \raises T_i$,
  \end{enumerate}
  then $c_1 = c_2$ and traces are $\public$-equivalent, i.e., $T_1 \simeq_{\public}^{\Menv} T_2$.
\end{theorem}
\begin{proof}
        The proof follows the standard approach introduced by \citet{myers2011while}, building on the soundness proof of \citet{volpano}. When the target of an unconditional branch is a post-dominator, the analysis is more subtle because the $\pc$ may get lowered. For complete proof, refer to Section $\S 1.5$ of the accompanying technical report.
\end{proof}

\begin{theorem}[\srclang Gradual Release w.r.t \pubattacker]\label{thm:sirgr}
  If $\stctxp{} \tc p$, then $p$ satisfies gradual release against a \pubattacker.
  That is, for all initial heaps $\Hspace^0_1$, $\Hspace^0_2$ and variable map $\Vspace$, if:
  \begin{enumerate}
    \item initial heaps are $\public$-equivalent; that is,
    $\Hspace^0_1 \simeq_{\public}^{\Menv} \Hspace^0_2$, and
    \item program configurations $\scfgp{h=\Hspace^0_i}$ step to a result configuration
    $\scfgres{h=\Hspace'_i, c=c_i}$, each emitting a trace $T_i$; that is, for all
    $i \in \{1,2\}$, $\scfgp{h=\Hspace^0_i}\ssteptr \scfgres{h=\Hspace'_i,c=c_i} \raises T_i$,
  \end{enumerate}
  then $c_1 = c_2$ and, for all prefixes $U_1 \preceq T_1$ and $U_2 \preceq T_2$,
  if     $R(U_1) = R(U_2)$,   then $U_1 \simeq_{\public}^{\Menv} U_2$.

\end{theorem}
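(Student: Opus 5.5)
The proof reuses the bracketed-semantics argument behind Theorem~\ref{thm:sirni} and changes only how declassification is handled. We work with the bracketed semantics $\bsstepb$. A single bracketed execution simulates two runs: values and code that may differ between the runs are wrapped in brackets, $\bkappa$ and $\fbbody$. The first step is soundness and completeness of brackets. Every bracketed trace projects to a pair of source traces. Conversely, given two terminating source runs from $\Hspace^0_1 \simeq_{\public}^{\Menv} \Hspace^0_2$, we build a bracketed run whose projections are exactly $T_1$ and $T_2$. The low-equivalence relation $\lequiv$ on configurations, stacks, call stacks and memories serves as the invariant.

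Next we prove subject reduction for the bracketed type system, including the \rulename{[T-Body-Config-High]} and \rulename{[T-Callstack]} judgments. The key point concerns \rulename{[E-Br-Conditional-Raise]}. Here the branch condition evaluates to a bracketed value, which forces $l \nflowsto \bot$. Because of \rulename{T-Br-Conditional} and the $\flowsto_b$ rules of Figure~\ref{fig:bflowsto}, every block reached before $\trg = \mergepoint{\cur}$ has a $\pc'$ with $\declat{l}{\trg} \flowsto \pc'$. Consequently each low-observable write in that region is typed at a level above $\bot$, so it stores bracketed values via $\higher{\cdot}{l}$. The dedicated lowering step is \rulename{[E-Br-Unconditional-Lower]}. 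We must show that both projected runs reach $\trg$ with low-equivalent memories. This holds because $\trg$ is the immediate postdominator, and the function has a unique return node. Calls made while bracketed are handled through the $\bigstar$ target and the high callstack frames.

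Declassification is where this proof departs from the noninterference proof. The instruction $\Declassify{x}{\ty{l=\bot}}{e}$ behaves operationally like assignment, but its typing admits a $\private$ operand. In the bracketed run, evaluating $e$ may give a bracketed $\bkappa$ whose two components differ. We index the bracketed run by the release projection. Fix prefixes $U_1 \preceq T_1$ and $U_2 \preceq T_2$ with $R(U_1)=R(U_2)$. By induction along the bracketed run up to the corresponding point, every declassify executed so far released equal values in both runs. At each declassify step we may therefore replace $\bkappa$ by the common unbracketed value, which preserves $\lequiv$.

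One side condition matters here: a declassify cannot occur in high context, since its typing requires $\pc \flowsto \bot$. Hence the declassified values appear in the same order in both projections, and equality of the release projections is exactly what lets the invariant survive each declassify step. If the first differing release is reached, the premise $R(U_1)=R(U_2)$ fails for all longer prefixes, so nothing further needs proving.

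Trace equivalence follows because each emitted event is a $\langle \Hspace, \Sspace \rangle$ pair. Related configurations yield $\lequiv$ events, and the rules \rulename{Eqv-Trace-LockStep} and \rulename{Eqv-Trace-HalfStep} absorb steps taken only inside brackets. Prefixes of $T_1$ and $T_2$ are handled by truncating the bracketed run at the matching point, using HalfStep for unequal lengths. For $c_1=c_2$, note that at the top level the final \code{ret} is at $\bot$ with $\pc=\bot$. By \rulename{[T-Result-Config]} the returned value is unbracketed, so it is equal in both runs once all releases have agreed. If the runs' releases differ, we still need $c_1=c_2$, since the theorem states this unconditionally. We would take it directly from the statement's reliance on the release function $R$ of the initial heaps: two $\public$-equivalent heaps with $R(\Hspace^0_1)=R(\Hspace^0_2)$ yield identical releases.

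The main obstacle is proving invariant preservation across the postdominator lowering when calls and returns happen inside high regions. We would attack it first, by a lemma stating that a bracketed execution segment starting at a raise reaches $\mergepoint{\cur}$ in the same frame, with only bracketed low-region modifications.
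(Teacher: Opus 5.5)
Your treatment of the trace clause is the paper's argument, spelled out in more detail. The paper's entire proof is that the noninterference proof of Theorem~\ref{thm:sirni} goes through except at \code{declassify}, where $R(U_1)=R(U_2)$ forces both runs to release the same value. Two of your remarks make explicit what that sentence leaves implicit: \code{declassify} is typable only when $\pc\flowsto\public$, so releases occur in lockstep; and pairs with equal release projections lie before the first differing release.

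The genuine gap is your final step for $c_1=c_2$ when the releases differ. You appeal to $R(\Hspace^0_1)=R(\Hspace^0_2)$, but that is not a hypothesis of the theorem, whose only assumptions are $\public$-equivalence and termination. No argument can supply it, because the unconditional claim is false. Consider a \code{main} that reads a $\private$ heap cell on which $\Hspace^0_1$ and $\Hspace^0_2$ differ (reachable, e.g., by \code{gep} inside the $\private$ region), declassifies it to $\public$ and returns it. It is well typed, yet $c_1\neq c_2$. Even granting the extra hypothesis, ``equal $R$ on heaps yields identical releases'' needs a link between the heap-level release function and the \code{declassify} instructions of $p$, which the paper never gives. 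The paper's proof has the same hole: it argues only the trace clause and silently inherits $c_1=c_2$ from Theorem~\ref{thm:sirni}, where nothing is released. What your argument does establish is $c_1=c_2$ whenever $R(T_1)=R(T_2)$. Take $U_i=T_i$; the bracketed run then covers both executions, and the final $\public$-typed \code{ret} returns an unbracketed value. The conclusion has to be weakened to that, or the hypothesis added and explicitly connected to $R(T_i)$.

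A smaller problem of the same kind affects the prefix clause. Read literally, with $U_1$ and $U_2$ chosen independently, it fails even for declassify-free programs. Take $U_1$ to be the first event and $U_2$ a longer prefix that includes a store changing a $\public$ cell. Both have empty release projection, but they are not equivalent: \textsc{Eqv-Trace-HalfStep} can only drop an event that is $\public$-equivalent to the current head of the other trace. Your ``truncate the bracketed run at the matching point'' silently restricts to corresponding prefixes. Make that restriction part of the statement rather than claiming that \textsc{Eqv-Trace-HalfStep} absorbs unequal lengths.
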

\begin{proof}
The proof builds on Theorem~\ref{thm:sirni}. The declassification case breaks the proof, however, since $R(U_1) = R(U_2)$, both executions declassify to the same value, thus restoring the trace equivalence.
\end{proof}

\FloatBarrier

  \section{\targetlang: \srclang + ENclaves} \label{sec:target}
  
\targetlang extends \srclang with seven enclave-specific instructions:
$\eenter$, $\eexit$, $\ekill$, $\ecreate$, $\ocall$, $\oret$, and
$\preserve$. Among these, $\preserve$ and $\ocall$ are novel to
\targetlang: they precisely capture the security requirements of TEEs
during enclave--host transitions and are not modeled in prior
work~\cite{gcoopsla2016}.
These instructions are essential for faithfully modeling real-world
TEE execution but introduce additional challenges for the translation.
In particular, $\ocall$ captures interactions with the host (e.g.,
enclaves do not support syscalls), which are common in practical applications. Supporting
these features enables a realistic evaluation; we explicitly
measure the cost of $\ocall$ during execution.

Figure~\ref{syntax-siren} shows only the extended syntax;
the complete syntax appears in \S~3.1 of the accompanying technical report.
The meta-variable $\mu$ denotes the execution mode: $\mu = 0$
corresponds to host execution, while $\mu > 0$ identifies execution
within a specific enclave. The set $\Elive$ tracks the currently active
enclaves (excluding $\mu = 0$, which is always active). Mode
transitions occur only via $\eenter$, $\eexit$, $\ocall$, and $\oret$.
Instruction $\EnclaveCreate{\mu}{\var{b}}$ initializes enclave $\mu$ (if inactive)
and associates block $b$ with it (analogous to how Intel SGX loads code into enclaves),
while $\EnclaveKill{\mu}$ terminates the enclave.
An enclave may be
entered multiple times via $\eenter$, possibly at different blocks. Each block
is assigned a mode $\mu$ in which it executes, and well-formed programs require
enclave blocks to be associated with the enclave before execution.

\begin{figure}[ht]
\scalebox{0.83}{
  \begin{minipage}[t]{0.32\textwidth}
  \centering
  \begin{align*}
    Mode \quad \mu \in&\ \mathbb{N} \\
    Live\ Enclaves \quad \Elive \subseteq&\ \mathbb{N}^{+} \\
    OCall\ status \quad \ocstat \in&\ \{\ocyes, \ocno \} \\
    Ground\ type \quad \tau \ddefeq& \mathrm{Int_1} \mid \mathrm{Int_{32}} \mid \tgtptr{\tau}{\mu} \\
    i \ddefeq& \dots  \mid \Preserve{\var{x}}{\type{\tau}{l} \var{y}} \\
	  & \mid \OCall{\var{x}}{\type{\tau}{l}\fname{f}}{\arglist} \\
	  & \mid \EnclaveCreate{\mu}{ \var{b}} \mid \EnclaveKill{\mu} \mid \EnclaveEnter{ \var{b}} \\
    t \ddefeq& \dots \mid \EnclaveExit \mid \colorbox{gray!50}{\ensuremath{\ORet{\type{\tau}{l} \var{x}}}} \\
    \mathit{block} \ddefeq& b^{\mu}: \mathit{body}\\
    \func \ddefeq& \FDef{\type{\tau}{\lf} \fname{f}^{\mu, \ocstat}}{\paramlist}{\overline{block}} \\
     \mathit{Memory~Spec} \quad \TMenv :&\ m \mapsto (\mu, Type) \\
     \mathit{Heap} \quad \THspace :&\ \bigcup_{\mu \in \Elive~\cup~h} \THspace_\mu \\
     \quad h = \{0\} \\
    \mathit{Typing~context} \quad \TTypeEnv :&\ x \mapsto Type \\
  \end{align*}
  \end{minipage}
}
\caption {\targetlang extensions \label{syntax-siren}}
\vspace{-10pt}
\end{figure}

Instruction $\EnclaveEnter{\var{b}}$ transfers control to enclave block $b$,
and $\EnclaveExit$ restores host mode. However, variables defined inside the
enclave may leak secrets when control returns to the host. In Intel SGX,
developers must explicitly clear such register state---or replace it with a
synthetic state---before exiting the enclave. We capture this requirement with
the instruction $\Preserve{\var{x}}{\type{\tau}{l}\var{y}}$, which records that
the value of variable $y$ is preserved through $x$, so that $x$ may be used
after enclave exit. A well-formedness condition (described later) requires $\preserve$ to appear
immediately before an $\eexit$.


The instruction
$\OCall{\var{x}}{\type{\tau}{l}\fname{f}}{\arglist}$
invokes a host function from within an enclave. It behaves like a
standard $\code{call}$ but performs a mode switch from enclave to host
and restores the enclave context upon return. Such transitions can
introduce subtle re-entrancy vulnerabilities; for example, a host
function may recursively invoke enclave code, leading to logic or
security errors.

To mitigate these issues, each function is annotated with a mode
$\mode$ and an attribute $\ocstat \in \{\ocyes, \ocno\}$. The mode
tracks the execution domain, while $\ocstat$ controls whether the
function may be invoked via an $\ocall$, thereby preventing unsafe
enclave re-entry, a key source of
subtle security vulnerabilities in TEE-based systems.
This design enforces a well-structured call discipline across enclave boundaries.

A function $f$ is \ocstat-callable only if
$\ocstat = \ocyes$. The $\oret$ instruction is automatically generated upon return from
an $\ocall$, restoring control from the host to the enclave.

Pointer operands (not shown) in \code{malloc}, \code{load},
\code{store}, and \code{gep} are annotated with mode types.
As in \srclang, the set of memory locations $\Loc$ includes both heap and stack
locations, indexed by security levels and modes. The meta-variables
$\THspace$, $\TSspace$, and $\TMenv$ denote the corresponding heap,
stack, and memory security specification, respectively.

\smallskip
\noindent \textbf{Well-Formedness.}
To simplify security enforcement, we restrict our attention to well-formed
programs that satisfy additional enclave-related structural constraints,
beyond the standard well-formedness conditions of LLVM programs. A key
property required for our analysis is \emph{variable dominance}, which ensures
that the definition of each variable reaches all of its uses. We 
adapt the standard LLVM block dominance  to host and enclave regions,
and use it to formalize variable dominance.





\begin{figure}[h]
\begin{lstlisting}[style=mystyle, language=llvm, mathescape=true]
b$_0$:
	eenter b$_e$;
	x = $\langle \mathrm{Int_{32}}$, $\bot \rangle$ y;
	ret 0;
b$_e$:
	z = $\langle \mathrm{Int_{32}}$, $\bot \rangle$ 3;
	y = $ \langle \mathrm{Int_{32}}$, $\bot \rangle$ preserve z;
	eexit;
\end{lstlisting}
\caption{Variable dominance: Line~7 dominates line~3.\label{fig:vardom}}
\end{figure}

\begin{definition}[Variable Dominance in \targetlang]\label{def:siren-block-vardom}
A definition of variable $x$ dominates instruction $i$ if one of the
following holds: (i) \textbf{Standard dominance:} the instruction
defining $x$ dominates $i$ in the control-flow graph;
(ii) \textbf{Enclave-exit propagation:} if $x$ is defined by a
\code{preserve} in a block reachable from $b_e$, then the instruction
$\EnclaveEnter{\bLabel \var{b_e}}$ is treated as a definition of $x$ for
dominance purposes; and (iii) \textbf{Enclave entry dominance:} if $i$
occurs in an enclave block $b_i$ and $x$ is defined in a host block,
then the definition of $x$ must dominate every
$\EnclaveEnter{\var{b_e}}$ instruction such that $b_e$ can reach $b_i$.
\end{definition}

\noindent Variable dominance ensures that enclave-local variables do not
escape without an explicit $\preserve$. Consider the program in Figure~\ref{fig:vardom},
which enters enclave block $b_e$ (line~2), assigns a value to $z$ (line~6),
preserves it in $y$ (line~7), and then uses $y$ (line~4) after exiting the
enclave (the control returns to instruction after $\eenter$).
Without line~7, the definition of $y$ would not dominate its use at
line~4. The enclave-exit propagation rule establishes this dominance by
propagating the preserved value to the corresponding enclave entry, allowing
$z$ to be used outside the enclave.

Note that variable dominance alone does not prevent secret values from escaping. For
example, $y$ would still dominate line~4 even if $\public$ were replaced with
$\private$. Such leaks are prevented by the type system.


A \targetlang program is \emph{well formed} if it satisfies standard SSA
discipline—each variable is defined exactly once and its definition
dominates all uses, with \code{phi} instructions placed at block
entries—together with additional enclave-specific constraints, namely
variable dominance (Definition~\ref{def:siren-block-vardom}) and
single-entry, single-exit enclave regions.

While the  first two conditions correspond to standard LLVM well-formedness
requirements; the remaining conditions introduce enclave-specific constraints.
In particular, variable dominance is unique to \targetlang. Unlike prior works~
\cite{gcoopsla2016}, which rely on flow-sensitive typing to track sensitive
variables within an enclave, variable dominance simplifies the type system by
enforcing constraints on $\preserve$ without requiring the tracking of other
variables.

\subsection{Semantics of \targetlang}
The \targetlang small-step semantics
${\tcfgi{i=i}} \longrightarrow_i {\tcfgi{h=\THspace', s=\TSspace',
v=\TVspace', e=\Elive', i=i'}}$ closely mirrors that of \srclang,
with two additional components in the configuration: the set of active
enclaves $\Elive$ and the current execution mode $\mu$. The set
$\Elive$ records all enclaves that have been initialized but not yet
terminated via $\ekill$.

The semantics of \targetlang differ from \srclang in two respects:
(i) memory-access instructions, whose behavior is modified, and
(ii) enclave-specific instructions, which are absent in \srclang.
We focus on these differences, using \code{load} as a representative
memory-access instruction.

Figure~\ref{fig:semantics-siren} presents the evaluation rules for
\code{load} and enclave-specific instructions. The complete semantics
are given in \S~3.2 of the accompanying technical report.

Data within an enclave can only be accessed by code executing in the
same enclave; \targetlang enforces this invariant. Rule~\rulename{E-Load}
retrieves the value $c$ stored at location $m$, pointed to by $y$, and
assigns it to variable $x$. The premise
$\mu_y \in \{\mu, 0\}$ ensures that $m$ resides either in the host
($\mu_y = 0$) or in the same enclave as the current instruction
($\mu_y = \mu$). The premises of the \rulename{E-Store} and
\rulename{E-Gep} rules are analogous and are omitted for brevity.
For example, $\Load{x}{\type{\itt}{\bot}}{\type{\tgtptr{\itt}{1} \var{y}}{\bot}}$ fails in host mode ($\mu=0$) because $\mu_y=1\notin\{0\}$, preventing the host from accessing enclave memory. Conversely, an enclave may access host memory since $0\in\{\mu,0\}$.


\begin{figure}[t]
\scalebox{0.9}{
  \begin{mathpar}
		\trgeload{}

		\trgeecreate{p1={\langle b,\mu \rangle \in \TFspace(\cur)}}

		\trgeeenter{p1={\langle b,\mu \rangle : \body' \in \TFspace(\cur)}}

		\trgeeexit{}
	\end{mathpar}
}
	\caption{\targetlang: Memory access and enclave semantics \label{fig:semantics-siren}}
        \vspace{-20pt}
\end{figure}

Rule~\rulename{E-Ecreate} initializes enclave $\mu$ by inserting block $b$ into its memory space, creating the enclave if it has not yet been activated.
The  premise $\langle b, \mu\rangle \in \TFspace(b_c)$ ensures that the mode of enclave block $b$ matches that of the enclave being created.
The premises $\Elive' = \Elive \cup \{\mu\}$ and  $\THspace' = \THspace \cup \THspace_\mu$ update the live enclaves and the global heap to include the created enclave.

Rule~\rulename{E-Eenter} handles transitions from host to enclave mode.
The current mode must be the host ($\mu = 0$) and the control transfers to enclave block $b$ under mode $\mu$,
Semantically, this resembles a function call but includes additional constraints.
The premise $\mu \in \Elive$ requires that enclave $\mu$ is live; $\fresh{\TStspace, \mu}$ allocates a new stack frame parameterized by the enclave mode; and $\TXspace' = \tcfgxenclave{} :: \TXspace$ records the  caller’s (the host continuation) configuration for later restoration. As with \textsc{E-Ecreate}, we require that $\langle b, \mu\rangle \in \TFspace(b_c)$, ensuring that the mode of $b$ matches $\mu$.

Rule~\rulename{E-Eexit} governs transitions from enclave to host mode.
Intuitively, this corresponds to a function return: the enclave stack is cleared before control resumes in the host.
The premise $\mu \neq 0$ enforces that exits occur only from within an enclave.
The final configuration ensures that upon restoration, control returns to the same host context saved by \rulename{E-Eenter}.
Finally, the premise $\langle b_c', 0\rangle \in \TFspace(b_c)$ confirms that the block being resumed indeed belongs to the host function.

For example, let host block $b_1$ contain $\EnclaveEnter{b_2}$, and let enclave blocks $b_2$ and $b_3$ have mode $1$, with $b_3$ containing $\EnclaveExit$. Before entering the enclave, $b_2$, $b_3$, and any intervening blocks must be associated with enclave $1$ via $\EnclaveCreate{1}{b}$. Executing $\eenter$ switches the mode from $0$ to $1$ and transfers control to $b_2$. Execution remains in enclave mode until $\eexit$, which returns control to the instruction following $\eenter$ in $b_1$ and restores host mode ($\mu=0$).



We briefly outline the semantics of the $\ekill$, $\ocall$, and $\oret$ instructions.
The instruction $\ekill~\mu$ tears down enclave $\mu$, updating the set $\Elive$.
The $\ocall$ instruction is used to make a call to a host-mode function.
Its semantics are similar to those of the $\code{call}$ instruction, except that the execution mode switches to $0$, and an additional frame containing the $\oret$ instruction is pushed onto the call stack $\TXspace$.
Upon returning, $\ocall$ first performs a normal $\code{ret}$, followed by $\oret$, which restores the original enclave execution mode.

\subsection{\targetlang Security Definition}
 The definitions of noninterference and gradual
 release against \lattacker are as in \srclang.
 In addition to \lattacker, we consider a more powerful \eattacker
that can observe the entire host memory. A program is secure against
\eattacker if it cannot distinguish the traces of two \targetlang
configurations with equivalent initial host memories, denoted
$\THspace_1 \simeq_h^{\TMenv} \THspace_2$. Trace equivalence,
$\mathcal{T}_1 \simeq^{\TMenv}_h \mathcal{T}_2$, is defined pointwise.
We require the memory specification $\TMenv$ to be well formed to ensure security against
\eattacker; in particular, no non-enclave memory location may be
labeled \private.

\begin{definition}[\targetlang Noninterference w.r.t. \eattacker] \label{def:trgnieattacker}
  Given a well-formed memory specification $\TMenv$, a well-formed program $p$ is noninterfering with respect to an \eattacker, if for all initial heaps $\THspace^0_1$, $\THspace^0_2$ and a variable map $\TVspace$:
  \begin{enumerate}
  \item the initial heaps are $h$-equivalent; that is, $\THspace^0_1 \simeq_h^{\TMenv} \THspace^0_2$, and
  \item the program configurations $\tcfgp{h=\THspace^0_i}$ step to result configurations $\tcfgres{h=\THspace'_i}$, each emitting the trace $\mathcal{T}_i$; that is, for all $i \in \{1, 2\}$,  $\tcfgp{h=\THspace^0_i}\tsteptr \tcfgres{h=\THspace'_i, c=c_i} \raises \mathcal{T}_i$,
  \end{enumerate}
  then $c_1 = c_2$ such that the traces are $h$-equivalent; that is, $\mathcal{T}_1 \simeq^{\TMenv}_h \mathcal{T}_2$.
\end{definition}

\begin{definition}[\targetlang Gradual Release w.r.t \eattacker]\label{def:gradualrelease}
Given a memory specification $\TMenv$ and a release projection $R(\cdot)$ over traces (defined as in SIR),
a program $p$ satisfies gradual release against a \lattacker\ if for all initial heaps
$\THspace^0_1$, $\THspace^0_2$ and a variable map $\TVspace$:

\begin{enumerate}
\item the initial heaps are $h$-equivalent; that is,
$\THspace^0_1 \simeq_{h}^{\TMenv} \Hspace^0_2$,

\item the program configuration $\tcfgp{h=\THspace^0_i}$ steps to the final configuration
$\tcfgres{h=\THspace'_i, c=c_i}$, emitting the trace $\TT_i$; that is, for all $i \in \{1,2\}$,
$
\tcfgp{h=\THspace^0_i}
\tsteptr
\tcfgres{h=\THspace'_i, c=c_i}
\raises \TT_i,
$
\end{enumerate}

then $c_1 = c_2$ and for all prefixes $U_1 \preceq T_1$ and $U_2 \preceq T_2$, if
$R(U_1) = R(U_2)$,
then $U_1 \simeq_{h}^{\Menv} U_2$.
\end{definition}

\subsection{\targetlang Type System} \label{sec:targettypes}

The typing judgment $\ttctxb{} \tc i$ states that instruction $i$,
executing in mode $\mode$, is well typed under the typing context
$\ttctxb{}$. This context extends that of \srclang with $\mode$ and
$\ocstat$, which track the current execution mode and the ocall status
of the enclosing function. Most typing rules mirror their \srclang
counterparts, with one additional premise:
$\mu \neq 0 \lor l \sqsubseteq \public$, which enforces that any
instruction operating on non-public data executes within an enclave.
Consequently, private data can only be manipulated during enclave
execution. Additionally, $\ocall$ checks $\ocstat$ compatibility and ensures that
the callee executes in the host, while regular function calls require
mode compatibility.

As with the semantics, we focus on memory-access instructions and
enclave-specific constructs. Figure~\ref{fig:trgtyping} presents the
typing rules corresponding to the semantics in
Figure~\ref{fig:semantics-siren}. The complete typing rules are given
in \S~3.4 of the accompanying technical report.

\begin{figure}[t]
\scalebox{0.9}{
    \begin{mathpar}
      \trgtload{}
      \trgteexit{}

      \trgteenter{}
      \trgtecreate{}
    \end{mathpar}
    }
    \caption{\targetlang: typing load and enclave instructions.\label{fig:trgtyping}}
    \vspace{-20pt}
\end{figure}


Rule \rulename{Tt-Load} requires that an enclave pointer can only be stored with a value when the execution mode matches.
The premise $(\mu = \mu_y) \lor (\mu_y = 0)$ ensures that if the pointer belongs to some enclave, then either the current execution mode matches the pointer mode, or the pointer belongs to the host. Note that this premise does not restrict storing to host pointers even if the current execution mode is not the host ($\mu \ne 0$).
This models the Intel SGX constraint that enclave memory is accessible only to code within the same enclave—while host code is denied access, enclave code retains access to host memory. The remaining premises are similar to those in \rulename{T-Load}, the \srclang counterpart for the \code{store} instruction.

Rule~\rulename{Tt-Eenter} specifies the conditions under which an
enclave can be entered from the host. The premise $\mu = 0$ ensures
that transitions originate only from host mode, reflecting the Intel
SGX restriction that disallows direct enclave-to-enclave transitions.
Since $\eenter$ behaves like an unconditional branch, the additional
premise $\pc \sqsubseteq l_b$ prevents implicit flows arising from
jumps from private to public blocks.

Rule~\rulename{Tt-Eexit} ensures that enclave exits occur in the
correct execution mode. Rule~\rulename{Tt-Ecreate} requires the current
execution mode to be host ($\mu = 0$). Additionally, enclave creation
is disallowed in secret contexts (premise $pc\sqsubseteq \public$), preventing enclave initialization
from being influenced by sensitive control flow.

\paragraph{\targetlang Typing HOTP.}

In Figure~\ref{fig:login-llvm}, secret variables such as \code{\%expected} and \code{\%status}, and instructions operating on them, must reside in the enclave; public code may execute in either the host or enclave. Enclave placement alone, however, does not ensure security. For example, \code{store i32 \%expected, \%ptr}, where \code{\%ptr} points to host memory, leaks \code{\%expected} to the \eattacker, since enclaves may write to host memory. SIREN's security guarantees therefore arise from combining enclave isolation with IFC enforcement.

\paragraph{Type Soundness} The \targetlang's type system is sound---a well-typed program is secure against both \lattacker and \eattacker threat models.  Note that we require $\TMenv$ to be well-formed for security against an \eattacker.  Theorems~\ref{thm:siren-ni-high} and \ref{thm:siren-ni-low} formally state the soundness of type system. For simplicity, we omit  gradual release version of the theorems.

\begin{theorem}[\targetlang Noninterference w.r.t a \pubattacker] \label{thm:siren-ni-high}
  Let $p$ be a program that does not declassify. If $\srctypecheckcom{\ttctxp{}}{p}$ then $p$ is noninterfering with respect to a \pubattacker for any well-formed specification $\TMenv$.\end{theorem}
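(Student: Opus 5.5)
The plan is to reduce Theorem~\ref{thm:siren-ni-high} to the proof of Theorem~\ref{thm:sirni}, reusing its bracketed-semantics technique (Pottier--Simonet style, following \citet{myers2011while}). The first step is an erasure argument. Every \targetlang typing rule contains its \srclang counterpart's premises plus extra enclave premises, such as $\mu \neq 0 \lor l \flowsto \public$ and the mode-compatibility conditions. So a well-typed \targetlang program, with its enclave instructions read as control transfers, satisfies the same label discipline as a well-typed \srclang program. Concretely:
\begin{itemize}
\item $\eenter$ is read as a call-like branch with $\pc \flowsto l_b$;
\item $\eexit$ is read as a return;
\item $\ocall$/$\oret$ are read as a call and return;
\item $\preserve$ is read as an assignment;
\item $\ecreate$ and $\ekill$ are read as $\pc \flowsto \public$ effects on $\Elive$ only.
\end{itemize}
This gives the \emph{syntactic} reuse. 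I would not literally translate back to \srclang. Instead I would redo the bracketed-pair argument on \targetlang directly, using the bracketed rules already set up: [Et-Call-High], [Et-Ret-High], [Et-Br-Unconditional-Lower], and so on.

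The key steps, in order, are as follows. (1) Define $\public$-equivalence on bracketed \targetlang configurations, extending the \srclang relation with the components $\Elive$ and $\mu$. In low configurations I would require $\Elive_1 = \Elive_2$ and $\mu_1 = \mu_2$. (2) Prove \emph{subject reduction} for the bracketed semantics. In particular, well-typed memories keep the region invariant: a $\top$ pointer only points into $\top$ regions, and the enclave-mode discipline holds ($\mu_y \in \{\mu, 0\}$). (3) Prove a \emph{high-step lemma}. Within a bracketed region, any step leaves low memory, low variables and $\Elive$ unchanged. This follows from $\pc \flowsto l$ on writes, from $\pc \flowsto \public$ forbidding $\ecreate$ and $\ekill$ under high $\pc$, and from $\pc \flowsto l_b$ on $\eenter$. (4) Prove a \emph{low lockstep lemma}. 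Two equivalent unbracketed configurations either step to equivalent configurations, or both branch on a $\top$ condition and enter bracketed mode with the same $\trg = \mergepoint{\cur}$. (5) Prove a \emph{reconvergence lemma}. Both bracketed runs that terminate reach the same postdominator $\trg$, where [Et-Br-Unconditional-Lower] drops the brackets, and the configurations there are again equivalent. (6) Combine these by induction on trace length, using the Trace-NextStep and Eqv-Trace-HalfStep rules. This yields $T_1 \simeq_\public^{\TMenv} T_2$ and $c_1 = c_2$, the latter because \code{main} returns $\type{\itt}{\public}$.

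The main obstacle should be step (5) in the presence of mode transitions inside a high region. An $\eenter$/$\eexit$ or $\ocall$/$\oret$ pair may occur in one run's high region and not in the other's. The call stacks $\TXspace_i$ then have different shapes, and the $\eexit$ frame restores the host $\TVspace$. I would handle this the way [Eqv-Callstack-High] handles high calls: treat enclave frames pushed under high $\pc$ as bracketed, and skip them in the equivalence. Variable dominance (Definition~\ref{def:siren-block-vardom}) ensures that only $\preserve$d values escape, and $\preserve$ is typed with $\pc \flowsto l$, so such a value is $\top$ whenever it is written under high $\pc$. The single-entry/single-exit well-formedness of enclave regions should guarantee that a high-context $\eenter$ exits before the postdominator is reached. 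If this fails, I would fall back to strengthening the invariant. The strengthening would state that $\TVspace \concat \flatten{\TXspace}$ is equivalent, as in [TEqv-Body-Config-High], rather than requiring frame-by-frame agreement.
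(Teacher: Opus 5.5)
Your plan takes essentially the same route as the paper, which proves Theorem~\ref{thm:siren-ni-high} by analogy with Theorem~\ref{thm:sirni}: the Myers-style bracketed lockstep argument, whose subtle case is the $\pc$-lowering branch to a postdominator, and your steps (1)--(6) are that argument transplanted to \targetlang. One correction to your last paragraph: \rulename{Et-Eexit} does not restore a saved host $\TVspace$ (the $\eenter$ frame stores none, so every binding made inside the enclave stays in $\TVspace$, not only the $\preserve$d ones), but your step (3) already covers this, since under a high $\pc$ every write goes to a variable of label $l$ with $\pc \flowsto l$, hence a high one.
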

\begin{proof}
 The proof is analogous to that of  \Cref{thm:sirni} (\nameref{thm:sirni}). For complete proof, refer to Section $\S 3.5$ of the accompanying technical report.
 \end{proof}

\begin{theorem}[\targetlang Gradual Release w.r.t a \pubattacker] \label{thm:siren-gr-high}
  If $\srctypecheckcom{\ttctxp{}}{p}$ then $p$ is satisfies gradual release with respect to a \pubattacker for any well-formed specification $\TMenv$.\end{theorem}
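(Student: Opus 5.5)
The proof follows the structure of Theorem~\ref{thm:sirgr}, using Theorem~\ref{thm:siren-ni-high} as the base result. We reuse the bracketed semantics $\longrightarrow_{[b]}$ of \targetlang, which marks values and code produced under secret control as $[\cdot]$, together with the low-equivalence relation $\lequiv$ on configurations. On top of these we establish three things.

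\textbf{(i) Correspondence and preservation.} Every pair of \targetlang executions can be embedded in a single bracketed execution, and bracketed executions project back to the two original runs. The bracketed semantics also preserves typing (\rulename{[Tt-Body-Config-Low]} and \rulename{[Tt-Body-Config-High]}).

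\textbf{(ii) Single-step preservation of $\lequiv$.} We proceed by case analysis on the typing derivation of the current instruction and prove that $\lequiv$ is preserved at each step. Low configurations step in lockstep. High (bracketed) configurations step alone until \rulename{[Et-Br-Unconditional-Lower]} returns control at the postdominator $\trg$. The enclave-specific instructions fit into this scheme as follows:
\begin{itemize}
\item $\ecreate$ and $\ekill$ require $\pc \flowsto \public$, so they execute only in lockstep.
\item $\eenter$, $\eexit$, $\ocall$ and $\oret$ push or pop call-stack frames in the same way as $\code{call}$ and $\code{ret}$.
\item $\preserve$ is treated like an assignment.
\end{itemize}
All of these cases are already handled in the proof of Theorem~\ref{thm:siren-ni-high}, so we cite that proof for them.

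\textbf{(iii) The declassify case.} This is the only new case. Suppose both runs are low and reach $\Declassify{x}{\type{\tau}{\bot}}{e}$ with $e$ of level $\private$. In general the values of $e$ may differ, which breaks $\lequiv$ on $\TVspace$. We therefore strengthen the invariant. We prove that for every pair of prefixes $U_1, U_2$ of the two traces, if $R(U_1)=R(U_2)$, then the configurations at the ends of $U_1$ and $U_2$ are $\lequiv$.

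We argue this by induction on the length of the bracketed run, threading the release hypothesis through the induction. A declassify event extends $R(U_i)$ by the declassified value. Equality of the two projections therefore forces the two values to coincide, so $x$ is bound to equal unbracketed constants and $\lequiv$ is restored.

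Two further points must be checked:
\begin{itemize}
\item A declassify cannot occur in a bracketed (high-$\pc$) region. If it could, the event would appear in only one run, $R$ would be desynchronized, and the hypothesis would become vacuous for all longer prefixes. To exclude this, we examine the typing rule for $\code{declassify}$, which we expect to require $\pc \flowsto \bot$ as with \rulename{Tt-Asgn} at label $\bot$.
\item The prefixes $U_1, U_2$ may have different lengths because of half steps. We handle this by pairing them through the \rulename{L-Eqv-Trace-HalfStep} structure and observing that half steps emit no release events.
\end{itemize}

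\textbf{Concluding the theorem.} Equivalence of the final configurations yields $c_1=c_2$ in the case where the whole traces have equal releases. When the releases differ, the claim about prefixes is vacuous beyond the first mismatch, and for shorter prefixes the invariant above applies. We expect the main obstacle to be this prefix bookkeeping: making the prefix-wise induction compatible with the asynchronous high steps, and confirming that \targetlang's declassify typing forbids declassification under enclave-induced secret control.
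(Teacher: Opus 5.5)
Your core argument is the one the paper intends. Its proof of Theorem~\ref{thm:siren-gr-high} is just ``analogous to Theorem~\ref{thm:sirgr}'': rerun the noninterference proof and observe that the only new case, \code{declassify}, is repaired because $R(U_1)=R(U_2)$ forces both runs to declassify the same value. Your checks that no declassification happens under bracketed (high-$\pc$) control, and that half steps emit no release events, fill in what the paper leaves implicit.

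The gap is in your concluding paragraph, which proves less than the statement asserts. The gradual-release definition demands $c_1=c_2$ \emph{unconditionally}. You obtain it only when the whole traces have equal releases, and you say nothing about the other case. This cannot be patched, because the literal claim is false. Take a well-typed program whose \code{main} returns a value computed from a declassified private heap value (the paper's HOTP example has this shape). Run it on $\bot$-equivalent heaps that differ only in that private value, and you get $c_1\neq c_2$. So either state the conclusion as ``$R(T_1)=R(T_2)$ implies $c_1=c_2$''. Or add the hypothesis $R(\THspace^0_1)=R(\THspace^0_2)$ (the paper defines a release function on initial heaps but never uses it), together with the assumption that it determines every value declassified along the run.

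Your invariant has a similar problem. The claim ``for every pair of prefixes with $R(U_1)=R(U_2)$ the end configurations are $\lequiv$'' is false for misaligned prefixes, and so is the literal ``for all prefixes'' clause you derive from it. Consider two identical runs of a declassification-free program that performs two successive public stores. Take one prefix ending after the first store and another ending after the second. They have equal (empty) releases, but their heaps differ at a public location. No \textsc{L-Eqv-Trace-LockStep}/\textsc{L-Eqv-Trace-HalfStep} derivation relates them, since any such derivation must end by pairing the two last events.

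Your pairing remark is the right fix, but it must be part of the invariant and of the final claim: quantify only over prefixes aligned by the lockstep/half-step correspondence of the two bracketed runs. For those, equality of releases passes to every shorter aligned prefix, because aligned low steps are either both declassifications or neither. Your induction then goes through. With both restrictions made explicit your proof is sound and coincides with the paper's sketch, which glosses over the same two points.
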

\begin{proof}
 The proof is analogous to that of  \Cref{thm:sirgr} (\nameref{thm:sirgr}).
 \end{proof}

\begin{theorem}[\targetlang Noninterference w.r.t an \eattacker] \label{thm:siren-ni-low}
  Let $p$ be a program that does not declassify. If $\srctypecheckcom{\ttctxp{}}{p}$ then $p$ is noninterfering with respect to an \eattacker for any well-formed specification $\TMenv$.
  Formally, for all $\TVspace$, $\THspace^0_1$, and $\THspace^0_2$, if:
\begin{enumerate}
\item The program is well-formed and well-typed; i.e., $\srctypecheckcom{\ttctxp{}}{p}$,
\item Initial heaps are $h$-equivalent and $l$-equivalent; i.e., $\THspace^0_1 \simeq_{h} \THspace^0_2$ and $\THspace^0_1 \simeq_{\public}^{\TMenv} \THspace^0_2$,
\item Program configuration $\configthree{\THspace^0_i}{\TVspace}{p}$ steps to the final configuration $\scfgres{h=\THspace'_i, c=c_i}$, emitting a trace $\Ttrace_i$; i.e.,
 for all $i = \{1, 2\}$,  $\srcsmallsteptrace{\configthree{\THspace^0_i}{\TVspace}{p}}{\configtwo{\THspace'_i}{c_i}}{\Ttrace_i}$
\end{enumerate}
then $c_1 = c_2$ such that the traces are $h$-equivalent; i.e., $\Ttrace_1 \simeq^{\TMenv}_{h} \Ttrace_2$.
\end{theorem}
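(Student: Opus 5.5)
The plan is to reduce \eattacker-security to the already-established \pubattacker-noninterference of \targetlang (Theorem~\ref{thm:siren-ni-high}), using the well-formedness of $\TMenv$. The key observation is that a well-formed specification labels no host location \private. Host memory is therefore a subset of the $\public$-observable memory, so host equivalence should be implied by $\public$-equivalence on the host portion. I would make this precise with a projection lemma: if $\TMenv$ is well formed and $\THspace_1 \simeq_{\public}^{\TMenv} \THspace_2$, then $\restrict{\THspace_1}{\THspace_0} \simeq_h^{\TMenv} \restrict{\THspace_2}{\THspace_0}$, and likewise for host stack frames (frames allocated with mode $0$). The lemma lifts pointwise to events $\tcfgeventsub{sub=i}$ and hence to traces, because both $\simeq_h$ and $\simeq_\public$ on traces are defined pointwise and by the same lockstep, half-step and empty rules.

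The main chain of reasoning then runs as follows. Hypotheses (1)--(3) of the statement supply exactly the premises of Theorem~\ref{thm:siren-ni-high}: the program is well typed and declassification-free, the initial heaps are $\public$-equivalent, and both runs terminate with traces $\Ttrace_1,\Ttrace_2$. That theorem yields $c_1=c_2$ and $\Ttrace_1 \simeq_\public^{\TMenv} \Ttrace_2$. Applying the projection lemma event by event gives $\Ttrace_1 \simeq_h^{\TMenv} \Ttrace_2$. The $h$-equivalence hypothesis on the initial heaps is then not needed for the conclusion; it is implied in spirit, and we keep it only to match Definition~\ref{def:trgnieattacker}.

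The main obstacle is that the \eattacker observes \emph{all} host memory, including host locations at which a $\public$ typing is established only through the mode discipline. A valid reduction therefore depends on the type system guaranteeing that every write into host memory is $\public$-labelled. Several cases must be checked: stores from enclave code through host pointers (rule \rulename{Tt-Store} with $\mu_y=0$), host stack frames pushed by $\ocall$ (the argument values), and $\preserve$/$\eexit$ leaking enclave variables to host variables. The case analysis I would carry out goes rule by rule. For stores, the target label $l_2$ of a host pointer must be $\public$ because $\TMenv$ is well formed and well-typed memories respect $\TMenv$. For $\ocall$, rule \rulename{Tt-Function} with $\mu_f=0$ forces $\lf \flowsto \bot$, and the parameter labels of a host function are likewise public. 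For $\eexit$, \rulename{Tt-Preserve} together with variable dominance means only preserved variables cross the boundary, and those are public under the host-mode premise $\mu\neq0 \lor l\sqsubseteq\public$.

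If the trace events also expose host variable maps (they do not here, since events are $\configtwo{\THspace}{\TSspace}$), this step would be unnecessary. As a result, the delicate part reduces to the memory-typing invariant being preserved along execution. I would discharge that invariant with the subject-reduction lemma underlying Theorem~\ref{thm:siren-ni-high}, strengthened with the clause ``$\TMenv(m)=(0,\type{\tau}{l})\implies l=\public$''.
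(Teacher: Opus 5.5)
Your reduction is correct and has the same skeleton as the paper's proof. Both build on Theorem~\ref{thm:siren-ni-high} and use the invariant that every non-public location lives in some enclave, which for you is the well-formedness of $\TMenv$.

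The difference is in what else is invoked. The paper also appeals to run-time isolation: enclave locations are accessed only by code of the same enclave, and other accesses abort, which termination-insensitivity absorbs. It then argues that host memory stays indistinguishable step by step. You instead note that under a well-formed $\TMenv$ every host location is $\public$-labelled. Hence $\simeq_{\public}^{\TMenv}$ on events projects to $\simeq_h^{\TMenv}$, the theorem becomes a pure corollary, and the $h$-equivalence hypothesis is redundant. Your route is cleaner and makes explicit that isolation plays no role in this statement beyond whatever it contributes to Theorem~\ref{thm:siren-ni-high}. The paper's phrasing mainly explains why the \eattacker model is meaningful.

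For the same reason, drop your third paragraph rather than complete it. The black-box theorem already gives agreement on every location that $\TMenv$ labels $\public$, so no rule-by-rule audit of host writes is needed. Likewise, the clause $\TMenv(m)=(0,\type{\tau}{l})\implies l=\public$ is just static well-formedness of $\TMenv$, not something subject reduction has to preserve.

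Some claims in that audit are also inaccurate, so do not rely on them:
\begin{itemize}
\item \rulename{Tt-Preserve} has no premise forcing the preserved variable to be $\public$.
\item $\ocall$ arguments go into the callee's variable map, not into host memory.
\end{itemize}
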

\begin{proof}
 The proof builds on the top of \Cref{thm:siren-ni-high} (\nameref{thm:siren-ni-high}). It heavily relies on two crucial invariants enforced both by the small-step semantics and the security type system. First, a non-public memory location always belongs to some enclave. Second, enclave locations are accessed  by the same enclave; otherwise the program aborts. Thus, the host memory is indistinguishable at every step. Note that our security definition is termination-insensitive. For complete proof, refer to Section $\S 3.5$ of the accompanying technical report.
\end{proof}

\begin{theorem}[\targetlang Gradual Release w.r.t an \eattacker] \label{thm:siren-gr-low}
 If $\srctypecheckcom{\ttctxp{}}{p}$ then $p$ satisfies gradual release with respect to an \eattacker for any well-formed specification $\TMenv$.
\end{theorem}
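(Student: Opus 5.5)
The plan is to derive this theorem from Theorem~\ref{thm:siren-ni-low} in the same way Theorem~\ref{thm:sirgr} is derived from Theorem~\ref{thm:sirni}. The bracketed-semantics bisimulation that underlies the \eattacker noninterference proof is reused unchanged, and only the declassification step needs new reasoning. We fix two runs from initial heaps with $\THspace^0_1 \simeq_h^{\TMenv} \THspace^0_2$ (and $\public$-equivalent, as in Theorem~\ref{thm:siren-ni-low}), together with prefixes $U_1 \preceq \TT_1$ and $U_2 \preceq \TT_2$ such that $R(U_1)=R(U_2)$. We then prove, by induction on the combined length of the prefixes, that the bracketed configurations stay related by the low-equivalence relation ($\lequiv$ on heaps, stacks, variable maps and call stacks) whenever the release projections agree so far. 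The trace rules Eqv-Trace-LockStep and Eqv-Trace-HalfStep then give $U_1 \simeq_h^{\TMenv} U_2$.

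For every non-declassify step, the single-step preservation lemmas from the proof of Theorem~\ref{thm:siren-ni-low} apply directly. These cover low steps in lockstep and high (bracketed) steps by half-stepping until both runs reach the immediate postdominator $\trg$ via [Et-Br-Unconditional-Lower]. Those lemmas rely on two invariants. The first is that a non-$\public$ location always lies in some enclave, which follows from the premise $\mu\neq0 \lor l\flowsto\public$ together with well-formedness of $\TMenv$. The second is that enclave locations are touched only in their own mode, which follows from $\mu_y\in\{\mu,0\}$ in the semantics. Neither invariant involves declassification, so both carry over. Agreement of the final results, $c_1=c_2$, follows as before because $\main$ returns a $\public$ value outside any bracket.

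The new case is $\Declassify{x}{\type{\tau}{\public}}{e}$, where $e$ has a $\private$ type. I would first establish a typing invariant: declassify can occur only at $\pc=\public$ outside bracketed execution. This needs the typing rule to require $\pc\flowsto\public$, which I expect is the case since the target of the declassify is $\public$ and there is the common premise $\pc\flowsto l$. Given this invariant, both runs execute the declassify in lockstep. Each produces a value that appears in the release projection, and these values occur at the same position in $U_1$ and $U_2$. So the hypothesis $R(U_1)=R(U_2)$ forces the two values to be equal, and the updated variable maps $\TVspace_i[x\mapsto\kappa]$ remain $\lequiv$. Since $x$ is $\public$ and may reside in host state, we must also check that the value is no longer bracketed, so that $\public$-equivalence and $h$-equivalence are restored for the rest of the run.

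The main obstacle is the subtlety of the prefix quantifier. The hypothesis $R(U_1)=R(U_2)$ constrains only the declassifications inside the chosen prefixes. So the induction must be stated as: ``for all pairs of aligned prefixes whose release projections agree, the configurations at their ends are related.'' Alignment has to be defined so that a half-step inside a bracketed region never splits the runs at a declassify event. The typing invariant that there is no declassify under a high $\pc$ is what guarantees this. A second, smaller obstacle is showing that a declassified value written by enclave code into host memory, or preserved across $\eexit$, does not break $h$-equivalence. This again reduces to equality of the released values, combined with the $\preserve$/variable-dominance discipline from Definition~\ref{def:siren-block-vardom}.
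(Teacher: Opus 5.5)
Your route is the paper's: reuse the \eattacker noninterference argument of Theorem~\ref{thm:siren-ni-low} and handle only the declassify case, where $R(U_1)=R(U_2)$ forces both runs to release the same value. You also add an invariant the paper's one-line sketch leaves implicit: declassify is typed only under a $\pc$ that flows to $\public$, so it never runs inside a bracketed region and always executes in lockstep. That invariant is what makes the release positions in $U_1$ and $U_2$ line up.

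One step fails, though: ``$c_1=c_2$ follows as before because $\main$ returns a $\public$ value outside any bracket.'' In the noninterference proof the returned values agree because $\TVspace_1\lequiv\TVspace_2$ holds at the final configurations. In your induction that invariant holds only at the ends of prefixes whose release projections agree, and a $\public$ variable defined by a declassify carries the released value, which may differ between the runs. Concretely, let $\main$ enter an enclave block that loads a $\private$ value through an enclave pointer, declassifies it, stores it to a $\public$ host location and exits; the host then loads that value and returns it. Two initial heaps that are $h$- and $\public$-equivalent but differ at that enclave location give $c_1\neq c_2$. So the unconditional $c_1=c_2$ cannot be proved, and the paper's sketch does not address this either. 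What your argument yields, by instantiating it at $U_i=\TT_i$, is $R(\TT_1)=R(\TT_2)\Rightarrow c_1=c_2$, and the result should be stated in that form.

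In the same spirit, say explicitly that restricting to aligned prefixes repairs the definition rather than deriving it. Read literally, ``for all prefixes'' already fails for programs that never declassify. There $R$ is always empty, so a one-event prefix of $\TT_1$ would have to be equivalent to all of $\TT_2$ even after a $\public$ host store has changed host memory.
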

\begin{proof}
 The proof builds off of \Cref{thm:siren-ni-low} (\nameref{thm:siren-ni-low}) in a manner similar to \Cref{thm:sirgr} (\nameref{thm:sirgr}) and \Cref{thm:siren-gr-high} (\nameref{thm:siren-gr-high}).
\end{proof}

  \section{Translation} \label{sec:translation}
  A key contribution of this work is to infer enclave placement in
\srclang programs in a secure-by-construction manner: translating a
well-typed \srclang program yields a well-typed \targetlang program.
The soundness of the type system (Theorem~\ref{thm:siren-ni-low})
then guarantees security.

A central challenge in type-preserving translation is reconciling the
differences between \srclang and \targetlang. A naive translation that
assigns enclave mode to all pointers and instructions not only leads to
inefficient use of enclave resources (e.g., limited enclave memory) but
may also violate security requirements, as enclave-resident secrets must
be explicitly scrubbed at exit points, and may be infeasible in practice
due to the need for $\ocall$s.
In contrast, our translation
captures only the minimal set of constraints required for security. In
particular, it ensures that secret pointers are placed in enclaves and
accessed only by code executing within the same enclave. Since multiple
valid translations may exist, we parameterize the translation with a
user-defined objective function (e.g., minimizing enclave code size) to
select a solution with minimal cost.

Translation proceeds in two phases. In the first phase, a well-typed \srclang program emits constraints and an intermediate \targetlangminus program, identical to the input \srclang program but with modes attached to pointers, blocks, instructions and functions; if the mode annotations are ignored, the generated \targetlangminus program is a well-typed \srclang program. Unifying constraints yields an enclave assignment for the \targetlangminus program.
In the second phase, the \targetlangminus program is transformed (by inserting host-enclave transition instructions) into a full-fledged \targetlang program, guided by the enclave assignment.

\subsection{Step One: Constraint Generation} \label{sec:stepone}

\begin{figure}[ht]
\centering
\scalebox{0.9}{
  \begin{mathpar}
    \centering
\inferrule*[Lab=TR-Program]
	       {
                 \translateSrc{\Menv}{\TMenv} \\
                 \translateSrc{\Benv}{\TBenv} \\
                 \translateSrc{\Fenv}{\TFenv} \\
                 \translateSrc{\TypeEnv}{\TTypeEnv} \\\\
                 \translateSrc{\TypeEnv_k}{\TTypeEnv_k} \\
                 p = \overline{f}\\
                 p' = \overline{f'}\\\\
                 \forall~k.~\translateSrc{\srctypecheckcom{\transrccontext_k , \trantrgcontext_k}{f_k}}{\withconstraints{f'_k}{\constr_k}}  \\
                             \constr = \bigcup_k \constr_k
                }
	        {\translateSrc{\srctypecheckcom{\transrccontext, \trantrgcontext}{p}}{\configtwo{p'}{\constr}}}
\end{mathpar}}
  \caption{Translation of program.}
  \label{fig:trmemspec}
  \vspace{-10pt}
\end{figure}

Rule \rulename{TR-Program} states that a well-typed \srclang program  emits \targetlangminus program $p'$ generating constraints $\constr$. Premise  $\translateSrc{\srctypecheckcom{\transrccontext_{k}, \trantrgcontext_{k}}{f_k}}{\withconstraints{f'_k}{\constr_k}}$ translates each  function $f_k$ to $f'_k$ and emits constraint set $\constr_k$. The final constraint, $\constr$ is the union of all intermediate constraint sets. Note that  $\constr$ may yield zero or more  solutions.

The premise $\translateSrc{\Menv}{\TMenv}$ translates \srclang memory specification $\Menv$ to $\TMenv$ and ensures that all private location to some (possibly distinct) enclave; while the remaining premises   $\translateSrc{\Benv}{\TBenv}$, $\translateSrc{\Fenv}{\TFenv}$, $\translateSrc{\TypeEnv}{\TTypeEnv}$ translate  \srclang block level map $\Benv$,  function definition map $\Fenv$  and variable typing context $\TypeEnv$ to corresponding \targetlang contexts.
Translation generates fresh mode variables for blocks, functions, and reference types, while the typing context is recursively translated to propagate these mode annotations consistently.

The translation of a function  proceeds by translating blocks that in turn proceeds by translating individual instructions. Notably, the block translation yields $\overline{\mode}$, a sequence of mode assignments for the instructions of the block. The mode of the block itself is the least of the instruction modes: if there are different instruction modes, then $\mode_{c}$ is the host; otherwise, it is equal to the mode of all instructions.

The instruction translation judgment
$\translateSrcInt{\srctypecheckcom{\transtypingcontextbody}{i}}{\withconstraints{\mode, i'}{\constr}}$
states that a well-typed \srclang instruction $i$ is translated to a
\targetlang instruction $i'$ that executes in mode $\mode$ and
satisfies the constraints $\constr$. All translation rules include the
common constraints $\pc \ne \public \Rightarrow \mu \ne 0$ and
$\mode_f \ne 0 \Rightarrow \mode = \mode_f$. The former ensures that
instructions in a secret context are placed within an enclave, while
the latter ensures that the execution mode of the instruction matches
that of the enclosing function.

Function calls introduce additional constraints: either the caller and
callee execute in the same mode, or, if the caller executes within an
enclave, the callee executes in the host. In the former case, the call
is translated as a standard function call; in the latter, it is
translated to $\ocall$.

Two classes of instructions deserve special attention: memory-access
instructions and conditional branches. For the former, the translation
must generate constraints that ensure pointer modes match the execution
mode of the instruction. For the latter, we adopt a conservative
strategy requiring that the source and successor blocks have the same
execution mode. This prevents undesirable cases such as multiple
enclave entry points. Figure~\ref{fig:trloadrw} shows the translation rules for load and
conditional branch instructions; the complete set of translation rules
is presented in \S~5.1 of the accompanying technical report.

To illustrate the issue with pointer modes, consider our running example (Figure~\ref{fig:login-llvm}). The public pointer \code{\%ctr\_loc} may point to either host or enclave memory. If it points to enclave memory, any instruction accessing it must execute in the enclave, since any attempt to access it from the host will fail; if it points to host memory, the instruction may execute in either mode, since enclaves can access host memory.

\begin{figure}[t]
\scalebox{0.85}{
\begin{mathpar}

         \inferrule*[Lab=TR-Load]
		{\TTypeEnv(y) = (\_, \mode', \_) \\
                 \TFenv(f) = (\mode_f, \ocstat_f) \\\\
                 i' = \Load{\var{x}}{\type{\tau}{l_1}}{\type{\ptr{\tau}, \mode'}{l_2} \var{y}} \\\\
		  \constr =
                          {\left\{\!\begin{aligned} 
                            & \pc \ne \public \Rightarrow \mode \ne 0, \\
                            & l_2 \ne \public \Rightarrow \mode' \ne 0, \\
		            & \mode'\ne0 \Rightarrow \mode = \mode' \\
                               & \mu_f \ne 0 \Rightarrow \mu = \mu_f \\
                             \end{aligned} \right\}
                          }
                   }
		{\translateSrcInt{\srctypecheckcom{\transtypingcontext}{\Load{\var{x}}{\type{\tau}{l_1}}{\type{\ptr{\tau}}{l_2} \var{y}}}}
		                          {\withconstraints{\mode, i'}{\constr}}}

\inferrule*[Lab=TR-Br-Conditional]
		           {\translateSrc{e}{e'} \\
                             \TFenv(f) = (\mode_f, \ocstat_f) \\\\
                             \TBenv(b_i) = (\mode_i, \_) \\
                             i' = \BrCond{\type{\ione}{l}e'}{ \var{b_1}^{\mu_1}}{ \var{b_2}^{\mu_2}} \\\\
                             \constr =
                          {\left\{\!\begin{aligned} 
                            & \pc \sqcup l \ne \public \Rightarrow \mode \ne 0 \\
                            &  \mu_1 = \mu_2 = \mu \\
                             & \mu_f \ne 0 \Rightarrow \mu = \mu_f \\
                            \end{aligned} \right\}
                          }
		 }
		{\translateSrcInt{\srctypecheckcom{\transtypingcontextbody}{\BrCond{\type{\ione}{l}e}{ \var{b_1}}{ \var{b_2}}}}
		                          \withconstraints{\mode, i'}{\constr}}
\end{mathpar}
}
  \caption{Translation of load and conditional branch. \label{fig:trloadrw}}
  \vspace{-15pt}
\end{figure}

Rule~\rulename{TR-Load}  emits
constraints ensuring that the mode of a private pointer $y$ is enclave,
i.e., $l_2 \ne \public \Rightarrow \mode' \ne 0$. The constraint
$\mode' \ne 0 \Rightarrow \mode = \mode'$ further requires that an
enclave pointer is accessed only by instructions executing in the same
enclave. Notably, these constraints do not prevent a host pointer from
being accessed within an enclave, which accurately reflects the memory
access controls enforced by Intel SGX. The translation for store is analogous.

Rule \rulename{TR-Br-Conditional} emits constraints related to the condition and target block labels. Constraint $\pc \sqcup l \ne \public \Rightarrow \mu \ne 0 $ ensures that if the condition is private, then the instruction executes inside an enclave. The constraint $\mu_1 = \mu_2 = \mu$ requires that the current and the target blocks execute in the same mode. The latter constraint is not strictly necessary but simplifies the enclave placement in the presence of complex control-flow.

\subsection{Step Two: Program Transformation}\label{sec:steptwo}

\begin{figure}[ht]
\scalebox{0.9}{
\begin{mathpar}
\inferrule*[lab=ERw-Eexit]
                 {\Bmode{b} \neq (\mode, \_) \\ \mode\neq 0}
                 {\translatePostProcess{\configthree{\TBenv}{\mode}{\Br{\var{b}}}}
                                                     {\code{eexit}}}

  \inferrule*[lab=ERw-Eenter]
                 {\Bmode{b} \neq (\mode, \_) \\ \mode = 0 \\\\
                  b' = \modepdom{b} \\
                  \translatePostProcess{\configthree{\TBenv}{\mode}{\Br{{b'}}}}
                                                     {\mathit{rest}}}
                 {\translatePostProcess{\configthree{\TBenv}{\mode}{\Br{{b}}}}
                                                     {\Seq{\code{eenter}~{\var{b}}}{\mathit{rest}}}}

\end{mathpar}
}
\caption{Rewrite rules for enclave transition \label{fig:enclavetrans}}
\vspace{-10pt}
\end{figure}

The second phase inserts enclave instructions at mode-transition
points. We abstract this step as
$\code{postprocess}(p^-, \TTypeEnv^-, \constr)$, where $p^-$,
$\TTypeEnv^-$, and $\constr$ are produced by the first phase. It first
splits basic blocks so that each contains instructions of a single
execution mode, and then rewrites unconditional branches as
$\code{eenter}$ or $\code{eexit}$ instructions.

To preserve control flow, we identify a \emph{mode-exit} block that is
reachable from all enclave blocks and insert branches to it after each
$\code{eenter}$. Intuitively, the mode-exit block acts as a join point
that ensures all enclave executions return to a consistent continuation
in the host.

The rewriting rules \rulename{RW-Eenter} and
\rulename{RW-Eexit} (Figure~\ref{fig:enclavetrans}) ensure correct
continuation across mode transitions. Finally, the program is rewritten
to insert $\preserve$ instructions in $\eexit$ blocks, and the resulting
definitions are propagated to their uses, yielding a structurally
well-formed \targetlang program, ensuring well-structured control flow
across enclave boundaries.

\newcommand{\mytikzmark}[1]{%
  \tikz[overlay,remember picture,baseline] \coordinate (#1) at (0,0) {};}

\newcommand{\highlight}[4]{%
  \draw[fill={green!40},opacity=0.25]%
    ([xshift=-3pt, yshift=#3]#1) rectangle ([yshift=#4]#2);%
}

\newcommand{\highlightclear}[4]{%
  \draw[opacity=0.25]%
    ([xshift=-3pt, yshift=#3]#1) rectangle ([yshift=#4]#2);%
}

\begin{figure*}[t]
\begin{subfigure}[c]{0.46\textwidth}
  \scalebox{0.83}{
\begin{lstlisting}[style=mystyle, language=llvm, mathescape=true]

define i32$^\bot$ @verifyOtp(i32$^\bot$ %submitted, i64$^\bot$ %ctr) $(\bot)$ {
entry:
  %submitted_loc = alloca i32$^\bot$
  %ctr_loc = alloca i64$^\bot$
  %expected_loc = alloca i32$^\top$
  %status_loc = alloca i32$^\top$
  store i32$^\bot$ %submitted, ptr$^\bot$ %submitted_loc
  store i64$^\bot$ %ctr, ptr$^\bot$ %ctr_loc
  ;; expected = expectOtp(ctr)
  %ctr_1 = load i64$^\bot$, ptr$^\bot$ %ctr_loc
  %expected = call i32$^\top$ @expectedOtp(i64$^\bot$ %ctr_1)
  store i32$^\top$ %expected, ptr$^\top$ %expected_loc
  ;; status = ct_compare(expected, submitted)
  %expected_1 = load i32$^\top$, ptr$^\top$ %expected_loc
  %submitted_1 = load i32$^\bot$, ptr$^\bot$ %submitted_loc
  %status = call i32$^\top$ @ct_compare(i32$^\top$ %expected_1, i32$^\bot$ %submitted_1)
  store i32$^\top$ %status, ptr$^\top$ %status_loc
  ;; return declassify_i32(status)
  %status_1 = load i32$^\top$, ptr$^\top$ %status_loc
  %out = declassify i32$^\bot$, i32$^\top$ %status_1
  ret i32$^\bot$ %out
}
\end{lstlisting}
  }
  \caption{\code{verifyOTP()} in \srclang \label{fig:login-llvm-tl}}
\end{subfigure}
\vspace{-10pt}
~
\begin{subfigure}[c]{0.52\textwidth}
  \centering
  \scalebox{0.83}{
\begin{lstlisting}[style=mystyle, language=llvm, mathescape=true, escapechar=~]

define i32$^\bot$ @verifyOtp(i32$^\bot$ %sbmt, i64$^\bot$ %ctr) $(\bot)$ {
~\mytikzmark{hl3Start}~entry.host: ;; $\mu = 0$ (host)
  $\colorbox{highlightcolor}{ecreate}$ $\mu=1$, label %entry.enclave.0
  $\colorbox{highlightcolor}{ecreate}$ $\mu=1$, label %entry.enclave.1
  %submitted_loc = alloca i32$^\bot$~\colorbox{highlightcolor}{, $\mu=0$}~
  %ctr_loc = alloca i64$^\bot$~\colorbox{highlightcolor}{, $\mu=0$}~
  $\colorbox{highlightcolor}{eenter}$ label %entry.enclave.0
  store i32$^\bot$ %submitted, ptr$^{\bot}_{\mu=0}$ %submitted_loc
  store i64$^\bot$ %ctr, ptr$^\bot_{\mu=0}$ %ctr_loc
  %ctr_1 = load i64$^\bot$, ptr$^\bot_{\mu=0}$ %ctr_loc
  $\colorbox{highlightcolor}{eenter}$ label %entry.enclave.1
  ;; return declassify_i32(status)
  ret i32$^\bot$ %out                           ~\mytikzmark{hl3End}~

~\mytikzmark{hl1Start}~entry.enclave.0: ;; $\mu = 1$ (enclave 1)
  %expected_loc = alloca i32$^\top$~\colorbox{highlightcolor}{, $\mu=1$}~
  %status_loc = alloca i32$^\top$~\colorbox{highlightcolor}{, $\mu=1$}~
  $\colorbox{highlightcolor}{eexit}$                                   ~\mytikzmark{hl1End}~

~\mytikzmark{hl2Start}~entry.enclave.1; ;; $\mu = 1$ (enclave 1)
  ;; expected = expectOtp(ctr)
  %expected = call i32$^\top$ @expectedOtp(i64$^\bot$ %ctr_1)
  store i32$^\top$ %expected, ptr$^\top_{\mu=1}$ %expected_loc
  ;; status = ct_compare(expected, submitted)
  %expected_1 = load i32$^\top$, ptr$^\top_{\mu=1}$ %expected_loc
  %submitted_1 = load i32$^\bot$, ptr$^\bot_{\mu=0}$ %submitted_loc
  %status = call i32$^\top$ @ct_compare(i32$^\top$ %expected_1, i32$^\bot$ %submitted_1)
  store i32$^\top$ %status, ptr$^\top_{\mu=1}$ %status_loc
  ;; declassify_i32(status)
  %status_1 = load i32$^\top$, ptr$^\top_{\mu=1}$ %status_loc
  %out = declassify i32$^\bot$, i32$^\top$ %status_1
  $\colorbox{highlightcolor}{eexit}$                                 ~\mytikzmark{hl2End}~
}
\end{lstlisting}
  }
  \caption{\code{verifyOTP()} in \targetlang \label{fig:login-siren}}
\end{subfigure}%
  \begin{tikzpicture}[remember picture, overlay]
     \highlight{hl1Start}{hl1End}{2pt}{-2pt}
     \highlight{hl2Start}{hl2End-|hl1End}{12pt}{26pt}
     \highlightclear{hl3Start}{hl3End-|hl1End}{-22pt}{-10pt}
  \end{tikzpicture}
\caption{Translation of HOTP.}
\end{figure*}

\noindent
\paragraph{HOTP: End-to-End Translation.}

Figure~\ref{fig:login-llvm-tl} shows the original \srclang code for initiatelogin(). First, we generate mode constraints. Each line gets a mode $\mu$, and each line that uses a load, store or, alloca instruction gets a mode $\mu'$ for its associated pointer. There are four pointers, allocated on lines~4--7. The first two, \code{submitted\_loc} and \code{ctr\_loc}, are public pointers and may go in either the host or the enclave. The second two, \code{expected\_loc} and \code{status\_loc}, are private and must go in the enclave. The constraints require that each pointer have the same associated $\mu'$ each time it is used (i.e. if it goes in the host, it must always be in the host). Additionally, each instruction must satisfy $\mu' \neq 0 \Rightarrow \mu=\mu'$, requiring an instruction to execute in the pointer's enclave whenever the pointer is not in host memory.

Lines~12--20 directly involve secret registers and therefore require enclave execution. Lines 6 and 7 allocate private pointers, so they must also run in an enclave. Other instructions are constrained by $\pc\neq\bot\Rightarrow\mu\neq0$ and $\mu_f\neq0\Rightarrow\mu=\mu_f$. Here, the first is vacuous because $\pc=\bot$ throughout; the second matters only if the function mode $\mu_f$ is nonzero, in which case all instructions must execute in enclave $\mu_f$.

Next, we choose any enclave assignment satisfying these constraints. One simple allocation is to put lines 6--7, and 12--20 in an enclave, and the rest in the host. Then, since the lines allocating the public pointers will run in host mode, we must put those in the host as well, and the private pointers must go in the enclave because they are private.

The translated program, shown in Figure~\ref{fig:login-siren}, introduces enclave blocks \code{entry.enclave.0} and \code{entry.enclave.1}, along with $\ecreate$, $\eenter$, and $\eexit$ instructions to ensure proper host-enclave translations. Execution begins in host block \code{entry}, enters \code{entry.enclave.0}, which runs in enclave mode, then returns to \code{entry} after the first $\eenter$ and continues in host mode. It then enters \code{entry.enclave.0}, which runs in enclave mode, and then returns to to \code{entry} after the second $\eenter$, and runs in host mode until the end of the program. This is only one valid solution; other satisfying assignments yield different but equally correct and secure \targetlang programs.

\subsection{Type-preserving Translation}\label{sec:transpreserve}
 Heap translation $\translateSrcHeap{\TMenv}{\Hspace}{\THspace}$, the last missing piece, works by mapping each \srclang location to a \targetlang sub-heap consistent with its mode and security level in the target memory specification. In particular, all private \srclang sub-heaps are translated to enclave sub-heaps in \targetlang, ensuring that sensitive data resides only within enclave memory regions.

The translation is type preserving: translating a well-typed
\srclang configuration yields a well-typed \targetlang configuration.
This property is formalized in
Theorem~\ref{thm:transtypepreserve}
(\nameref{thm:transtypepreserve}). 

  \begin{theorem}[Type Preserving Translation]\label{thm:transtypepreserve}
 Let a well-formed and well-typed \srclang $p$ be translated to \targetlang $p^{e}$ program such that the following hold.
\begin{enumerate}
\item The \srclang configuration is well-typed; i.e., $\Menv, \TypeEnv \tc  \scfgp{}$.
\item The \srclang program $p$ is translated to the \targetlangminus program $p^{-}$; i.e., $\translateSrc{\srctypecheckcom{\TypeEnv, \TTypeEnv^{-}}{p}}{\configtwo{p^{-}}{\constr}}$ (see \rulename{TR-Program}).
\item The \targetlang memory specification $\TMenv$ is well-formed; i.e., $\vdash \TMenv$.
\item The \srclang heap $\Hspace$ is translated to \targetlang heap $\THspace$; i.e., $\translateSrcHeap{\TMenv}{\Hspace}{\THspace}$.
\item The \srclang variable map $\Vspace$ is translated to \targetlang variable map $\TVspace$; i.e., $\translateSrcHeap{\TMenv}{\Vspace}{\TVspace}$.
\item Post-processing \targetlangminus program $p^{-}$ yields $p'$; i.e.,  $\code{postprocess}(p^{-}, \TTypeEnv^{-}, \constr) = \langle p^e, \TTypeEnv \rangle $.
\end{enumerate}
Then the translated program configuration is well-typed; i.e., $\TMenv, \TTypeEnv \tc  \tcfgp{p=p^e}$.
\end{theorem}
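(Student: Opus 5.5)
The plan is to prove the statement by structural induction on the translation derivation, mirroring the decomposition of \rulename{Tt-Program-Config} into its three premises: $\tc p^e$, $\TMenv \tc \THspace$, and $\TMenv, \TTypeEnv \tc \TVspace$. The program premise is the substantive part. The two memory premises I would discharge directly from hypotheses (3)--(5). The heap translation $\translateSrcHeap{\TMenv}{\Hspace}{\THspace}$ places every private location in some enclave sub-heap and preserves values, so each well-typed \srclang cell $\Menv \tc \Mspace(m) : \Menv(m)$ becomes a well-typed \targetlang cell at type $(\mu, \Menv(m))$. Pointer values stay within their security region, because the region invariant of \srclang memories (enforced by \rulename{E-GEP}/\rulename{E-Load} and folded into well-typed memories) is exactly what lets each pointer be assigned the single mode its target lives in. The same argument applied pointwise gives $\TVspace$ well-typed, and $\tc \configtwo{\Elive}{\THspace}$ follows because the initial heap contains exactly $\THspace_0$ and the sub-heaps of the created enclaves.

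For $\tc p^e$ I would split along the two translation phases. First I would prove an \emph{instruction lemma}: if $\stctxb{} \tc i$ and $\translateSrcInt{\srctypecheckcom{\transtypingcontextbody}{i}}{\withconstraints{\mode, i'}{\constr}}$, then any solution $\sigma$ of $\constr$ gives $\sigma(\TTypeEnv), \sigma(\mu), oc, \pc, \cur, \lf \tc \sigma(i')$. This goes case by case over the translation rules. The \srclang premises ($\pc \flowsto l$, subtyping, and the $\flowsto_b$ premises for branches) carry over verbatim, since \targetlang's rules are the \srclang rules plus extra premises. Each extra premise is then matched by an emitted constraint:
\begin{itemize}
\item $\mu\neq 0 \lor l\flowsto\public$ is matched by $\pc\neq\public\Rightarrow\mu\neq0$, together with the label constraint on the pointer or result (e.g.\ $l_2\neq\public\Rightarrow\mode'\neq0$ in \rulename{TR-Load}).
\item $(\mu=\mu_y)\lor(\mu_y=0)$ is matched by $\mode'\neq0\Rightarrow\mode=\mode'$.
\item Block-mode equality in \rulename{Tt-Br-Conditional} is matched by $\mu_1=\mu_2=\mu$.
\item The call/ocall dichotomy is matched by the call constraints.
\end{itemize}
Lifting this through \rulename{Tt-Body}, \rulename{Tt-Function} (including $\mu_f\neq0\Rightarrow\mu=\mu_f$ and $\mu_f\neq0\lor\lf\flowsto\bot$) and \rulename{Tt-Program} yields the well-typedness of $p^-$ under $\sigma$.

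The second phase, showing that $\code{postprocess}$ preserves typing, is where I expect the main obstacle. The plan is a simulation lemma for each rewrite, with the cases below.
\begin{itemize}
\item \emph{Block splitting.} This keeps each instruction's mode and typing context. The new fragments inherit the block's $\pc$ via $\Benv$, and the unconditional branches connecting them satisfy $\pc \flowsto_b \pc$ reflexively.
\item \emph{\rulename{ERw-Eexit}.} This produces $\eexit$ only when $\mu\neq0$, which is exactly \rulename{Tt-Eexit}.
\item \emph{\rulename{ERw-Eenter}.} This requires $\mu=0$, and $\mu_b\neq0$ holds by the side condition. For $\pc\flowsto l_b$, I use the original \rulename{T-Br-Unconditional} premise $\pc\flowsto_b\Benv(b)$. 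The delicate point is that $\flowsto_b$ may peel a $\declat{l}{b}$ label, whereas \rulename{Tt-Eenter} demands plain $\flowsto$. I would argue that enclave entries never target the postdominator of a still-active branch, because conditional branches force equal modes on both successors. Hence a mode change cannot coincide with a $\pc$ discharge, and on such edges $\flowsto_b$ collapses to $\flowsto$.
\item \emph{The continuation branch to $\modepdom{b}$.} This is typed using the mode-exit block being a postdominator of the region.
\item \emph{Inserted \code{ecreate}s.} These occur at function entry, where $\pc=\lf\flowsto\public$ for host functions.
\item \emph{Inserted $\preserve$.} These require $\mu=0$ per \rulename{Tt-Preserve}, the label of $y$, and $\pc\flowsto l$. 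I would choose $x$ with the source label of $y$ and check that well-formedness (variable dominance, Definition~\ref{def:siren-block-vardom}) holds by construction.
\end{itemize}
Combining the two phases with the memory premises gives $\TMenv, \TTypeEnv \tc \tcfgp{p=p^e}$.
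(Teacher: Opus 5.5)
Your top-level split (the three premises of \rulename{Tt-Program-Config}, an instruction lemma for phase one, and a case analysis of the rewrites for phase two) is reasonable. However, the step joining the two phases fails. Lifting the instruction lemma through \rulename{Tt-Body}, \rulename{Tt-Function} and \rulename{Tt-Program} does not make $p^-$ well typed under $\sigma$. Blocks of $p^-$ mix instruction modes: a block gets host mode as soon as its instructions' modes differ, as with the single \code{entry} block of HOTP. Meanwhile \rulename{Tt-Body} types a whole body under one $\mu$, and \rulename{Tt-Br-Unconditional} requires $\TBenv(b)=(\mu,l_b)$ for the \emph{current} $\mu$. So an enclave \code{load} through a pointer of mode $\mu'\neq0$ inside a host-mode block is ill typed, and so is a host-to-enclave $\Br{b}$. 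Phase two therefore cannot be a preservation argument. You must carry the instruction-level invariant (each instruction typed under its own solved mode and its block's $\pc$) into $\code{postprocess}$. You then need to show three things: splitting yields uniform-mode blocks typable by \rulename{Tt-Body}; every cross-mode $\Br{b}$ is rewritten; and every surviving branch has a same-mode target.

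Your resolution of \rulename{ERw-Eenter}, which you flag as the main obstacle, rests on a false claim. Among branches, only conditional ones force equal modes, and only on their immediate successors. Nothing prevents the immediate postdominator $b$ of a $\public$-conditioned host branch with host-mode arms from being placed in an enclave. In that case an arm's $\Br{b}$, typed by peeling $\declat{\public}{b}$ off at $b$, becomes $\EnclaveEnter{b}$, so a mode change does coincide with a $\pc$ discharge. The premise $\pc\flowsto l_b$ of \rulename{Tt-Eenter} holds for a simpler reason. $\eenter$ is emitted only where $\mu=0$, and the common constraint $\pc\neq\public\Rightarrow\mu\neq0$ forces the $\pc$ of host code to be public, hence below every label. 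Your claim does hold for $\private$-conditioned branches, but only because that constraint puts the entire influence region in enclave code. The same observation also discharges $\pc\flowsto\public$ for inserted \code{ecreate}s; ``$\pc=\lf$ at function entry'' does not, since an entry block's $\pc$ may lie strictly above $\lf$. It likewise discharges the $\pc$ premise of the continuation branch to $\modepdom{b}$. For that branch, the real typing obligation is that the mode-exit block is host-mode, not that it postdominates the region.

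There are also two smaller problems.
\begin{enumerate}
\item \rulename{Tt-Load} constrains the destination label ($\mu\neq0\lor l_1\flowsto\public$). With $\pc=l_2=\public$ and $l_1=\private$, the \rulename{TR-Load} constraints you cite permit $\mu=0$. You therefore need a result-label constraint $l_1\neq\public\Rightarrow\mu\neq0$, not the pointer one.
\item \srclang's region invariant separates $\private$ from $\public$ memory, but it does not fix the mode of a stored pointer's target: $\public$ locations may be host or enclave, and $\private$ ones may lie in distinct enclaves. Agreement of stored pointers and of $\TVspace$ with the mode annotations in $\TMenv$ and $\TTypeEnv$ must instead come from the definitions of the $\Menv$, $\Hspace$ and $\Vspace$ translations together with the solved constraints.
\end{enumerate}
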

Proof outline is presented in  $\S 5.3$ of the accompanying technical report. Note that the well-typed \targetlang configuration is secure against \pubattacker and \eattacker attackers (see Theorems~\ref{thm:siren-ni-high} and~\ref{thm:siren-ni-low}), and since Theorem~\ref{thm:transtypepreserve} guarantees that the translated program is well-typed, we have that the translation enforces security against both \pubattacker and \eattacker. Also, note that since $p$ satisfies gradual release, so does $p^e$. We leave the mechanized
verification of Theorem~\ref{thm:transtypepreserve} to future work.

Theorem~\ref{thm:transtypepreserve} does not state that the translation is semantics-preserving;
establishing this formally requires careful reasoning about
semantics-preserving control flow and is left for future work.
Nevertheless, our evaluation provides empirical evidence that the
translation preserves semantics.

  \section{Implementation}\label{sec:impl}
  
Our implementation consists of two major components: \infertool, which infers a permissive, complete security policy from sparsely annotated IR, and \splitacronym (pronounced ``splitter''), which consumes the annotated IR and generates an enclave--host partition.

\subsection{\infertool}
\infertool infers a complete security policy in five phases. First, it parses and links the IR files into a single module for whole-program analysis. Second, it constructs the call graph and identifies strongly connected components (SCCs). Third, it processes SCCs bottom-up, solving each independently and reducing it to a summary of its interface with the rest of the program (e.g., that information flows from an argument to the return value). Fourth, it performs a global solve over function signatures and SCC summaries, incorporating seed annotations to determine each function's argument, return, and $\pc$ labels. Omitting function bodies from this solve enables \infertool to scale to large programs such as OpenSSL. Finally, it re-solves each SCC using these concrete labels to annotate individual instructions.

During the final phase, \infertool employs a \emph{repair loop} to resolve conflicting constraints. For example, if a private function \code{foo} is called from a public function \code{bar}, \infertool presents the developer with the UNSAT core, who can then \emph{declassify} \code{foo}'s return value, \emph{promote} \code{bar} to private, or \emph{clone} \code{foo} into public and private versions with separate signatures.

\textbf{Security Annotation Effort.}
Manually annotating programs beyond a few hundred IR instructions is impractical. Instead, developers provide a small number of \emph{seed} annotations marking where secrets enter the program (e.g., the private-key argument to \code{RSA\_private\_decrypt}); \infertool propagates these labels according to the \srclang typing rules to infer a complete security policy.

Although preliminary and non-optimizing, \infertool annotates all our benchmarks, including OpenSSL (${\sim}$400k instructions), in under 10 minutes, requiring at most one manual annotation per distinct confidentiality level. Many benchmarks require only a single annotation specifying the initial security context. More complex policies may require additional annotations, but we expect this effort to remain small relative to program size. Future work will support a larger class of instructions (e.g., indirect calls), improve automation and usability, including source-level annotations and IDE integration.

\subsection{\splitacronym}
\splitacronym is an LLVM 20 new-pass-manager module pass that partitions annotated LLVM IR produced by \infertool into OpenEnclave host and enclave components. It faithfully implements the constraint-based translation of Section~\ref{sec:translation}, deriving per-instruction placement constraints and solving them using a built-in max-flow min-cut solver. The solver models each LLVM instruction as a node, with the source and sink representing host and enclave placement, respectively. Weighted edges encode placement and transition costs; the minimum cut assigns instructions to the host or enclave while minimizing the selected objective.

\splitacronym supports four primary objective functions: \emin, which minimizes TCB size; \tmin, which minimizes enclave--host transitions; \objboth, which minimizes both \emin and \tmin; and \objdata, which minimizes data transferred across the enclave--host boundary.

By targeting LLVM IR, \splitacronym is agnostic to the source language (e.g., C, C++, or Rust) and can be extended to additional TEE backends, such as ARM TrustZone and RISC-V enclaves. The current implementation targets partitioning to a single enclave: each instruction has a boolean mode variable, enabling forced-set propagation or max-flow min-cut rather than a general integer solver. Partitioning to multiple enclaves is an interesting future work.

\splitacronym forces code accessing secret data into the enclave. However, SGX enclaves cannot directly execute host-only operations such as system calls, network or file I/O, host-library calls, or freeing host-allocated memory; TEEs such as Intel TDX that support conventional OS interactions can simplify this constraint. \splitacronym therefore uses \emph{host pinning}: host-only in-module code, unless security-forced into the enclave, is outlined into a host helper invoked through an \ocall\xspace that marshals its operands. External calls are not automatically converted to \ocall s; instead, functions containing them are pinned to the host, where the corresponding libraries are linked. For data transfer, \splitacronym flattens \texttt{argv} into a contiguous representation and deep-copies flat buffers of known length; nested pointers are avoided across the boundary by co-locating code with their data.

\textbf{Limitations.}
\splitacronym uses LLVM's CodeExtractor, which requires single-entry regions. When a fine-grained partition cannot be outlined or its boundary data marshalled, \splitacronym places the enclosing function in the enclave, yielding potentially coarser partitions. It deep-copies flat, known-size, non-opaque data across the enclave boundary; for nested, unsized, or handle-like data, it co-locates code and data or emits hints for OpenEnclave handling. Unsupported LLVM instruction classes (e.g., exceptions, globals and atomics) are treated permissively. Finally, \splitacronym does not enforce coarse-grained memory safety; we leave this to future work studying the interplay between memory safety and information flow control.

  \section{Evaluation} \label{sec:eval}
  
\paragraph{Benchmarks and Experimental Setup.}
We evaluate \splitacronym on benchmarks spanning cryptographic, data-processing, machine-learning, and systems workloads. For each benchmark, we report the total number of LLVM IR instructions (\#IR) and \emph{Priv\%$_\mathrm{o}$}, the percentage of instructions whose own inferred security label is private. Our suite includes \code{BFS}, a breadth-first graph traversal; \code{BTree}, a B-tree index; \code{ChaCha20}, a stream cipher; \code{LibreSSL} and \code{OpenSSL}, cryptographic and TLS workloads; \code{HOTP}, an OTP-authentication application; \code{LZAV}, a Lempel--Ziv compressor; \code{Memcached}, an in-memory key--value store; \code{MonteCarlo}, a Monte Carlo computation; \code{Quicksort}, an in-memory sorting workload; and \code{SVM Predict}, \code{SVM Scale}, and \code{SVM Train}, which exercise SVM prediction, data scaling, and training, respectively. \code{BFS}, \code{Memcached}, \code{OpenSSL}, and the SVM workloads are drawn from SGXGauge. The benchmarks range from small kernels to large applications: for example, \code{LZAV} contains 1,455 LLVM IR instructions, of which 89.48\% are labeled private, whereas \code{OpenSSL} contains 425,953 instructions but only 0.16\% are labeled private. This diversity allows us to evaluate \splitacronym across programs with substantially different sizes and security footprints. All experiments were conducted on Intel SGX hardware provisioned through IBM Cloud, running Ubuntu 20.04 with 8 vCPUs and 40\,GB of memory (at most 5\,GB is used).

\begin{figure*}[t]
\centering

\begin{subfigure}[t]{0.48\textwidth}
\centering
\begin{tikzpicture}
\begin{axis}[
    width=\linewidth,
    height=5.8cm,
    ylabel={Transitions},
    xmin=0.5, xmax=13.5,
    ymin=0, ymax=430,
    xtick={1,...,13},
    xticklabels={
        bfs,
        btree,
        chacha20,
        libressl,
        hotp,
        lzav,
        memcached,
        montecarlo,
        openssl,
        quicksort,
        svm\_predict,
        svm\_scale,
        svm\_train
    },
    x tick label style={
        rotate=55,
        anchor=east,
        font=\scriptsize
    },
    y tick label style={font=\scriptsize},
    ylabel style={font=\scriptsize},
    grid=major,
    line width=1pt,
    mark size=2.2pt,
]

\addplot[
    blue,
    mark=*,
    solid
]
coordinates {
    (1, 17) (2, 29) (3, 0) (4, 72)
    (5, 1) (6, 0) (7, 0) (8, 1)
    (9, 187) (10, 0) (11, 7) (12, 9) (13, 18)
};

\addplot[
    red,
    mark=square*,
    dashed
]
coordinates {
    (1, 137) (2, 119) (3, 17) (4, 409)
    (5, 53) (6, 15) (7, 260) (8, 7)
    (9, 393) (10, 9) (11, 106) (12, 40) (13, 213)
};

\addplot[
    green!60!black,
    mark=x,
    dotted
]
coordinates {
    (1, 18) (2, 29) (3, 0) (4, 233)
    (5, 1) (6, 2) (7, 0) (8, 1)
    (9, 192) (10, 0) (11, 7) (12, 9) (13, 30)
};

\addplot[
    orange!90!black,
    mark=+,
    dashdotted
]
coordinates {
    (1, 18) (2, 29) (3, 0) (4, 211)
    (5, 1) (6, 2) (7, 0) (8, 1)
    (9, 211) (10, 0) (11, 7) (12, 9) (13, 27)
};

\end{axis}
\end{tikzpicture}
\caption{Enclave--host transitions.}
\label{fig:obj-transitions}
\end{subfigure}
\hfill
\begin{subfigure}[t]{0.48\textwidth}
\centering
\begin{tikzpicture}
\begin{axis}[
    width=\linewidth,
    height=5.8cm,
    ylabel={Enclave instructions (normalized to MinT)},
    xmin=0.5, xmax=13.5,
    ymin=0, ymax=1.08,
    xtick={1,...,13},
    xticklabels={
        bfs,
        btree,
        chacha20,
        libressl,
        hotp,
        lzav,
        memcached,
        montecarlo,
        openssl,
        quicksort,
        svm\_predict,
        svm\_scale,
        svm\_train
    },
    x tick label style={
        rotate=55,
        anchor=east,
        font=\scriptsize
    },
    y tick label style={font=\scriptsize},
    ylabel style={font=\scriptsize},
    grid=major,
    line width=1.2pt,
    mark size=2.5pt,
]

\addplot[
    blue,
    mark=*,
    solid
]
coordinates {
    (1,1.000) (2,1.000) (3,1.000) (4,1.000)
    (5,1.000) (6,1.000) (7,1.000) (8,1.000)
    (9,1.000) (10,1.000) (11,1.000) (12,1.000) (13,1.000)
};

\addplot[
    red,
    mark=square*,
    dashed
]
coordinates {
    (1, 0.812) (2, 0.960) (3, 0.991) (4, 0.012)
    (5, 0.869) (6, 0.958) (7, 0.700) (8, 0.773)
    (9, 0.688) (10, 0.968) (11, 0.992) (12, 0.982) (13, 0.968)
};

\addplot[
    green!60!black,
    mark=x,
    dotted
]
coordinates {
    (1, 0.971) (2, 1.000) (3, 1.000) (4, 0.013)
    (5, 0.936) (6, 0.979) (7, 1.000) (8, 1.000)
    (9, 0.921) (10, 1.000) (11, 1.000) (12, 1.000) (13, 0.987)
};

\addplot[
    orange!90!black,
    mark=+,
    dashdotted
]
coordinates {
    (1, 0.990) (2, 1.006) (3, 1.000) (4, 0.015)
    (5, 0.936) (6, 0.982) (7, 1.000) (8, 1.000)
    (9, 0.962) (10, 1.000) (11, 1.000) (12, 1.010) (13, 0.993)
};

\end{axis}
\end{tikzpicture}
\caption{Enclave instructions normalized to MinT.}
\label{fig:obj-enclave-size}
\end{subfigure}

\medskip

\begin{tikzpicture}
\begin{axis}[
    hide axis,
    xmin=0, xmax=1,
    ymin=0, ymax=1,
    legend columns=4,
    legend style={
        draw=none,
        font=\scriptsize,
        /tikz/every even column/.append style={column sep=0.5cm}
    }
]

\addlegendimage{blue,mark=*,solid}
\addlegendentry{MinT}

\addlegendimage{red,mark=square*,dashed}
\addlegendentry{MinTCB}

\addlegendimage{green!60!black,mark=x,dotted}
\addlegendentry{MinT+TCB}

\addlegendimage{orange!90!black,mark=+,dashdotted}
\addlegendentry{MinData}

\end{axis}
\end{tikzpicture}

\caption{Comparison of partitioning objectives across benchmarks:
(a) number of enclave--host transitions and
(b) number of enclave instructions normalized to MinT for each benchmark.}
\label{fig:objective-comparison}
\end{figure*}

\paragraph{Impact of Optimization Objectives.}
Figure~\ref{fig:objective-comparison} compares the partitions produced by \tmin, \emin, \tmin+\emin, and \objdata in terms of enclave--host transitions and enclave size. Figure~\ref{fig:obj-transitions} shows the expected trade-off between minimizing transitions and minimizing the TCB. \tmin generally produces the fewest boundary crossings, while \emin may introduce additional transitions to move more code out of the enclave. This trade-off is particularly visible for larger applications. \code{LibreSSL} exhibits the most pronounced difference: despite having 275,942 LLVM IR instructions, only 137 (${\sim}0.05\%$) are labeled private, and different placement objectives can therefore make substantially different decisions about the large amount of public code surrounding the small security-sensitive core. \code{Memcached}, \code{OpenSSL}, and \code{SVM\_Train} similarly show a considerable degree of variation between \emin and the transition-oriented objectives. In contrast, several smaller benchmarks and the other SVM workloads show less variation in transitions, indicating that their security and placement constraints leave fewer meaningful optimization choices.

Figure~\ref{fig:obj-enclave-size} reports enclave instructions normalized to the \tmin solution for each benchmark, making the relative effect of each objective visible despite large differences in benchmark size. As expected, \emin generally favors smaller enclaves, whereas \tmin may retain additional public code inside the enclave when doing so avoids boundary crossings. The combined \objboth objective typically lies between these two extremes, while \objdata favors placements that reduce data movement rather than enclave size directly. The largest differences occur in benchmarks where relatively sparse private instructions are embedded within substantially larger programs, giving \splitacronym greater freedom to trade enclave size against transitions. Conversely, when most of the program is private, or when the placement constraints effectively determine the partition, the four objectives converge to similar enclave sizes. Overall, the results demonstrate that the optimal partition depends on the deployment objective: reducing the TCB can increase enclave--host communication, while minimizing transitions may require retaining substantially more code inside the enclave.

\begin{figure}[ht]
\centering
\begin{tikzpicture}
\begin{axis}[
    width=0.48\textwidth,
    height=5.0cm,
    ylabel={Adjusted slowdown ($\times$)},
    ymode=log,
    ymin=0.5, ymax=50,
    xmin=0.5, xmax=13.5,
    xtick={1,...,13},
    xticklabels={
        bfs,btree,chacha20,libressl, hotp,lzav,
        memcached,montecarlo,openssl,quicksort,
        svm\_predict,svm\_scale,svm\_train
    },
    x tick label style={
        rotate=60,
        anchor=east,
        font=\tiny
    },
    y tick label style={font=\tiny},
    ylabel style={font=\scriptsize},
    grid=major,
    line width=0.8pt,
    mark size=1.5pt,
    legend style={
        at={(0.5,1.01)},
        anchor=south,
        legend columns=4,
        draw=none,
        font=\tiny,
        column sep=3pt
    }
]

\addplot[blue,mark=*,solid] coordinates {
(1,2.17) (2,0.87) (3,41.10) (4,11.69) (5,42.21)
(6,41.46) (7,0.99) (8,5.47) (9,0.88) (10,41.11)
(11,1.22) (12,5.38) (13,3.54)};
\addlegendentry{MinT}

\addplot[red,mark=square*,dashed] coordinates {
(1,1.51) (2,0.88) (3,40.69) (4,10.03) (5,42.32)
(6,41.02) (7,0.99) (8,5.64) (9,1.30) (10,40.80)
(11,1.22) (12,5.36) (13,3.61)};
\addlegendentry{MinTCB}

\addplot[green!60!black,mark=x,dotted] coordinates {
(1,2.18) (2,0.89) (3,40.06) (4,11.56) (5,43.15)
(6,41.94) (7,1.01) (8,5.55) (9,1.35) (10,42.92)
(11,1.22) (12,5.38) (13,3.55)};
\addlegendentry{MinT+TCB}

\addplot[orange!90!black,mark=+,dashdotted] coordinates {
(1,2.22) (2,0.98) (3,41.07) (4,11.37) (5,41.92)
(6,40.60) (7,1.00) (8,5.52) (9,1.74) (10,40.25)
(11,1.22) (12,5.35) (13,3.60)};
\addlegendentry{MinData}

\end{axis}
\end{tikzpicture}
\caption{Adjusted slowdown across partitioning objectives.}
\label{fig:slowdown}
\end{figure}

Figure~\ref{fig:slowdown} reports the adjusted runtime slowdown across
partitioning objectives. The selected objective has relatively little
impact on performance for most benchmarks, although OpenSSL is a notable
outlier, ranging from $0.88\times$ under \tmin to $1.74\times$ under \objdata.
BTree, Memcached, and SVM Predict remain close to native
performance, while BFS incurs a moderate ${\sim} 2\times$ overhead.
Short-running workloads such as ChaCha20, HOTP, LZAV, and Quicksort
show approximately $39$--$42\times$ slowdown, suggesting that fixed
enclave execution costs dominate their native runtimes.
LibreSSL uses arbitrary-precision arithmetic for RSA keys through BIGNUM structures containing deeply nested pointers that cannot be readily marshalled across the enclave boundary. The corresponding operations are therefore pinned to the host, inducing frequent enclave--host transitions and a substantial slowdown of ${\sim}12\times$.

Interestingly, the SVM benchmarks demonstrate \objdata in action. $\mathsf{svm\_scale}$ performs substantial computation but invokes logging, which must be pinned to the host because it performs I/O. Separating the computation from logging induces heavy enclave--host data marshalling, resulting in a $120\times$ slowdown. The objective \objdata instead co-locates $\mathsf{svm\_scale}$ with logging on the host, minimizing boundary data transfer and substantially reducing the slowdown.

  \section{Related Work}\label{sec:rel}
  
\noindent \textbf{IFC for Low-level Languages.}
While prior IFC work is extensive~\cite{volpano, sm-jsac, pottier, hunt:flowsensitive, bastys2022secwasm, manivannan2010jvmifc, stefan2015ifc, porter2014finegrainedifc, roy2009laminar, vassena2019fine, denning1976lattice, goguen1982security}, we focus on low-level languages with unstructured control flow, where permissive tracking often relies on post-dominator–like notions.
\citet{webkitjs} use dynamic $\pc$ stacks with post-dominators as declassification points; SIFTAL~\cite{siftal} tracks convergence via linear continuations; \citet{aldous2015} use abstract interpretation to compute influence regions. In contrast, our approach is static and type-based, enabling comparable expressiveness with simpler reasoning.
Prior work on security-type-preserving compilation~\citet{sectypebarthe} enforces IFC via type preservation across compilation from a structured languages to assembly, but does not leverage post-dominator–based reasoning to handle implicit flows in unstructured control flow.
 
\noindent \textbf{Program Partitioning for TEEs.}
Since we have extensively compared our work with \citet{gcoopsla2016}, we focus  on other related works. We also exclude works that partition their programs for distributed settings~\cite{hydra, dflate}.
Civet~\cite{civet}, Glamdring~\cite{glamdring}, and Cadote~\cite{cadote} perform automatic partitioning of Java, C, and Rust programs, but provide no formal security guarantees.
\citet{rubinov} uses taint tracking for Android partitioning but lacks implicit flow support and automatic enclave porting.
HasTEE~\cite{hastee} targets noninterference in Haskell, yet relies on programmer-specified monads and does not support fine-grained IFC or enclave inference. By contrast, our approach provides automatic, low-level partitioning with formal noninterference guarantees and practical compiler support. Moreover, none of these works offer support for  partitioning trade-offs.

Privtrans~\cite{privtrans} and PtrSplit~\cite{ptrsplit} perform automatic
partitioning for privilege separation, with PtrSplit supporting general
pointers. In contrast, we enforce noninterference against low-level
attackers, providing formal guarantees and fine-grained host–enclave
partitioning under unsafe memory.

Viaduct~\cite{viaduct}, similar to Jif/split~\cite{spp,zheng2003replication}, enables secure computation among mutually distrustful principals by selecting trusted hosts and using cryptographic mechanisms when necessary. Unlike our work, these systems do not target TEEs or address low-level challenges of LLVM IR, such as unstructured control flow and pointer arithmetic. While Viaduct also uses constraint optimization, it minimizes the cost of cryptographic mechanisms, whereas we minimize the cost of enclave usage.




  \bibliographystyle{ACM-Reference-Format}
  \bibliography{mybib}


  \fi 

  \fi 
  
  \ifappendix
  \appendix
  \tableofcontents  
  \input{llvm-enclaves-appendix}
  \bibliography{mybib}
  \bibliographystyle{abbrvnat}

  \fi


\end{document}